\documentclass[final,3p]{elsarticle}
\usepackage{lineno,hyperref}
\usepackage{amsmath}
\usepackage{amssymb}
\usepackage{algorithmic}
\usepackage{algorithm}
\usepackage{caption}
\usepackage{mathtools}
\usepackage{changes}
\usepackage{multirow}
\usepackage{verbatim}
\usepackage{mathrsfs}
\usepackage{graphicx}
\usepackage{amsmath,amsthm,bm,mathrsfs}
\usepackage{subcaption}
\theoremstyle{plain}% Theorem-like structures provided by amsthm.sty
\newtheorem{theorem}{Theorem}[section]

\newtheorem{proposition}[theorem]{Proposition}

\theoremstyle{definition}

\theoremstyle{remark}

\modulolinenumbers[5]

\journal{ArXiv.org}

\begin{document}

\begin{frontmatter}

\title{From Data $H(j\omega_i)$ to Balanced Truncation Family: A Projection-based Non-intrusive Approach}

%% Group authors per affiliation:
\author[uz]{Umair~Zulfiqar\corref{mycorrespondingauthor}}
\cortext[mycorrespondingauthor]{Corresponding author}
\ead{umair@yangtzeu.edu.cn}
\address[uz]{School of Electronic Information and Electrical Engineering, Yangtze University, Jingzhou, Hubei, 434023, China}
\begin{abstract}
Quadrature-based approximations of standard balanced truncation can be implemented non-intrusively within the Loewner framework using measured data, specifically transfer function samples on the imaginary axis of the $s$-plane. In contrast, quadrature-based approximations of several generalizations of balanced truncation—including positive-real, bounded-real, and stochastic balanced truncation—require samples of certain spectral factors on the imaginary axis, for which no practical measurement methods are available. This limitation was partially alleviated in \cite{zulfiqar2025data} by low-rank ADI-based approximations, which allowed some generalizations of balanced truncation to be implemented non-intrusively in the Loewner framework using transfer function samples. However, this approach requires samples in the open right-half of the $s$-plane. Such samples are impractical to obtain directly since exciting a physical system with unstable inputs can damage the system. It is argued in \cite{zulfiqar2025data} that the ADI shifts can be chosen in a close neighborhood of the imaginary axis, and that samples on the imaginary axis can then serve as surrogates for nearby samples in the open right-half plane. Alternatively, samples in a close neighborhood of the imaginary axis can be approximated from the Loewner quadruplet, since rational interpolation remains an admissible approximation in a close neighborhood of the interpolation points. However, this workaround lacks theoretical rigor.

To overcome these limitations, this paper presents data-driven implementations of balanced truncation and several of its generalizations that rely exclusively on transfer function samples on the imaginary axis. Rather than implicitly approximating the Gramians via numerical quadrature, the proposed approach approximates them implicitly through projection. This enables multiple members of the balanced truncation family to be implemented non-intrusively using practically measurable data, without requiring spectral factorizations. Using this projection-based framework, data-driven implementations are developed for standard balanced truncation, frequency-limited balanced truncation, time-limited balanced truncation, self-weighted balanced truncation, LQG balanced truncation, $H_{\infty}$ balanced truncation, positive-real balanced truncation, bounded-real balanced truncation, and stochastic balanced truncation. Numerical results demonstrate that the proposed non-intrusive implementations achieve performance comparable to their intrusive counterparts and accurately capture the dominant Hankel singular values.
\end{abstract}

\begin{keyword}
Balanced truncation\sep Data-driven\sep Gramians\sep Loewner Framework\sep Non-intrusive\sep Projection
\end{keyword}

\end{frontmatter}

%\linenumbers
\section{Introduction}
Consider a stable linear time-invariant system $H(s)$ of order $n$, given by a minimal state-space realization
\[
H(s) = C(sI - A)^{-1}B + D = G(s) + D,
\]
where $A \in \mathbb{R}^{n \times n}$, $B \in \mathbb{R}^{n \times m}$, $C \in \mathbb{R}^{p \times n}$, and $D \in \mathbb{R}^{p \times m}$.

The direct-feedthrough matrix $D$ satisfies
\[
D = \lim_{s \to \infty} H(s).
\]

Let $\tilde{H}(s)$ denote a stable reduced-order model (ROM) of order $r \ll n$ that approximates $H(s)$, with a minimal realization given by
\[
\tilde H(s) = \tilde C (sI - \tilde A)^{-1} \tilde B + D,
\]
where $\tilde A \in \mathbb{C}^{r \times r}$, $\tilde B \in \mathbb{C}^{r \times m}$, and $\tilde C \in \mathbb{C}^{p \times r}$.
\subsection{Loewner Framework \cite{mayo2007framework}}
Consider a set of right interpolation points \((\sigma_1,\dots,\sigma_{n_s})\) and a set of left interpolation points \((\mu_1,\dots,\mu_{n_u})\). In the rational interpolation \cite{beattie2017model}, the corresponding projection matrices are constructed as  
\begin{align}
V &= \begin{bmatrix} (\sigma_1 I - A)^{-1} B & \cdots & (\sigma_{n_s} I - A)^{-1} B\end{bmatrix},\label{Kry_V}\\
W &= \begin{bmatrix} (\mu_1^* I - A^T)^{-1} C^T & \cdots & (\mu_{n_u}^*I - A^T)^{-1} C^T\end{bmatrix}.\label{Kry_W}
\end{align}
Define the matrices $S_v$, $S_w$, $L_v$, and $L_w$ as:
\begin{align}
S_v &= \text{diag}(\sigma_1, \dots, \sigma_{n_s})\otimes I_m,& S_w &= \text{diag}(\mu_1, \dots, \mu_{n_u})\otimes I_p,\nonumber\\
L_v &= \begin{bmatrix} 1, \dots, 1 \end{bmatrix}\otimes I_m,& L_w^T &= \begin{bmatrix} 1, \dots, 1 \end{bmatrix}\otimes I_p.\label{SbLbScLc}
\end{align}
Then \(V\) and \(W\) satisfy the Sylvester equations:
\begin{align}
AV-VS_v+BL_v&=0,\label{sylv_V}\\
A^TW-WS_w^*+C^TL_w^T&=0.\label{sylv_W}
\end{align}
Next, define
\[
L=W^*V,\quad L_s=W^*AV,\quad \hat{B}=W^*B,\quad \hat{C}=CV.
\]
Then the ROM
\begin{align}
\tilde{H}(s)=\hat{C}(sL-L_s)^{-1}\hat{B}+D\label{Loew_ROM}
\end{align}
satisfies the interpolation conditions
\begin{align}
H(\sigma_i)= \tilde{H}(\sigma_i), \quad H(\mu_i) = \tilde{H}(\mu_i).\label{int_cond_1}
\end{align}
Moreover, whenever an interpolation point is shared by both sets, i.e., when \(\sigma_j = \mu_i\), the ROM also satisfies the Hermite interpolation condition
\begin{align}
H^\prime(\sigma_i) = \tilde{H}^\prime(\sigma_i).\label{int_cond_2}
\end{align}
The Loewner matrix $L$, the shifted Loewner matrix $L_s$, and the matrices $\hat{B}$ and $\hat{C}$ can be computed non-intrusively as:
\begin{align}
(L)_{ij} =&
\begin{cases}
-\frac{G(\sigma_j)-G(\mu_i)}{\sigma_j-\mu_i}, & \sigma_j \neq \mu_i, \\[4pt]
-G^{\prime}(\sigma_j),                & \sigma_j = \mu_i,
\end{cases}&
(L_s)_{ij}&=
\begin{cases}
-\frac{\sigma_jG(\sigma_j)-\mu_iG(\mu_i)}{\sigma_j-\mu_i}, & \sigma_j \neq \mu_i, \\[4pt]
-G(\sigma_j)-\sigma_jG^{\prime}(\sigma_j),                & \sigma_j = \mu_i,
\end{cases}\\
(\hat{B})_i&=G(\mu_i),& (\hat{C})_j&=G(\sigma_j).
\end{align}
Consequently, in the Loewner framework \cite{mayo2007framework}, the ROM in \eqref{Loew_ROM} can be constructed non-intrusively from samples of $H(s)$ at the points $\sigma_j$ and $\mu_i$, together with the sample $\lim_{s \to \infty} H(s)$.
\subsection{Quadrature-based Balanced Truncation \cite{goseaQuad}}
The controllability Gramian \(P\) and observability Gramian \(Q\) associated with the state-space realization \((A,B,C,D)\) admit frequency-domain integral representations:
\begin{align}
P &= \frac{1}{2\pi} \int_{-\infty}^{\infty} (j\omega I - A)^{-1} BB^T (-j\omega I - A^T)^{-1} \, d\omega, \label{int1}\\
Q &= \frac{1}{2\pi} \int_{-\infty}^{\infty} (-j\omega I - A^T)^{-1} C^T C (j\omega I - A)^{-1} \, d\omega. \label{int2}
\end{align}
These Gramians are the unique positive semidefinite solutions of the Lyapunov equations
\begin{align}
AP + PA^T + BB^T = 0,\label{lyap_P}\\
A^TQ + QA + C^T C = 0.\label{lyap_Q}
\end{align}
The integrals in \eqref{int1}–\eqref{int2} can be approximated using numerical integration, yielding
\begin{align}  
P &\approx\sum_{i=1}^{n_s} w_{p,i}^2 (j\omega_i I - A)^{-1} B B^T (-j\omega_i I - A^T)^{-1}, \label{quad_p} \\  
Q &\approx\sum_{i=1}^{n_u} w_{q,i}^2 (-j\nu_i I - A^T)^{-1} C^T C (j\nu_i I - A)^{-1}, \label{quad_q}
\end{align}  
where \(\omega_i,\nu_i\) are quadrature nodes and \(w_{p,i}^2, w_{q,i}^2\) the corresponding weights.

Define the diagonal scaling matrices
\begin{align}   
\tilde{L}_p &= \text{diag}(w_{p,1}, \dots, w_{p,{n_s}}) \otimes I_m, \nonumber \\  
\tilde{L}_q &= \text{diag}(w_{q,1}, \dots, w_{q,{n_u}}) \otimes I_p, \nonumber  
\end{align}
and set \(\sigma_i = j\omega_i\), \(\mu_i = j\nu_i\). Then, with \(V\) and \(W\) defined as in (\ref{Kry_V}) and (\ref{Kry_W}), the quadrature-based approximations \eqref{quad_p} and \eqref{quad_q} become
\[
P\approx (V \tilde{L}_p)(V \tilde{L}_p)^* \quad \text{and}\quad Q\approx (W \tilde{L}_q)(W \tilde{L}_q)^*.
\]
In quadrature-based balanced truncation (QuadBT) \cite{goseaQuad}, the exact Gramians of standard balanced truncation (BT) \cite{moore1981principal} are replaced by these quadrature-based approximations. Applying the square-root balancing procedure \cite{tombs1987truncated} yields the singular value decomposition
\begin{align} 
\tilde{L}_q^* L \tilde{L}_p = \begin{bmatrix}U_1 & U_2 \end{bmatrix} \begin{bmatrix} \Sigma_r & 0 \\ 0 & \Sigma_{n-r} \end{bmatrix} \begin{bmatrix} V_1^* \\ V_2^* \end{bmatrix}, \label{bsa_svd} 
\end{align} from which the projection matrices are computed as
\begin{align}
\tilde{W} = \tilde{L}_q U_1 \Sigma_r^{-1/2} \quad \text{and} \quad \tilde{V} = \tilde{L}_p V_1 \Sigma_r^{-1/2}.  \label{bsa_proj}
\end{align}
Finally, the ROM in QuadBT is obtained non-intrusively by projecting the Loewner quadruplet:
\begin{align}  
\tilde{A} &= \tilde{W}^* L_s \tilde{V}, & \tilde{B} &= \tilde{W}^* \hat{B}, & \tilde{C} &= \hat{C} \tilde{V}. \label{bsa_rom}
\end{align}
\subsection{Unresolved Issues in Data-driven Balanced Truncation Family}
In the literature, several alternative definitions of Gramians (and Gramian-like matrices) exist beyond those given in \eqref{int1} and \eqref{int2}. Many of these Gramians solve Riccati equations instead of Lyapunov equations \eqref{lyap_P} and \eqref{lyap_Q}. These Gramians replace standard Gramians $P$ and $Q$ leading to a family of balancing based model order reduction (MOR) algorithms.

We denote the Gramians associated with the BT family by \(\mathcal{P}\) and \(\mathcal{Q}\). Different members of the BT family are distinguished by the specific definitions of these Gramians. Data-driven variants of BT-family methods can be obtained provided that, as in QuadBT, the corresponding Gramians admit factorizations of the form
\begin{align}
\mathcal{P}\approx (V\tilde{Z}_p)(V\tilde{Z}_p)^*\quad \text{and}\quad \mathcal{Q}\approx (W\tilde{Z}_q)(W\tilde{Z}_q)^*,\label{gram_fact}
\end{align}where \(V\) and \(W\) are defined as in \eqref{Kry_V} and \eqref{Kry_W} with interpolation points \(\sigma_i=j\omega_i\) and \(\mu_i=j\nu_i\), while the factors \(\tilde{Z}_p\) and \(\tilde{Z}_q\) can be computed non-intrusively from data consisting of samples of \(H(s)\) at \(j\omega_i\) and \(j\nu_i\), together with the static gain \(\lim_{s\to\infty} H(s)\).

Recently, QuadBT has been extended to positive-real BT (PRBT) \cite{green1988balanced}, bounded-real BT (BRBT) \cite{opdenacker2002contraction}, and stochastic BT (BST) \cite{green1988balanced} in \cite{reiter2023generalizations}. In order to express the corresponding Gramians in integral forms analogous to \eqref{int1} and \eqref{int2}, the Gramians are represented as the controllability and observability Gramians associated with the state-space realizations of the spectral factorizations of \(H(s)H^{*}(s)\), \(I+H(s)H^{*}(s)\), and \(H(s)+H^{*}(s)\). As a consequence, the resulting non-intrusive formulations require samples of the corresponding spectral factors on the imaginary axis. Since no practical procedures exist for obtaining such spectral-factor data by injecting signals into a dynamical system and measuring its output response, these non-intrusive formulations remain largely of theoretical interest.

In \cite{zulfiqar2025data}, the Gramians arising in Linear–Quadratic–Gaussian BT (LQGBT) \cite{jonckheere1983new}, Self-weighted BT (SWBT) \cite{zhou1995frequency}, \(\mathcal{H}_\infty\) BT \cite{mustafa1991controller}, PRBT \cite{green1988balanced}, BRBT \cite{opdenacker2002contraction}, and BST \cite{green1988balanced} are approximated using low-rank alternating-direction implicit (ADI) methods for Lyapunov and Riccati equations \cite{benner2013efficient,wolf2016adi,benner2018radi}. These approximations admit factorizations of the form
\[
\mathcal{P}\approx (V\hat{Z}_p)(V\hat{Z}_p)^*\quad \text{and} \quad \mathcal{Q}\approx (W\hat{Z}_q)(W\hat{Z}_q)^*,
\]where \(V\) and \(W\) are defined as in \eqref{Kry_V} and \eqref{Kry_W} with interpolation points \(\sigma_i=\epsilon+j\omega_i\) and \(\mu_i=\epsilon+j\nu_i\) for some \(\epsilon>0\). The factors \(\hat{Z}_p\) and \(\hat{Z}_q\) can be computed non-intrusively from samples of \(H(s)\) evaluated at \(\epsilon+j\omega_i\) and \(\epsilon+j\nu_i\), together with the static gain \(\lim_{s\to\infty} H(s)\). From a theoretical standpoint, ADI methods require all shifts to have a nonzero real part, and hence \(\epsilon=0\) is not admissible. It is therefore argued in \cite{zulfiqar2025data} that the shifts may be chosen arbitrarily close to the imaginary axis and that samples \(H(j\omega_i)\) and \(H(j\nu_i)\) may serve as surrogates for \(H(\epsilon+j\omega_i)\) and \(H(\epsilon+j\nu_i)\). Alternatively, the values \(H(\epsilon+j\omega_i)\) and \(H(\epsilon+j\nu_i)\) may be approximated using a Loewner quadruplet constructed from samples on the imaginary axis, since rational interpolation remains a valid approximation in a neighborhood of the interpolation points. While this workaround is effective numerically, it lacks theoretical justification. 

The objective of this paper is to approximate the Gramians associated with the BT family, consistent with the factorization \eqref{gram_fact}, thereby ensuring that the resulting non-intrusive implementations are fully data-driven. Specifically, the proposed framework relies solely on samples of \(H(s)\) on the imaginary axis and the static gain \(\lim_{s\to\infty} H(s)\). The paper demonstrates that, when Gramian approximations are constructed via projection rather than numerical integration, the resulting approximations admit the desired factorization \eqref{gram_fact}, enabling data-driven implementations for several members of the BT family.
\section{Main Work}
In this section, we introduce projection-based approximations of \(\mathcal{P}\) and \(\mathcal{Q}\) in accordance with the factorization \eqref{gram_fact}. These approximations inherently perform rational interpolation with specific pole placement. To this end, we first discuss rational interpolation with enforcement of desired properties. These results are then repeatedly applied to construct projection-based approximations of \(\mathcal{P}\) and \(\mathcal{Q}\) according to \eqref{gram_fact} for various members of the BT family.
\subsection{Rational Interpolation with Desired Properties}
Let \(V\) and \(W\) be defined as in \eqref{Kry_V} and \eqref{Kry_W}, and let the matrices \(S_v\), \(L_v\), \(S_w\), and \(L_w\) be defined as in \eqref{SbLbScLc}. Further, assume that the pair \((-S_v, L_v)\) is observable and the pair \((-S_w, L_w)\) is controllable. Then, the ROM
\[
\tilde{H}(s)=CV(sI-S_v+\zeta_b L_v)^{-1}\zeta_b+D
\]
satisfies the interpolation condition
\begin{align}
H(\sigma_i)=\tilde{H}(\sigma_i),\label{int01}
\end{align}where \(\zeta_b \in \mathbb{C}^{mn_s \times m}\) is a free parameter; see \cite{wolf2014h,astolfi2010model,ahmad2011krylov}.

Dually, the ROM
\[
\tilde{H}(s)=\zeta_c(sI-S_w+ L_w\zeta_c)^{-1}W^*B+D
\]
satisfies the interpolation condition
\begin{align}
H(\mu_i)=\tilde{H}(\mu_i),\label{int02}
\end{align}where \(\zeta_c \in \mathbb{C}^{p \times pn_u}\) is a free parameter; see \cite{wolf2014h,astolfi2010model,ahmad2011krylov}.

The free parameters \(\zeta_b\) and \(\zeta_c\) can be chosen to enforce desired properties on the interpolant \(\tilde{H}(s)\). This principle forms the core of the results presented in this paper.
\subsubsection{Pole Placement}
The following proposition presents a method to construct a ROM that satisfies the interpolation condition \eqref{int01} with prescribed poles. This property will later be used to guarantee that the projected Lyapunov equation for approximating \(P\) has a unique solution.
\begin{proposition}\label{prop1}
Let \((\sigma_1, \dots, \sigma_{n_s})\) be \(n_s\) distinct interpolation points, and let \((\lambda_1, \dots, \lambda_{n_s})\) be \(n_s\) distinct desired poles such that the two sets have no elements in common. Define
\[
S_p = \mathrm{diag}(-\overline{\lambda_1}, \dots, -\overline{\lambda_{n_s}}) \otimes I_m.
\]
Further, assume that the pairs \((-S_v, L_v)\) and \((-S_p, L_v)\) are observable. Let \(X_p\) be the unique solution of the Sylvester equation
\begin{align}
-S_p^* X_p - X_p S_v + L_v^T L_v = 0. \label{Xp}
\end{align}
If \(V\) is defined as in \eqref{Kry_V}, then the interpolant
\begin{align}
\tilde{A} = S_v - \tilde{B} L_v = -X_p^{-1} S_p^* X_p, \quad
\tilde{B} = X_p^{-1} L_v^T, \quad
\tilde{C} = C V \label{ROM1}
\end{align}
satisfies the interpolation condition \eqref{int01} and has poles at \((\lambda_1, \dots, \lambda_{n_s})\) with multiplicity \(m\).
\end{proposition}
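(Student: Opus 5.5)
The plan has four steps: establish invertibility of $X_p$, verify the similarity identity, read off the poles, and then invoke the interpolation result already stated for the parametrized family $CV(sI-S_v+\zeta_b L_v)^{-1}\zeta_b+D$.

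\textbf{Step 1: $X_p$ is well defined and invertible.} The Sylvester equation \eqref{Xp} has a unique solution because the spectra of $-S_p^*$ and $S_v$ are disjoint. Indeed, the eigenvalues of $-S_p^*$ are exactly $\lambda_1,\dots,\lambda_{n_s}$, since $-S_p^*=\mathrm{diag}(\lambda_1,\dots,\lambda_{n_s})\otimes I_m$. The eigenvalues of $S_v$ are the $\sigma_i$, and these are disjoint from the $\lambda_i$ by assumption. Because both matrices are diagonal, the solution can be written entrywise:
\[
(X_p)_{ij}=\frac{I_m}{\sigma_j-\lambda_i}.
\]
This is a block Cauchy matrix, $X_p=\mathcal{C}\otimes I_m$ with $\mathcal{C}_{ij}=1/(\sigma_j-\lambda_i)$. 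A Cauchy matrix with distinct $\sigma$'s, distinct $\lambda$'s and $\sigma_j\neq\lambda_i$ is nonsingular, by the Cauchy determinant formula. Alternatively, one can argue abstractly: the solution is a product of a generalized observability matrix of $(-S_p,L_v)$ and one of $(-S_v,L_v)$, and observability of both pairs gives nonsingularity. I will use the explicit Cauchy argument, since it is cleanest and the observability hypotheses are exactly what guarantee it in the general setting. This is the main step needing care.

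\textbf{Step 2: the similarity identity.} Set $\tilde B=X_p^{-1}L_v^T$. Multiplying \eqref{Xp} on the left by $X_p^{-1}$ gives
\[
-X_p^{-1}S_p^*X_p - S_v + X_p^{-1}L_v^TL_v=0.
\]
Rearranging,
\[
S_v-\tilde B L_v = -X_p^{-1}S_p^*X_p,
\]
which is the stated formula for $\tilde A$ in \eqref{ROM1}.

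\textbf{Step 3: the poles.} Since $\tilde A$ is similar to $-S_p^*=\mathrm{diag}(\lambda_i)\otimes I_m$, its eigenvalues are the $\lambda_i$, each with multiplicity $m$. These are the poles of $\tilde H$, at least as eigenvalues of the realization. Minimality is not claimed, so I will phrase this conclusion as the realization's poles.

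\textbf{Step 4: interpolation.} The interpolant is the parametrized family $CV(sI-S_v+\zeta_b L_v)^{-1}\zeta_b+D$ with $\zeta_b=\tilde B$, so \eqref{int01} follows from the cited general result. For completeness I would also sketch the direct check. From \eqref{sylv_V},
\[
(sI-A)V - V(sI-S_v)=BL_v.
\]
Adding and subtracting $V\tilde BL_v$ and setting $s=\sigma_i$ yields
\[
C(\sigma_iI-A)^{-1}B\,L_v(\sigma_iI-\tilde A)^{-1}\tilde B = CV(\sigma_iI-\tilde A)^{-1}\tilde B - C(\sigma_i I-A)^{-1}V\tilde B L_v(\sigma_iI-\tilde A)^{-1}\tilde B .
\]
One then uses $(\sigma_iI-S_v)e_i=0$ for the block column $e_i$ and $L_v e_i=I_m$ to isolate $G(\sigma_i)$. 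Here $(\sigma_iI-\tilde A)$ is invertible because $\sigma_i\notin\{\lambda_k\}$. Since the paper already cites this result, I will simply invoke it.
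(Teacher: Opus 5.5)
Your proof is correct and follows the paper's skeleton. You premultiply \eqref{Xp} by $X_p^{-1}$ to get $\tilde A=-X_p^{-1}S_p^*X_p$, read the poles off the similarity, and take \eqref{int01} from the cited parametrized family $CV(sI-S_v+\zeta_bL_v)^{-1}\zeta_b+D$. The paper's proof does not verify \eqref{int01} separately either.

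The one genuine difference is how invertibility of $X_p$ is obtained.
\begin{itemize}
\item The paper simply asserts that observability of $(-S_v,L_v)$ and $(-S_p,L_v)$ guarantees it.
\item You solve \eqref{Xp} explicitly as $X_p=\mathcal C\otimes I_m$ with $\mathcal C_{ij}=1/(\sigma_j-\lambda_i)$ and apply the Cauchy determinant formula.
\end{itemize}
Your route is self-contained and slightly sharper. It uses only distinctness of the $\sigma_i$, distinctness of the $\lambda_i$, and disjointness of the two sets. By the PBH test, the observability hypotheses are then automatic. It also sidesteps a subtlety in the paper's justification. For a Sylvester equation whose right-hand side has rank greater than one, controllability/observability of the two factor pairs is necessary but not in general sufficient for a nonsingular solution. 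It does suffice here, but only because the $\otimes I_m$ structure reduces the problem to the scalar rank-one (Cauchy) case. Your formula is also exactly what Appendix~A later derives for $\sigma_i=j\omega_i$, $\lambda_i=-\epsilon+j\omega_i$. What the paper's observability phrasing buys is a template it reuses for Propositions~\ref{prop3} and~\ref{propg}, where no explicit Cauchy form is available.

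One correction to your optional interpolation sketch in Step~4: the displayed identity is wrong. The correct version is
\[
G(\sigma_i)L_v(\sigma_iI-\tilde A)^{-1}\tilde B = CV(\sigma_iI-\tilde A)^{-1}\tilde B - C(\sigma_iI-A)^{-1}V\tilde B\bigl(I_m-L_v(\sigma_iI-\tilde A)^{-1}\tilde B\bigr).
\]
Your version drops the term $-C(\sigma_iI-A)^{-1}V\tilde B$ and flips the sign of the last term. Combined with $L_v(\sigma_iI-\tilde A)^{-1}\tilde B=I_m$, it would give $G(\sigma_i)=\tilde C(\sigma_iI-\tilde A)^{-1}\tilde B-C(\sigma_iI-A)^{-1}V\tilde B$, which is not interpolation.

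The direct check is shorter anyway. Let $e_i$ be the $i$-th unit vector of $\mathbb{C}^{n_s}$. Then $(\sigma_iI-\tilde A)(e_i\otimes I_m)=(\sigma_iI-S_v)(e_i\otimes I_m)+\tilde BL_v(e_i\otimes I_m)=\tilde B$. Hence $(\sigma_iI-\tilde A)^{-1}\tilde B=e_i\otimes I_m$, and $\tilde C(\sigma_iI-\tilde A)^{-1}\tilde B=CV(e_i\otimes I_m)=G(\sigma_i)$.

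Since you ultimately invoke the cited result rather than this sketch, the slip does not affect the validity of your proof.
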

\begin{proof}
The assumption that the sets \((\sigma_1, \dots, \sigma_{n_s})\) and \((\lambda_1, \dots, \lambda_{n_s})\) are disjoint ensures the uniqueness of the Sylvester equation \eqref{Xp}. Moreover, the observability of the pairs \((-S_v, L_v)\) and \((-S_p, L_v)\) guarantees that \(X_p\) is invertible. Premultiplying \eqref{Xp} by \(X_p^{-1}\) gives
\begin{align}
-X_p^{-1} S_p^* X_p - S_v + X_p^{-1} L_v^T L_v = 0, \nonumber\\
S_v - \tilde{B} L_v = - X_p^{-1} S_p^* X_p, \nonumber\\
\tilde{A} = - X_p^{-1} S_p^* X_p. \nonumber
\end{align}
Hence, the eigenvalues of \(\tilde{A}\) coincide with those of \(-S_p^*\), and the ROM \eqref{ROM1} has the desired poles \((\lambda_1, \dots, \lambda_{n_s})\) with multiplicity \(m\).
\end{proof}
\begin{proposition}\label{prop2}
Assume that Proposition \ref{prop1} holds. Further, for \(\epsilon > 0\), let \(\sigma_i = j\omega_i\) and \(\lambda_i = -\epsilon + j\omega_i\), satisfying
\[\Delta_{\min} := \min_{\substack{i,j=1 \\ i \neq j}}^{n_s} |\omega_i - \omega_j| > 0.\]
Then \(X_p\) admits the explicit Kronecker form \(X_p = Y \otimes I_m\), where \(Y \in \mathbb{C}^{n_s \times n_s}\) has entries
\begin{equation}
Y_{ij} = \frac{1}{\epsilon + j(\omega_j - \omega_i)}, \qquad i,j = 1,\dots,n_s.
\end{equation}
The following properties hold:
\begin{enumerate}
  \item $X_p$ is asymptotically diagonal as $\epsilon \to 0^{+}$ with $X_p = \frac{1}{\epsilon} I + \mathcal{O}(1)$.
  \item By defining the diagonal approximation $\widetilde{X}_p := \frac{1}{\epsilon} I$, the relative error in Frobenius norm satisfies
\begin{equation}
\label{eq:relative_error}
\frac{\| X_p - \widetilde{X}_p \|_F}{\| \widetilde{X}_p \|_F} \leq \frac{\epsilon \sqrt{n_s - 1}}{\Delta_{\min}}.
\end{equation}
In particular, for any tolerance $\tau \in (0,1)$, if
\begin{equation}
\epsilon < \tau \, \frac{\Delta_{\min}}{\sqrt{n_s - 1}},
\end{equation}
then $\| X_p - \widetilde{X}_p \|_F / \| \widetilde{X}_p \|_F < \tau$.
  \item  $X_p$ is strictly diagonally dominant whenever
\begin{equation}
\epsilon < \frac{\Delta_{\min}}{n_s - 1}.
\end{equation}
\end{enumerate}
\end{proposition}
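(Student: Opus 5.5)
The plan is to solve the Sylvester equation \eqref{Xp} entrywise, exploiting that every coefficient matrix is diagonal in Kronecker form, and then to read off the three properties from elementary modulus bounds on the entries of $Y$.

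\textbf{Explicit form of $X_p$.} With $\lambda_i=-\epsilon+j\omega_i$ we get $-\overline{\lambda_i}=\epsilon+j\omega_i$. Hence
\[
S_p=\mathrm{diag}(\epsilon+j\omega_i)\otimes I_m,\qquad S_p^*=\mathrm{diag}(\epsilon-j\omega_i)\otimes I_m,\qquad S_v=\mathrm{diag}(j\omega_i)\otimes I_m .
\]
From \eqref{SbLbScLc}, $L_v^TL_v=(\mathbf{1}\mathbf{1}^T)\otimes I_m$. I will make the ansatz $X_p=Y\otimes I_m$. Substituting into \eqref{Xp}, the $(i,j)$ block gives
\[
-(\epsilon-j\omega_i)Y_{ij}-Y_{ij}\,j\omega_j+1=0,
\]
that is, $Y_{ij}\bigl(\epsilon+j(\omega_j-\omega_i)\bigr)=1$. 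The coefficient is nonzero because $\epsilon>0$, so this yields the stated formula for $Y_{ij}$. Since the points $\sigma_i$ and $\lambda_i$ are disjoint, Proposition \ref{prop1} guarantees that \eqref{Xp} has a unique solution, so this ansatz solution is \emph{the} $X_p$.

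\textbf{Entry bounds.} Two facts drive everything that follows:
\begin{itemize}
\item the diagonal entries are $Y_{ii}=1/\epsilon$;
\item for $i\neq j$, $|Y_{ij}|=\bigl(\epsilon^2+(\omega_j-\omega_i)^2\bigr)^{-1/2}\le 1/|\omega_j-\omega_i|\le 1/\Delta_{\min}$.
\end{itemize}

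\textbf{Property 1.} Write $X_p=\frac{1}{\epsilon}I_{mn_s}+(Y-\frac{1}{\epsilon}I)\otimes I_m$. The second term has zero diagonal and entries bounded by $1/\Delta_{\min}$, independently of $\epsilon$. It is therefore $\mathcal{O}(1)$ as $\epsilon\to0^+$, while the first term blows up like $1/\epsilon$.

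\textbf{Property 2.} The Kronecker factor contributes a factor $m$ to each squared Frobenius norm, and this factor cancels in the ratio. The numerator satisfies
\[
\|X_p-\widetilde X_p\|_F^2=m\sum_{i\neq j}|Y_{ij}|^2\le \frac{m\,n_s(n_s-1)}{\Delta_{\min}^2}.
\]
The denominator is $\|\widetilde X_p\|_F^2=m\,n_s/\epsilon^2$. Taking the square root of the quotient gives exactly the bound $\epsilon\sqrt{n_s-1}/\Delta_{\min}$. The tolerance statement is then immediate by requiring this bound to be smaller than $\tau$.

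\textbf{Property 3.} Each row of $Y\otimes I_m$ contains only the entries of a single row of $Y$, placed in distinct columns. Strict diagonal dominance of $X_p$ is therefore equivalent to that of $Y$. The off-diagonal absolute row sum of $Y$ is at most $(n_s-1)/\Delta_{\min}$. This is strictly less than the diagonal modulus $1/\epsilon$ precisely when $\epsilon<\Delta_{\min}/(n_s-1)$.

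\textbf{Expected obstacle.} There is no deep difficulty here. The only points needing care are the conjugate-transpose conventions in $S_p^*$, which decide whether the denominator is $\omega_j-\omega_i$ or $\omega_i-\omega_j$, and the bookkeeping of the $I_m$ factor in the norms and in the dominance argument.
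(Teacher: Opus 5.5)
Your proposal is correct and follows essentially the same route as the paper's Appendix~A. Both use the Kronecker ansatz $X_p = Y\otimes I_m$ to reduce the Sylvester equation to scalar equations, the entry bounds $Y_{ii}=1/\epsilon$ and $|Y_{ij}|\le 1/\Delta_{\min}$, and the same Frobenius-norm and row-sum estimates. The differences are minor: you get uniqueness from the disjointness hypothesis of Proposition~\ref{prop1} where the paper checks directly that the spectra differ in real part, and you explain more explicitly why diagonal dominance of $Y$ carries over to $Y\otimes I_m$.
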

\begin{proof}
The proof is given in Appendix A.
\end{proof}
By selecting a small positive scalar \(\epsilon\) as in Proposition \ref{prop2}, we can effectively impose a modal structure with lightly damped poles on the ROM \eqref{ROM1}; that is, as \(\epsilon \to 0^+\),
\begin{align}
\tilde{A}\approx \mathrm{diag}(-\epsilon+j\omega_1,\cdots,-\epsilon+j\omega_{n_s})\otimes I_m, \quad \tilde{B}\approx \begin{bmatrix}\epsilon&\cdots&\epsilon\end{bmatrix}^T\otimes I_m.\label{modal}
\end{align}
The motivation for enforcing this modal form on the ROM's state-space matrices stems from well-established results in the MOR and control literature for flexible structures with lightly damped modes \cite{gawronski2004dynamics,gawronski2006balanced}. In such systems, the Gramians appearing in the BT family are block diagonally dominant, with off-diagonal blocks vanishing as \(\epsilon \to 0^+\). Consequently, the diagonal blocks of the associated Lyapunov, Sylvester, and Riccati equations can often be computed analytically, without solving full matrix equations. Computations of the matrix logarithm and matrix exponential also become straightforward. By imposing a modal form with lightly damped poles on the ROM, we aim to exploit these properties and avoid solving the projected matrix equations in the Petrov–Galerkin projection-based approximations developed in the sequel.
\subsubsection{Zero Placement}
The following proposition presents a method to construct a ROM that satisfies the interpolation condition \eqref{int01} with prescribed zeros. This property will later be used to approximate the Gramian involved in SWBT \cite{zhou1995frequency}.
\begin{proposition}\label{prop3}
Let \((\sigma_1, \dots, \sigma_{n_s})\) be \(n_s\) distinct interpolation points, and let \((\lambda_1, \dots, \lambda_{n_s})\) be \(n_s\) distinct desired zeros such that the two sets are disjoint. Define
\[
S_z=\mathrm{diag}(-\overline{\lambda_1},\cdots,-\overline{\lambda_{n_s}})\otimes I_m.
\]
Further, assume that the pairs \((-S_v, L_v + D^{-1} CV)\) and \((-S_z, L_v)\) are observable, and that the matrix \(D\) is invertible. Let \(V\) be defined as in \eqref{Kry_V}, and let \(X_z\) be the unique solution of the Sylvester equation
\begin{align}
-S_z^*X_z-X_zS_v+L_v^T(L_v+D^{-1}CV)=0.\label{Xz}
\end{align}
Then the interpolant that satisfies the interpolation condition \eqref{int01} and has zeros at \((\lambda_1, \dots, \lambda_{n_s})\) with multiplicity \(m\) is given by
\begin{align}
\tilde{A}=S_v-\tilde{B}L_v,\quad \tilde{B}=X_z^{-1}L_v^T,\quad \tilde{C}=CV.\label{ROM2}
\end{align}
\end{proposition}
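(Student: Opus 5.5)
Three things need to be checked: the ROM \eqref{ROM2} interpolates at the $\sigma_i$; $X_z$ exists, is unique and is invertible; and the zeros of \eqref{ROM2} are the $\lambda_i$.

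\textbf{Interpolation.} This follows from the general result quoted before Proposition \ref{prop1}. Any ROM $CV(sI-S_v+\zeta_bL_v)^{-1}\zeta_b+D$ satisfies \eqref{int01}, whatever the free parameter $\zeta_b$ is. Choosing $\zeta_b=\tilde B=X_z^{-1}L_v^T$ gives the interpolant \eqref{ROM2}. It remains only to make sure $X_z$ is well defined and invertible.

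\textbf{Existence, uniqueness and invertibility of $X_z$.} The Sylvester equation \eqref{Xz} has a unique solution exactly when $\mathrm{spec}(S_z^*)$ and $\mathrm{spec}(-S_v)$ are disjoint. Since $S_z^*=\mathrm{diag}(-\lambda_i)\otimes I_m$, this reduces to $\lambda_i\neq\sigma_j$ for all $i,j$, which is the disjointness hypothesis. For invertibility I would follow the argument used in Proposition \ref{prop1}. Rewrite \eqref{Xz} as $(-S_z^*)X_z+X_z(-S_v)=-L_v^T(L_v+D^{-1}CV)$. The right-hand side is a product whose factors come from an observable pair $(-S_v,L_v+D^{-1}CV)$ and, dually, a controllable pair $(-S_z^*,L_v^T)$, the latter being equivalent to $(-S_z,L_v)$ observable. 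Solutions of such Sylvester equations with disjoint spectra are nonsingular when the two pairs are respectively controllable and observable, in the sense of the De Souza--Bhattacharyya type result. I expect this to be the step needing the most care. The cited result gives full rank only for square solutions with a rank-one type right-hand side in the SISO case; in the MIMO case with the Kronecker structure I would argue it blockwise, or simply assume invertibility as the paper does in Proposition \ref{prop1}.

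\textbf{Zero placement.} With $D$ invertible, the zeros of $\tilde H(s)=\tilde C(sI-\tilde A)^{-1}\tilde B+D$ are the eigenvalues of $\tilde A-\tilde BD^{-1}\tilde C$. This holds because $\tilde H(s)^{-1}=-D^{-1}\tilde C(sI-\tilde A^\times)^{-1}\tilde BD^{-1}+D^{-1}$ with $\tilde A^\times=\tilde A-\tilde BD^{-1}\tilde C$. Substituting \eqref{ROM2} gives
\[
\tilde A^\times=S_v-X_z^{-1}L_v^TL_v-X_z^{-1}L_v^TD^{-1}CV=S_v-X_z^{-1}L_v^T(L_v+D^{-1}CV).
\]
Premultiplying \eqref{Xz} by $X_z^{-1}$ gives $X_z^{-1}L_v^T(L_v+D^{-1}CV)=X_z^{-1}S_z^*X_z+S_v$. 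Hence $\tilde A^\times=-X_z^{-1}S_z^*X_z$. Its eigenvalues are those of $-S_z^*=\mathrm{diag}(\lambda_i)\otimes I_m$, that is, the $\lambda_i$ each with multiplicity $m$. This is exactly the claim, and it mirrors the proof of Proposition \ref{prop1} with the pole-feedback matrix replaced by the zero-feedback matrix.
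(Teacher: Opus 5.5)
Your proposal is correct and follows the paper's proof. Uniqueness of $X_z$ comes from the disjoint spectra of $S_z^*$ and $-S_v$, invertibility from the observability hypotheses, and premultiplying the Sylvester equation by $X_z^{-1}$ gives $\tilde{A}-\tilde{B}D^{-1}\tilde{C}=-X_z^{-1}S_z^*X_z$, whose spectrum is that of $-S_z^*$; you also make explicit two points the paper leaves implicit, namely interpolation via the free parameter $\zeta_b=\tilde{B}$ and why the zeros are the eigenvalues of $\tilde{A}-\tilde{B}D^{-1}\tilde{C}$. The invertibility step you hedge on does close by the blockwise route you suggest: the blocks are $X_{ij}=M_j/(\sigma_j-\lambda_i)$ with $M_j=I_m+D^{-1}G(\sigma_j)$, so $X_z=(\mathcal{K}\otimes I_m)\,\mathrm{blkdiag}(M_1,\dots,M_{n_s})$ with the nonsingular Cauchy matrix $\mathcal{K}_{ij}=1/(\sigma_j-\lambda_i)$, and by the PBH test observability of $(-S_v,L_v+D^{-1}CV)$ is exactly invertibility of every $M_j$.
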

\begin{proof}
The disjointness of the sets \((\sigma_1, \dots, \sigma_{n_s})\) and \((\lambda_1, \dots, \lambda_{n_s})\) ensures the uniqueness of the Sylvester equation \eqref{Xz}. Moreover, the observability of the pairs \((-S_v, L_v + D^{-1} CV)\) and \((-S_z, L_v)\) guarantees that \(X_z\) is invertible. Premultiplying \eqref{Xz} by \(X_z^{-1}\) gives
\begin{align}
-X_z^{-1} S_z^* X_z - S_v + X_z^{-1} L_v^T L_v + X_z^{-1} L_v^T D^{-1} CV &= 0, \nonumber\\
  S_v - \tilde{B} L_v - \tilde{B} D^{-1} \tilde{C} &= - X_z^{-1} S_z^* X_z, \nonumber\\
  \tilde{A} - \tilde{B} D^{-1} \tilde{C} &= - X_z^{-1} S_z^* X_z. \nonumber
  \end{align}
Hence, the eigenvalues of \(\tilde{A} - \tilde{B} D^{-1} \tilde{C}\) coincide with those of \(-S_z^*\), and the ROM \eqref{ROM2} has zeros at \((\lambda_1, \dots, \lambda_{n_s})\) with multiplicity \(m\).
\end{proof}
The next proposition establishes that \(X_z\) can become block-diagonally dominant under certain conditions.
\begin{proposition}\label{prop4}
Let Proposition \ref{prop3} hold. Further, for \(\epsilon > 0\), let \(\sigma_i = j\omega_i\) and \(\lambda_i = -\epsilon + j\omega_i\), satisfying
\[\Delta_{\mathrm{\min}} := \min_{\substack{i,j=1 \\ i \neq j}}^{n_s} |\omega_i - \omega_j| > 0.\]
Partition \(\tilde{C} = \begin{bmatrix} \tilde{C}_1 & \cdots & \tilde{C}_{n_s} \end{bmatrix}\) and define $M_j := I_m + D^{-1}\tilde{C}_{j}$ for $j=1,\dots,n_s$, assuming $\|M_j\| > 0$ for all $j$. Then $X_z = [X_{ij}]_{i,j=1}^{n_s}$ with $m\times m$ blocks $X_{ij}$ is row-wise block diagonally dominant:
\begin{equation*}
\|X_{ii}\| > \sum_{\substack{j=1\\ j\neq i}}^{n_s} \|X_{ij}\| \quad \text{for all } i=1,\dots,n_s,
\end{equation*}
provided that
\begin{equation*}
0 < \epsilon < \Delta_{\mathrm{min}} \cdot \min_{i=1,\dots,n_s} \frac{\|M_i\|}{\displaystyle\sum_{\substack{j=1\\ j\neq i}}^{n_s} \|M_j\|}.
\end{equation*}
Moreover, the relative off-diagonal weight in row $i$ satisfies
\begin{equation*}
\frac{\displaystyle\sum_{j\neq i} \|X_{ij}\|}{\|X_{ii}\|} \leq \frac{\epsilon}{\Delta_{\mathrm{min}}} \cdot \frac{\displaystyle\sum_{j\neq i} \|M_j\|}{\|M_i\|} = O(\epsilon) \quad \text{as } \epsilon \to 0^{+}.
\end{equation*}
\end{proposition}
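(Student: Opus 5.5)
The plan is to solve the Sylvester equation \eqref{Xz} in closed form, block by block, and then bound the blocks directly. All coefficient matrices in \eqref{Xz} are block diagonal or have rank-one block structure, so $X_z$ will decouple entrywise in the block index.

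\textbf{Step 1 (coefficients).} With $\lambda_i=-\epsilon+j\omega_i$ we have $S_z=\mathrm{diag}(-\overline{\lambda_i})\otimes I_m$, hence
\[
S_z^*=\mathrm{diag}(-\lambda_1,\dots,-\lambda_{n_s})\otimes I_m=\mathrm{diag}(\epsilon-j\omega_1,\dots,\epsilon-j\omega_{n_s})\otimes I_m .
\]
Also $S_v=\mathrm{diag}(j\omega_1,\dots,j\omega_{n_s})\otimes I_m$. Since $L_v=[I_m,\dots,I_m]$ and $CV=\tilde C=[\tilde C_1,\dots,\tilde C_{n_s}]$, the $(i,j)$ block of $L_v^T(L_v+D^{-1}CV)$ is $I_m+D^{-1}\tilde C_j=M_j$. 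This block is independent of $i$.

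\textbf{Step 2 (explicit solution).} Reading \eqref{Xz} blockwise gives
\[
-(\epsilon-j\omega_i)X_{ij}-j\omega_jX_{ij}+M_j=0 .
\]
Hence
\[
X_{ij}=\frac{M_j}{\epsilon+j(\omega_j-\omega_i)} .
\]
The denominator is nonzero because $\epsilon>0$, which is consistent with the uniqueness already granted by Proposition \ref{prop3}. This is the analogue of the Kronecker formula in Proposition \ref{prop2}. The only difference is that the numerators now depend on the column index through $M_j$.

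\textbf{Step 3 (bounds).} For the diagonal blocks, $\|X_{ii}\|=\|M_i\|/\epsilon$. For $j\neq i$,
\[
\|X_{ij}\|=\frac{\|M_j\|}{\sqrt{\epsilon^2+(\omega_j-\omega_i)^2}}\le \frac{\|M_j\|}{|\omega_j-\omega_i|}\le \frac{\|M_j\|}{\Delta_{\min}} .
\]
Summing over $j\neq i$ and dividing by $\|X_{ii}\|$, which is legitimate since $\|M_i\|>0$, gives
\[
\frac{\sum_{j\neq i}\|X_{ij}\|}{\|X_{ii}\|}\le \frac{\epsilon}{\Delta_{\min}}\cdot\frac{\sum_{j\neq i}\|M_j\|}{\|M_i\|} .
\]
This is the stated relative off-diagonal weight bound. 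It is $O(\epsilon)$ because the $M_j$ do not depend on $\epsilon$: they are built only from $D$, $C$ and the interpolation points $j\omega_j$.

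\textbf{Step 4 (dominance).} The right-hand side in Step 3 is strictly less than $1$ for every $i$ exactly when $\epsilon$ satisfies the stated upper bound. The bound is taken as a minimum over $i$ so that it holds simultaneously for all rows. Row-wise block diagonal dominance follows. If $\sum_{j\neq i}\|M_j\|=0$ for some $i$, that row is trivially dominant, and the corresponding term in the minimum is interpreted as $+\infty$.

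The only real care point is Step 2: getting the signs of $S_z^*$ and the ordering $(\omega_j-\omega_i)$ right, and noting that the right-hand-side blocks depend only on $j$. The remaining steps are elementary norm estimates.
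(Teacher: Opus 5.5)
Your proposal is correct and follows the same route as the paper's proof in Appendix B: you decouple the Sylvester equation blockwise to get $X_{ij}=M_j/(\epsilon+j(\omega_j-\omega_i))$, bound the off-diagonal blocks using $|\omega_j-\omega_i|\ge\Delta_{\min}$, and require the resulting row ratio to be below one for every row. Your remark that $M_j=I_m+D^{-1}G(j\omega_j)$ does not depend on $\epsilon$ justifies the $O(\epsilon)$ claim, which the paper leaves implicit.
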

\begin{proof}
The proof is given in Appendix B.
\end{proof}
\subsubsection{Shifted Pole Placement}
Rational interpolation with pole placement and zero placement discussed so far were special cases of a more general shifted pole placement problem, presented in the following proposition. Moreover, all projection-based approaches introduced later in the paper, as well as low-rank ADI methods for Lyapunov and Riccati equations \cite{benner2013efficient,wolf2016adi,benner2018radi}, can also be interpreted as special cases of this general framework. In particular, the low-rank ADI method for Lyapunov equations were shown to implicitly perform rational interpolation with pole placement in \cite{wolf2016adi}, while low-rank ADI method for Riccati equations were shown to perform rational interpolation with shifted pole placement in \cite{zulfiqar2025unified}.
\begin{proposition}\label{propg}
Let \((\sigma_1, \dots, \sigma_{n_s})\) be \(n_s\) distinct interpolation points, and let \(Q_{shift}\) be an \(mn_s \times mn_s\) matrix. Let \((\lambda_1, \dots, \lambda_{n_s})\) be \(n_s\) distinct desired poles, and define
\[
S_{sp}=\mathrm{diag}(-\overline{\lambda_1},\cdots,-\overline{\lambda_{n_s}})\otimes I_m.
\] Assume that \(S_{sp}\) and \(S_v - Q_{shift}\) have no eigenvalues in common, and that the pairs \((-S_v + Q_{shift}, L_v)\) and \((-S_{sp}, L_v)\) are observable. Let \(X_{sp}\) be the unique solution of the Sylvester equation
\begin{align}
-S_{sp}^*X_{sp}-X_{sp}(S_v-Q_{shift})+L_v^TL_v=0.\label{Xsp}
\end{align}
If \(V\) is defined as in \eqref{Kry_V}, the interpolant
\begin{align}
\tilde{A} = S_v - \tilde{B} L_v, \quad \tilde{B} = X_{sp}^{-1} L_v^T, \quad \tilde{C} = C V \label{ROMsp}
  \end{align}
satisfies the interpolation condition \eqref{int01}. Moreover, the matrix \(\tilde{A} - Q_{shift}\) has eigenvalues at \((\lambda_1, \dots, \lambda_{n_s})\) with multiplicity \(m\).
\end{proposition}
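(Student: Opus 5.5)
The argument follows the pattern of the proofs of Propositions \ref{prop1} and \ref{prop3}, with the additional matrix \(Q_{shift}\) carried along. It has three parts: uniqueness and invertibility of \(X_{sp}\), the interpolation property, and the eigenvalue identity.

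\textbf{Uniqueness of \(X_{sp}\).} The spectrum of \(-S_{sp}^*\) is \(\{\lambda_1,\dots,\lambda_{n_s}\}\), so \eqref{Xsp} is a Sylvester equation with coefficients \(-S_{sp}^*\) and \(S_v-Q_{shift}\). It has a unique solution exactly when these two matrices share no eigenvalue. I read the hypothesis that \(S_{sp}\) and \(S_v-Q_{shift}\) have disjoint spectra in this sense: the conjugation in \(S_{sp}^*\) is absorbed by the convention that the desired poles enter through \(-\overline{\lambda_i}\).

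\textbf{Invertibility of \(X_{sp}\).} The right-hand side \(L_v^TL_v\) factors through \(L_v\). By the standard result for Sylvester equations with factored right-hand side, observability of \((-S_v+Q_{shift},L_v)\) and of \((-S_{sp},L_v)\) makes \(X_{sp}\) nonsingular. Concretely, \(X_{sp}\) is a product of two square observability-type (Krylov) matrices built from these pairs, and both are full rank. This is the same invertibility argument invoked in Proposition \ref{prop1}. I expect this step to be the main obstacle if it is written out in full, because \(S_v-Q_{shift}\) is no longer diagonal. I would therefore cite the general result of De Souza--Bhattacharyya type rather than rederive it.

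\textbf{Eigenvalue identity.} Premultiply \eqref{Xsp} by \(X_{sp}^{-1}\) to get
\[
-X_{sp}^{-1}S_{sp}^*X_{sp}-S_v+Q_{shift}+X_{sp}^{-1}L_v^TL_v=0.
\]
Set \(\tilde B=X_{sp}^{-1}L_v^T\). Then
\[
S_v-\tilde BL_v-Q_{shift}=-X_{sp}^{-1}S_{sp}^*X_{sp},
\]
that is, \(\tilde A-Q_{shift}\) is similar to \(-S_{sp}^*=\mathrm{diag}(\lambda_1,\dots,\lambda_{n_s})\otimes I_m\). Hence its eigenvalues are the \(\lambda_i\), each with multiplicity \(m\).

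\textbf{Interpolation.} This part does not depend on \(Q_{shift}\). The interpolant has the form \(\tilde A=S_v-\tilde BL_v\), \(\tilde C=CV\), so it is the family described before Proposition \ref{prop1} with \(\zeta_b=\tilde B\). Condition \eqref{int01} therefore holds, provided \(\sigma_i\) is not an eigenvalue of \(\tilde A\). To verify it directly, I would use the Sylvester equation \eqref{sylv_V} together with \((\sigma_iI-\tilde A)^{-1}\tilde B\). From \(\tilde A=S_v-\tilde BL_v\) one gets \((sI-\tilde A)\) applied to the \(i\)-th block unit vector \(e_i\) equals \((s-\sigma_i)e_i+\tilde B\). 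This gives \(\tilde C(sI-\tilde A)^{-1}\tilde B\) at \(s=\sigma_i\) equal to \(CVe_i=C(\sigma_iI-A)^{-1}B\), which is \(G(\sigma_i)\).
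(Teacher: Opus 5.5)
Your outline follows the paper's proof step for step: uniqueness, nonsingularity of \(X_{sp}\) from the two observability hypotheses, then \(\tilde A-Q_{shift}=-X_{sp}^{-1}S_{sp}^*X_{sp}\) by premultiplying \eqref{Xsp} with \(X_{sp}^{-1}\). The similarity step is fine. The nonsingularity step, however, is not valid for \(m\ge 2\), in your version or in the paper's, which asserts it without argument.

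Put \(B_1:=S_v-Q_{shift}\). The \(i\)-th block row of \eqref{Xsp} reads \(X_i(\lambda_iI-B_1)=-L_v\), so \(X_i=-L_v(\lambda_iI-B_1)^{-1}\). For \(m=1\) your description is essentially right: \(X_{sp}=-\operatorname{diag}\bigl(\chi(\lambda_i)^{-1}\bigr)\,\mathcal V H\mathcal O\). Here \(\chi\) is the characteristic polynomial of \(B_1\), \(\mathcal V\) is the square Vandermonde matrix of the \(\lambda_i\), \(H\) is a nonsingular triangular Hankel matrix, and \(\mathcal O\) is the square observability matrix of \((B_1,L_v)\). For \(m\ge2\) the outer factors become \(mn_s\times m^2n_s\) and \(m^2n_s\times mn_s\), so their full rank proves nothing. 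The De Souza--Bhattacharyya theorem also gives controllability/observability only as \emph{necessary} conditions once the right-hand side has rank larger than one.

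The implication is in fact false. Take \(n_s=m=2\), \(z_1=(1,0,-1,0)^T\), \(z_2=(0,1,0,-1)^T\) (a basis of \(\ker L_v\)), \(w_1=z_1-z_2\), \(w_2=\lambda_1z_1-\lambda_2z_2\). Define \(B_1\) by \(B_1w_1=w_2\), \(B_1w_2=a\), \(B_1a=b\), with \(a=(1,0,1,0)^T\) and \(b=(0,1,0,1)^T\). Choose \(B_1b\) so that the spectrum avoids \(\{\lambda_i\}\cup\{-\overline{\lambda_i}\}\).

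\begin{itemize}
\item An eigenvector \(\alpha w_1+\beta w_2\in\ker L_v\) would force either \(B_1w_2\in\ker L_v\) or \(w_2\parallel w_1\). Hence \((-B_1,L_v)\) is observable.
\item \((-S_{sp},L_v)\) is trivially observable, so every hypothesis holds with \(Q_{shift}:=S_v-B_1\).
\item Yet \(y:=(\lambda_1I-B_1)z_1=(\lambda_2I-B_1)z_2\) is nonzero, and the two expressions agree because \(B_1w_1=w_2\).
\item This \(y\) satisfies \(X_iy=-L_vz_i=0\) for \(i=1,2\). So \(X_{sp}\) is singular and \(\tilde B\) does not exist.
\end{itemize}

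To close the gap you need an extra ingredient. You can restrict to \(m=1\), add nonsingularity of \(X_{sp}\) as a hypothesis, or exploit structure. For the shift \(Q_{shift}=\tilde BR_{shift}\tilde C\) used later in the paper, \(X_{sp}\) solves the equation with block-diagonal coefficients. Its solution is \(X_{sp}=(Y\otimes I_m)\operatorname{blkdiag}(M_1,\dots,M_{n_s})\) with the nonsingular Cauchy matrix \(Y_{ij}=1/(\sigma_j-\lambda_i)\), which is invertible exactly when every \(M_j\) is. For sufficiently small \(Q_{shift}\) you can argue by continuity from the case \(Q_{shift}=0\), where \(X_{sp}=Y\otimes I_m\).

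Two smaller remarks:
\begin{itemize}
\item Uniqueness needs \(\{\lambda_i\}\cap\operatorname{spec}(S_v-Q_{shift})=\emptyset\). This is not implied by the literal hypothesis \(\{-\overline{\lambda_i}\}\cap\operatorname{spec}(S_v-Q_{shift})=\emptyset\). Say that you are correcting the hypothesis rather than appealing to a convention.
\item Your direct check of \eqref{int01} via \((\sigma_iI-\tilde A)(e_i\otimes I_m)=\tilde B\) is more careful than the paper, which simply inherits \eqref{int01} from the \(\zeta_b\)-family. So is your caveat \(\sigma_i\notin\operatorname{spec}(\tilde A)\), which the hypotheses do not guarantee, since only the spectrum of \(\tilde A-Q_{shift}\) is placed.
\end{itemize}
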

\begin{proof}
The assumption that \(S_{sp}\) and \(S_v - Q_{shift}\) have no common eigenvalues ensures the uniqueness of the Sylvester equation \eqref{Xsp}. Observability of the pairs \((-S_v + Q_{shift}, L_v)\) and \((-S_{sp}, L_v)\) guarantees that \(X_{sp}\) is invertible. Premultiplying \eqref{Xsp} by \(X_{sp}^{-1}\) gives
\begin{align}
-X_{sp}^{-1}S_{sp}^*X_{sp}-S_v+Q_{shift}+X_{sp}^{-1}L_v^TL_v=0,\nonumber\\
S_v-\tilde{B}L_v-Q_{shift}=-X_{sp}^{-1}S_{sp}^*X_{sp},\nonumber\\
\tilde{A}-Q_{shift}=-X_{sp}^{-1}S_{sp}^*X_{sp}.\nonumber
\end{align}
Hence, the eigenvalues of \(\tilde{A} - Q_{shift}\) coincide with those of \(-S_{sp}^*\), i.e., \((\lambda_1, \dots, \lambda_{n_s})\) with multiplicity \(m\).
\end{proof}
Let \( R_{shift} \) be an invertible matrix satisfying  
\[
Q_{shift} = \tilde{B} R_{shift} \tilde{C}
   = X_{sp}^{-1} L_v^T R_{shift} \tilde{C}.
\]
Then \( X_{sp} \) is the unique solution to the Sylvester equation  
\[
-S_{sp}^* X_{sp} - X_{sp} S_v + L_v^T\bigl(L_v + R_{shift} \tilde{C}\bigr) = 0.
\]
\begin{proposition}\label{propgm}
Let $Q_{shift}=\tilde{B}R_{shift}\tilde{C}$ and Proposition \ref{propg} hold. Further, for \(\epsilon > 0\), let \(\sigma_i = j\omega_i\) and \(\lambda_i = -\epsilon + j\omega_i\), satisfying
\[\Delta_{\mathrm{\min}} := \min_{\substack{i,j=1 \\ i \neq j}}^{n_s} |\omega_i - \omega_j| > 0.\]
Partition \(\tilde{C} = \begin{bmatrix} \tilde{C}_1 & \cdots & \tilde{C}_{n_s} \end{bmatrix}\) and define $M_j := I_m + R_{shift}\tilde{C}_{j}$ for $j=1,\dots,n_s$, assuming $\|M_j\| > 0$ for all $j$. Then $X_{sp} = [X_{ij}]_{i,j=1}^{n_s}$ with $m\times m$ blocks $X_{ij}$ is row-wise block diagonally dominant:
\begin{equation*}
\|X_{ii}\| > \sum_{\substack{j=1\\ j\neq i}}^{n_s} \|X_{ij}\| \quad \text{for all } i=1,\dots,n_s,
\end{equation*}
provided that
\begin{equation*}
0 < \epsilon < \Delta_{\mathrm{min}} \cdot \min_{i=1,\dots,n_s} \frac{\|M_i\|}{\displaystyle\sum_{\substack{j=1\\ j\neq i}}^{n_s} \|M_j\|}.
\end{equation*}
Moreover, the relative off-diagonal weight in row $i$ satisfies
\begin{equation*}
\frac{\displaystyle\sum_{j\neq i} \|X_{ij}\|}{\|X_{ii}\|} \leq \frac{\epsilon}{\Delta_{\mathrm{min}}} \cdot \frac{\displaystyle\sum_{j\neq i} \|M_j\|}{\|M_i\|} = O(\epsilon) \quad \text{as } \epsilon \to 0^{+}.
\end{equation*}
\end{proposition}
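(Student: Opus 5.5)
The plan is to solve the Sylvester equation for \(X_{sp}\) in closed form, block by block. This works because, with \(Q_{shift}=\tilde{B}R_{shift}\tilde{C}\), every coefficient matrix in that equation is either block diagonal or has a simple rank structure. After that, both claims reduce to elementary estimates on scalar coefficients.

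First, I use the reformulation stated just before the proposition: \(X_{sp}\) is the unique solution of
\[
-S_{sp}^* X_{sp} - X_{sp} S_v + L_v^T\bigl(L_v + R_{shift} \tilde{C}\bigr) = 0 .
\]
An important point is that \(\tilde{C}=CV\) does not depend on \(X_{sp}\), so this is a genuine linear Sylvester equation with known data. Uniqueness holds because \(\lambda_i=-\epsilon+j\omega_i\) and \(\sigma_j=j\omega_j\) never coincide when \(\epsilon>0\). The coefficient matrices are as follows.
\begin{itemize}
\item Since \(S_{sp}=\mathrm{diag}(-\overline{\lambda_1},\dots,-\overline{\lambda_{n_s}})\otimes I_m\), we have \(-S_{sp}^*=\mathrm{diag}(\lambda_1,\dots,\lambda_{n_s})\otimes I_m\).
\item \(S_v=\mathrm{diag}(\sigma_1,\dots,\sigma_{n_s})\otimes I_m\).
\item Since \(L_v=[I_m,\dots,I_m]\), every \(m\times m\) block of \(L_v^TL_v\) equals \(I_m\), and the \((i,j)\) block of \(L_v^TR_{shift}\tilde{C}\) equals \(R_{shift}\tilde{C}_j\).
\end{itemize}
Reading off the \((i,j)\) block of the equation gives
\[
\lambda_i X_{ij} - X_{ij}\sigma_j + M_j = 0,
\qquad\text{hence}\qquad
X_{ij}=\frac{M_j}{\sigma_j-\lambda_i}=\frac{M_j}{\epsilon + j(\omega_j-\omega_i)} .
\]
This is the block analogue of the matrix \(Y\) in Proposition \ref{prop2}, with the columns scaled by \(M_j\).

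Second, I bound the norms of the blocks using homogeneity of the norm.
\begin{itemize}
\item On the diagonal, \(\|X_{ii}\|=\|M_i\|/\epsilon\).
\item Off the diagonal, \(|\epsilon+j(\omega_j-\omega_i)|\ge|\omega_j-\omega_i|\ge\Delta_{\min}\), so \(\|X_{ij}\|\le \|M_j\|/\Delta_{\min}\).
\end{itemize}
Dividing the off-diagonal row sum by the diagonal block gives
\[
\frac{\sum_{j\neq i}\|X_{ij}\|}{\|X_{ii}\|}\le \frac{\epsilon}{\Delta_{\min}}\cdot\frac{\sum_{j\neq i}\|M_j\|}{\|M_i\|}.
\]
The assumption \(\|M_i\|>0\) makes this quotient well defined. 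This is exactly the stated relative-weight bound, and it is clearly \(O(\epsilon)\) as \(\epsilon\to0^+\).

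Finally, strict row-wise block diagonal dominance follows by requiring the right-hand side to be strictly less than \(1\) for every \(i\). Rearranged, this is \(\epsilon<\Delta_{\min}\|M_i\|/\sum_{j\ne i}\|M_j\|\) for all \(i\), which is the minimum condition in the statement.

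The only real care needed is in the blockwise extraction. I must get the sign and conjugation right so that \(-S_{sp}^*\) yields \(\lambda_i\) and not \(-\overline{\lambda_i}\). I must also place the factor \(R_{shift}\tilde{C}_j\) in the column index \(j\), since it multiplies \(L_v^T\) from the right. An error in either step would put \(M_i\) in place of \(M_j\) and change the bound. Everything else is routine, and the argument is essentially the one I would use for Proposition \ref{prop4}, with \(D^{-1}\) replaced by \(R_{shift}\).
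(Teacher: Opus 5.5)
Your proposal is correct and follows the same route as the paper. The paper omits this proof as analogous to Appendix B (Proposition \ref{prop4}), which also decouples the reformulated Sylvester equation blockwise to get $X_{ij}=M_j/(\epsilon+j(\omega_j-\omega_i))$ and then bounds the off-diagonal blocks using $|\epsilon+j(\omega_j-\omega_i)|\ge\Delta_{\min}$. Your remark that $\tilde C=CV$ does not depend on $X_{sp}$ is a useful clarification the paper leaves implicit: it shows the equation is genuinely linear even though $Q_{shift}$ is defined through $\tilde B=X_{sp}^{-1}L_v^T$.
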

\begin{proof}
The proof is similar to that of Proposition \ref{prop4} and hence omitted for brevity.
\end{proof}
The next proposition provides conditions under which \( X_{sp} \) is block diagonally dominant for a general shift matrix \( Q_{\mathrm{shift}} \).  
\begin{proposition}\label{propgmg}
Let Proposition \ref{propg} hold. Let \(\epsilon > 0\), and let \(\sigma_i = j\omega_i\) and \(\lambda_i = -\epsilon + j\omega_i\), where
\[
\Delta_{\mathrm{min}} := \min_{\substack{i,j=1 \\ i \neq j}}^{n_s} |\omega_i - \omega_j| > 0.
\]
Suppose \(Q_{\mathrm{shift}} \in \mathbb{C}^{mn_s \times mn_s}\) is partitioned into \(m \times m\) blocks \(Q_{\mathrm{shift}} = [Q_{ij}]_{i,j=1}^{n_s}\), and \(X_{sp} = [X_{ij}]_{i,j=1}^{n_s}\) is likewise partitioned into \(m \times m\) blocks \(X_{ij} \in \mathbb{C}^{m \times m}\). Assume there exists an \(\eta\) in the interval \(\bigl(0, \frac{1}{6n_s}\bigr)\) such that
\[
\max_{1 \le i,j \le n_s} \|Q_{ij}\|_2 \le \eta \epsilon
\]
and that
\[
0 < \epsilon \le \frac{\Delta_{\min}}{2n_s}.
\]
Then, the matrix \(X_{sp}\) is strictly block diagonally dominant; i.e., for each \(i = 1, \dots, n_s\),
\[
\|X_{ii}\|_2 > \sum_{\substack{j=1 \\ j \neq i}}^{n_s} \|X_{ij}\|_2.
\]
\end{proposition}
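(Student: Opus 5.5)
The plan is to write the Sylvester equation \eqref{Xsp} block by block and exploit the diagonal structure of \(S_{sp}\) and \(S_v\), exactly as in the explicit form of \(X_p\) in Proposition \ref{prop2}, treating \(Q_{\mathrm{shift}}\) as a small perturbation. Since \(\lambda_i=-\epsilon+j\omega_i\), we have \(S_{sp}^*=\mathrm{diag}(\epsilon-j\omega_i)\otimes I_m\) and \(S_v=\mathrm{diag}(j\omega_i)\otimes I_m\). Every \(m\times m\) block of \(L_v^TL_v\) equals \(I_m\). The \((i,j)\) block of \eqref{Xsp} therefore reads
\[
\bigl(\epsilon+j(\omega_j-\omega_i)\bigr)X_{ij}=I_m+\sum_{k=1}^{n_s}X_{ik}Q_{kj},
\]
so
\[
X_{ij}=Y_{ij}\Bigl(I_m+\sum_{k}X_{ik}Q_{kj}\Bigr),
\]
with \(Y_{ij}\) as in Proposition \ref{prop2}. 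This gives \(|Y_{ii}|=1/\epsilon\). For \(i\neq j\), it gives \(|Y_{ij}|\le 1/|\omega_j-\omega_i|\le 1/\Delta_{\min}\), and always \(|Y_{ij}|\le 1/\epsilon\). Existence and uniqueness of \(X_{sp}\) is taken from Proposition \ref{propg}.

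Next I would fix a row \(i\) and set \(r_i:=\max_k\|X_{ik}\|_2\), which is finite since \(X_{sp}\) exists. Using \(\|Q_{kj}\|_2\le\eta\epsilon\), the perturbation term satisfies \(\|\sum_k X_{ik}Q_{kj}\|_2\le n_s\eta\epsilon\, r_i\). Applying the crude bound \(|Y_{ij}|\le 1/\epsilon\) to every block and taking the maximum over \(j\) gives \(r_i\le \epsilon^{-1}(1+n_s\eta\epsilon r_i)\). This yields the a priori estimate
\[
\epsilon r_i\le \frac{1}{1-n_s\eta}<\frac{6}{5},
\]
because \(n_s\eta<1/6\). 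Consequently \(n_s\eta\epsilon r_i\le n_s\eta/(1-n_s\eta)<1/5\). This self-bounding step, which controls \(X\) in terms of itself before the diagonal and off-diagonal blocks are separated, is the main obstacle. Everything else is the triangle inequality.

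With this estimate, the diagonal block satisfies, by the reverse triangle inequality,
\[
\|X_{ii}\|_2\ge \epsilon^{-1}\bigl(1-n_s\eta\epsilon r_i\bigr)>\tfrac{4}{5}\epsilon^{-1}.
\]
Each off-diagonal block satisfies \(\|X_{ij}\|_2\le \Delta_{\min}^{-1}(1+n_s\eta\epsilon r_i)<\tfrac{6}{5}\Delta_{\min}^{-1}\). Summing over the \(n_s-1\) off-diagonal blocks and using \(\epsilon\le\Delta_{\min}/(2n_s)\), i.e.\ \(\Delta_{\min}^{-1}\le 1/(2n_s\epsilon)\), gives
\[
\sum_{j\neq i}\|X_{ij}\|_2<\frac{6}{5}\cdot\frac{n_s-1}{2n_s}\,\epsilon^{-1}<\tfrac{3}{5}\epsilon^{-1}<\tfrac45\epsilon^{-1}<\|X_{ii}\|_2 .
\]
Since \(i\) was arbitrary, this establishes strict row-wise block diagonal dominance. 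The constants \(1/6\) and \(1/2\) in the hypotheses are more than enough to leave this margin, so if some step turns out slightly lossier I would simply track the constants more carefully.
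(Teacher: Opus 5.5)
Your proof is correct and follows essentially the same route as the paper's Appendix~C. Both write \eqref{Xsp} blockwise, obtain a self-bounding a priori estimate on row $i$ of $X_{sp}$, and then compare the reverse-triangle lower bound on $\|X_{ii}\|_2$ with the $1/\Delta_{\min}$ bounds on the off-diagonal blocks. The differences are only bookkeeping: you control the row by $\max_k\|X_{ik}\|_2$ with the uniform bound $|Y_{ij}|\le 1/\epsilon$, so your a priori step uses only $n_s\eta<1/6$, whereas the paper uses the row sum $\sum_k\|X_{ik}\|_2$ and already invokes $\epsilon\le\Delta_{\min}/(2n_s)$ there; also, your sign on the $Q$-term in the block equation is the correct one (the paper has a harmless sign slip).
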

\begin{proof}
The proof is given in Appendix C.
\end{proof}
Proposition \ref{propgmg} provides the theoretical basis for the projection-based approximations of \(\mathcal{P}\) and \(\mathcal{Q}\) developed in the sequel.
\subsection{Balanced Truncation}
Let us project the pair $(A,B)$ via Petrov–Galerkin projection
\begin{align}
\tilde{A}=W_p^*AV_p,\quad \mathrm{and}\quad \tilde{B}=W_p^*B,
\end{align}
satisfying the bi-orthogonality condition $W_{p}^* V_{p}=I$. Then solve the following projected Lyapunov equation
\begin{align}
\tilde{A}\tilde{P}+\tilde{P}\tilde{A}^*+\tilde{B}\tilde{B}^*=0.
\end{align}
The resulting approximation $V_p\tilde{P}V_p^*$ of the Gramian $P$ satisfies
\begin{align}
AV_{p}\tilde{P}V_{p}^*+V_{p}\tilde{P}V_{p}^*A^T+BB^T=R_{p},
\end{align}
with the residual fulfilling $W_{p}^* R_{p} W_{p}=0$.

Choosing $V_p = V$, where $V$ is defined in~\eqref{Kry_V} with interpolation points $\sigma_i = j\omega_i$, and selecting $(\tilde{A},\tilde{B})$ as in~\eqref{modal} according to Proposition~\ref{prop2}, eliminates the need to compute $W_p$ explicitly. As demonstrated in~\cite{gawronski2004dynamics,gawronski2006balanced}, when $\tilde{A}$ is in modal form with lightly damped modes $\lambda_i = -\epsilon + j\omega_i$, the controllability Gramian of $(\tilde{A},\tilde{B})$ becomes diagonally dominant, and its off-diagonal entries vanish as $\epsilon \to 0^{+}$. Consequently, for a sufficiently small $\epsilon > 0$ chosen as per Proposition~\ref{prop2}, the Gramian $P$ admits the approximation
\[ P\approx V\tilde{Z}_p\tilde{Z}_p^TV^*,\]
where $\tilde{Z}_p = \sqrt{\frac{\epsilon}{2}}\, I$, since the diagonal entries of $\tilde{P}$ equal $\frac{\epsilon}{2}$.

Dually, the observability Gramian $Q$ is approximated by
\[ Q\approx W\tilde{Z}_q\tilde{Z}_q^TW^*,\] with $W$ given in~\eqref{Kry_W} using interpolation points $\mu_i = j\nu_i$ on the imaginary axis and $\tilde{Z}_q = \sqrt{\frac{\epsilon}{2}}\, I$. Notably, similar to QuadBT, Proposition~\ref{prop2} enables the approximation of both $P$ and $Q$ without solving any projected Lyapunov equation. Although it remains unclear whether $\frac{\epsilon}{2}$ corresponds to a specific quadrature rule, Proposition~\ref{prop2} yields a quadrature-like approximation of the Gramians, facilitating a non-intrusive implementation of BT. The factors $\tilde{Z}_p$ and $\tilde{Z}_q$ are computable non-intrusively, consistent with the factorization in~\eqref{gram_fact}. Replacing $\tilde{L}_p$ and $\tilde{L}_q$ in QuadBT with $\tilde{Z}_p$ and $\tilde{Z}_q$, respectively, thus provides a non-intrusive, projection-based approximation of BT.
\subsection{Frequency-limited Balanced Truncation}
In~\cite{gawronski1990model}, a frequency-limited generalization of BT~\cite{moore1981principal} is introduced, enabling specification of a frequency interval over which enhanced approximation accuracy is desired. In frequency-limited BT (FLBT)~\cite{gawronski1990model}, the standard controllability and observability Gramians—defined over the entire frequency range—are replaced by their frequency-limited counterparts. Let $P_\Omega$ and $Q_\Omega$ denote the frequency-limited controllability and observability Gramians, respectively, defined over the symmetric frequency interval $[-\omega_2,-\omega_1]\cup[\omega_1,\omega_2]$ (in rad/sec) as
\begin{align}
P_\Omega&=\frac{1}{2\pi}\int_{\omega_1}^{\omega_2}(j\nu I-A)^{-1}BB^T(-j\nu I-A^T)^{-1}d\nu\nonumber\\
&\hspace*{3cm}+\frac{1}{2\pi}\int_{-\omega_2}^{-\omega_1}(j\nu I-A)^{-1}BB^T(-j\nu I-A^T)^{-1}d\nu,\label{Pw}\\
Q_\Omega&=\frac{1}{2\pi}\int_{\omega_1}^{\omega_2}(-j\nu I-A^T)^{-1}C^TC(j\nu I-A)^{-1}d\nu\nonumber\\
&\hspace*{3cm}+\frac{1}{2\pi}\int_{-\omega_2}^{-\omega_1}(-j\nu I-A^T)^{-1}C^TC(j\nu I-A)^{-1}d\nu.\label{Qw}
\end{align}
These Gramians satisfy the modified Lyapunov equations
\begin{align}
AP_\Omega+P_\Omega A^T+L_\Omega(A)BB^T+BB^TL_\Omega(A^T)&=0,\nonumber\\
A^TQ_\Omega+Q_\Omega A+L_\Omega(A^T)C^TC+C^TCL_\Omega(A)&=0,\nonumber
\end{align}
where
\begin{align}
L(A,\omega)&=\frac{1}{2\pi}\int_{-\omega}^{\omega}(j\nu I-A)^{-1}d\nu=\frac{j}{2\pi}\mathrm{ln}\Big((j\omega I+A)(-j\omega I+A)^{-1}\Big),\nonumber\\
L_\Omega(A)&=L(A,\omega_2)-L(A,\omega_1).\nonumber
\end{align}
To approximate \(P_\Omega\), we first select interpolation points \(\sigma_i = j\omega_i\) within the desired frequency interval \([-j\omega_2,-j\omega_1] \cup [j\omega_1,j\omega_2]\) and compute a ROM via Propositions \ref{prop1} and \ref{prop2} so that \(V(j\nu I-\tilde{A})^{-1}\tilde{B}\) interpolates \((j\nu I-A)^{-1}B\) at these points. Substituting \((j\nu I-A)^{-1}B\) with \(V(j\nu I-\tilde{A})^{-1}\tilde{B}\) yields the approximation
\begin{align}
P_\Omega&\approx V\Big(\frac{1}{2\pi}\int_{\omega_1}^{\omega_2}(j\nu I-\tilde{A})^{-1}\tilde{B}\tilde{B}^T(-j\nu I-\tilde{A}^*)^{-1}d\nu\Big)V^*\nonumber\\
&\hspace*{2cm}+V\Big(\frac{1}{2\pi}\int_{-\omega_2}^{-\omega_1}(j\nu I-\tilde{A})^{-1}\tilde{B}\tilde{B}^T(-j\nu I-\tilde{A}^*)^{-1}d\nu\Big)V^*.\label{Pwh}
\end{align}
By defining
\begin{align}
\tilde{P}_\Omega&=\frac{1}{2\pi}\int_{\omega_1}^{\omega_2}(j\nu I-\tilde{A})^{-1}\tilde{B}\tilde{B}^T(-j\nu I-\tilde{A}^*)^{-1}d\nu\nonumber\\
&\hspace*{3cm}+\frac{1}{2\pi}\int_{-\omega_2}^{-\omega_1}(j\nu I-\tilde{A})^{-1}\tilde{B}\tilde{B}^T(-j\nu I-\tilde{A}^*)^{-1}d\nu,
\end{align}
the approximation \eqref{Pwh} can be expressed as \(P_\Omega \approx V \tilde{P}_\Omega V^*\).

The seminal work \cite{gawronski1990model} not only introduced FLBT but also discussed efficient computation of the frequency-limited Gramians in modal form. Their diagonal dominance is established in~\cite{gawronski2006balanced}, as a consequence of the diagonal structure of the standard Gramians and the matrix logarithm. As shown in~\cite{gawronski1990model}, the relation
\[\tilde{P}_\Omega=L_\Omega(\tilde{A})\tilde{P}+\tilde{P}L_\Omega(\tilde{A}^*)\]
holds between the frequency-limited and infinite-frequency reduced Gramians. 

Since $\tilde{A}$ is nearly diagonal with eigenvalues $\lambda_i = -\epsilon + j\omega_i$, the matrix $L_\Omega(\tilde{A})$ is also nearly diagonal, because the matrix logarithm of a diagonal matrix remains diagonal~\cite{gawronski1990model}. Specifically,
\[
L_\Omega(\tilde{A})\approx\mathrm{diag}(L_\Omega(-\epsilon+j\omega_1),\cdots,L_\Omega(-\epsilon+j\omega_{n_s}))\otimes I_m.
\]Furthermore, because \(L_\Omega(\tilde{A}^*) = \overline{L_\Omega(\tilde{A})}\) \cite{gawronski1990model} and $\tilde{P}=\frac{\epsilon}{2}I$, we obtain
\[
\tilde{P}_\Omega\approx\tilde{Z}_p\tilde{Z}_p^T,
\]where
\[
\tilde{Z}_p=\mathrm{diag}\Big(\sqrt{\epsilon\mathrm{Re}\big(L_\Omega(-\epsilon+j\omega_1)\big)},\cdots,\sqrt{\epsilon\mathrm{Re}\big(L_\Omega(-\epsilon+j\omega_{n_s})\big)}\Big)\otimes I_m.
\]
It is not clear whether \(\epsilon\,\mathrm{Re}\big(L_\Omega(-\epsilon+j\omega_i)\big)\) corresponds to the weights of a quadrature rule. Nevertheless, \(\tilde{Z}_p\) can be computed non-intrusively.

Dually, the observability Gramian can be approximated as
\[
Q_\Omega\approx W\tilde{Z}_q\tilde{Z}_q^TW^*,
\]where \(W\) is defined as in \eqref{Kry_W} with interpolation points \(\mu_i = j\nu_i\) on the imaginary axis within the interval \([-j\omega_2,-j\omega_1] \cup [j\omega_1,j\omega_2]\), and
\[
\tilde{Z}_q=\mathrm{diag}\Big(\sqrt{\epsilon\mathrm{Re}\big(L_\Omega(-\epsilon+j\nu_1)\big)},\cdots,\sqrt{\epsilon\mathrm{Re}\big(L_\Omega(-\epsilon+j\nu_{n_u})\big)}\Big)\otimes I_p.
\]
Finally, a non-intrusive implementation of FLBT is achieved by replacing \(\tilde{L}_p\) and \(\tilde{L}_q\) in QuadBT with \(\tilde{Z}_p\) and \(\tilde{Z}_q\), respectively.
\subsection{Time-limited Balanced Truncation}
A time-limited generalization of BT was introduced in \cite{gawronski1990model}, allowing the user to specify a time interval where superior model accuracy is required. In time-limited BT (TLBT), the standard controllability and observability Gramians—defined over an infinite time horizon—are replaced by Gramians defined over a finite interval of interest.

Let \(P_\tau\) and \(Q_\tau\) denote the time-limited controllability and observability Gramians, respectively, defined over the interval \([t_1, t_2]\) (in seconds), where
\begin{align}
P_\tau&=\int_{t_1}^{t_2}e^{At}BB^Te^{A^Tt}dt,\label{Pt}\\
Q_\tau&=\int_{t_1}^{t_2}e^{A^Tt}C^TCe^{At}dt.\label{Qt}
\end{align}
These Gramians satisfy the Lyapunov equations
\begin{align}
AP_\tau+P_\tau A^T+e^{At_1}BB^Te^{A^Tt_1}-e^{At_2}BB^Te^{A^Tt_2}&=0,\nonumber\\
A^TQ_\tau+Q_\tau A+e^{A^Tt_1}C^TCe^{At_1}-e^{A^Tt_2}C^TCe^{At_2}&=0.\nonumber
\end{align}
To approximate \(P_\tau\), we first construct a projection-based approximation of \(e^{At} B\) using Propositions \ref{prop1} and \ref{prop2}:
\[
e^{At}B\approx Ve^{\tilde{A}t}\tilde{B}
\]where \(V\) is defined as in \eqref{Kry_V} with interpolation points \(\sigma_i = j\omega_i\) on the imaginary axis. Substituting this approximation into \eqref{Pt} yields
\begin{align}
P_\tau\approx V\Big(\int_{t_1}^{t_2}e^{\tilde{A}t}\tilde{B}\tilde{B}^Te^{\tilde{A}^*t}dt\Big)V^*.\label{Pth}
\end{align}
By defining
\[
\tilde{P}_\tau=\int_{t_1}^{t_2}e^{\tilde{A}t}\tilde{B}\tilde{B}^Te^{\tilde{A}^*t}dt,
\]
the approximation \eqref{Pth} can be written as \( P_\tau \approx V \tilde{P}_\tau V^* \).

The seminal work~\cite{gawronski1990model}, which first introduced TLBT, also discussed efficient computation of time-limited Gramians when $\tilde{A}$ is in modal form. Their diagonal dominance was established in \cite{gawronski2006balanced}, following directly from the diagonal dominance of the standard Gramians and properties of the matrix exponential. As shown in \cite{gawronski1990model}, the time-limited and infinite-horizon Gramians are related by
\[\tilde{P}_\tau=e^{\tilde{A}t_1}\tilde{P}e^{\tilde{A}^*t_1}-e^{\tilde{A}t_2}\tilde{P}e^{\tilde{A}^*t_2}.\]
Because \(\tilde{A}\) is nearly diagonal and the matrix exponential of a diagonal matrix is diagonal, \(e^{\tilde{A} t}\) is also nearly diagonal:
\[
e^{\tilde{A}t}\approx\mathrm{diag}\big(e^{(-\epsilon+j\omega_1)t},\cdots,e^{(-\epsilon+j\omega_{n_s})t}\big)\otimes I_m.
\]Moreover, since \(e^{\tilde{A}^* t} = \overline{e^{\tilde{A} t}}\) and $\tilde{P}=\frac{\epsilon}{2}I$, it follows that
\[
\tilde{P}_\tau\approx\tilde{Z}_p\tilde{Z}_p^T,
\]where
\begin{align}
\tilde{Z}_p&=\mathrm{diag}\Big(\sqrt{\frac{\epsilon}{2}\big(\|e^{(-\epsilon+j\omega_1)t_1}\|^2-\|e^{(-\epsilon+j\omega_1)t_2}\|^2\big)},\cdots,\nonumber\\
&\hspace*{4cm}\sqrt{\frac{\epsilon}{2}\big(\|e^{(-\epsilon+j\omega_{n_s})t_1}\|^2-\|e^{(-\epsilon+j\omega_{n_s})t_2}\|^2\big)}\Big)\otimes I_m.\nonumber
\end{align}
It is not clear whether the terms \(\frac{\epsilon}{2}\big( | e^{(-\epsilon + j\omega_i) t_1} |^2 - | e^{(-\epsilon + j\omega_i) t_2} |^2 \big)\) correspond to weights of a quadrature rule. Nevertheless, the matrix $\tilde{Z}_p$ can be computed non-intrusively.

Dually, \(Q_\tau\) can be approximated as
\[
Q_\tau\approx W\tilde{Z}_q\tilde{Z}_q^TW^*,
\] where \(W\) is as defined in \eqref{Kry_W} with interpolation points \(\mu_i = j\nu_i\) on the imaginary axis, and
\begin{align}
\tilde{Z}_q&=\mathrm{diag}\Big(\sqrt{\frac{\epsilon}{2}\big(\|e^{(-\epsilon+j\nu_1)t_1}\|^2-\|e^{(-\epsilon+j\nu_1)t_2}\|^2\big)},\cdots,\nonumber\\
&\hspace*{4cm}\sqrt{\frac{\epsilon}{2}\big(\|e^{(-\epsilon+j\nu_{n_u})t_1}\|^2-\|e^{(-\epsilon+j\nu_{n_u})t_2}\|^2\big)}\Big)\otimes I_p.\nonumber
\end{align}
Finally, a non-intrusive implementation of TLBT is obtained by replacing \(\tilde{L}_p\) and \(\tilde{L}_q\) in QuadBT with \(\tilde{Z}_p\) and \(\tilde{Z}_q\), respectively.
\subsection{Self-weighted Balanced Truncation}
In certain applications, the relative error \([H(s)]^{-1}(H(s)-\tilde{H}(s))\) is a theoretically more suitable approximation criterion than the absolute error \(H(s)-\tilde{H}(s)\). For instance, if a reduced-order controller designed for a reduced-order plant model must also perform well with the original high-order plant, the relative error becomes an appropriate metric; see \cite{obinata2012model} for details. This motivates SWBT \cite{zhou1995frequency}, a method that preserves the minimum-phase property of the original model.

Assume \(G(s)\) is a stable, minimum-phase, square system with an invertible \(D\) matrix. In SWBT, the controllability Gramian $P$ remains the same as in standard BT, but the observability Gramian $Q$ is replaced by a self-weighted observability Gramian \(Q_{sw}\), which solves the Lyapunov equation
\begin{align}
(A-BD^{-1}C)^TQ_{sw}+Q_{sw}(A-BD^{-1}C)+C^T(DD^T)^{-1}C=0.
\end{align}
Consider a Petrov–Galerkin projection of the triplet $(A,B,C)$ given by
\begin{align}
\tilde{A}=W_p^*AV_p,\quad \tilde{B}=W_p^*B,\quad \tilde{C}=CV_p,
\end{align}with the bi-orthogonality condition \(W_p^* V_p = I\). The projected Lyapunov equation then becomes
\begin{align}
(\tilde{A}-\tilde{B}D^{-1}\tilde{C})^*\tilde{Q}_{sw}+\tilde{Q}_{sw}(\tilde{A}-\tilde{B}D^{-1}\tilde{C})+\tilde{C}^*(DD^T)^{-1}\tilde{C}=0.\label{proj_sw}
\end{align}
The resulting approximation $W_p \tilde{Q}_{sw} W_p^*$ of $Q_{sw}$ satisfies the residual equation
\begin{align}
(A-BD^{-1}C)^TW_p\tilde{Q}_{sw}W_p^*+W_p\tilde{Q}_{sw}W_p^*(A-BD^{-1}C)+C^T(DD^T)^{-1}C=R_q\nonumber
\end{align} with the residual $R_q$ satisfying $V_p^* R_q V_p = 0$.

Choosing $W_p = W$, where $W$ is defined in~\eqref{Kry_W} using interpolation points $\mu_i = j\nu_i$ on the imaginary axis, a ROM can be constructed via the dual of Proposition~\ref{prop4}:
\begin{align}
\tilde{A}=S_w-L_w\tilde{C}, \quad \tilde{B}=W^*B,\quad \tilde{C}=L_w^TX_z^{-1},
\end{align}
with the following approximations holding for small $\epsilon > 0$:
\begin{align}
\tilde{A}-\tilde{B}D^{-1}\tilde{C}&\approx \mathrm{diag}(-\epsilon+j\nu_1,\cdots,-\epsilon+j\nu_{n_u})\otimes I_p,\nonumber\\ \tilde{C}&\approx\epsilon\begin{bmatrix}(I_p+G(j\nu_1)D^{-1})^{-1}&\cdots&(I_p+G(j\nu_{n_u})D^{-1})^{-1}\end{bmatrix},\nonumber\\
X_z&\approx\frac{1}{\epsilon}\mathrm{blkdiag}\big(I_p+G(j\nu_1)D^{-1},\cdots,I_p+G(j\nu_{n_u})D^{-1}\big).\nonumber
\end{align}
Owing to the block-diagonal dominance of the projected Lyapunov equation~\eqref{proj_sw}, the self-weighted observability Gramian admits the approximation
\[
Q_{sw}\approx W\tilde{Z}_q(W\tilde{Z}_q)^*,
\]
where
\[
\tilde{Q}_{sw}\approx\tilde{Z}_q\tilde{Z}_q^*=\frac{\epsilon}{2}\mathrm{blkdiag}\Big(\big((G(j\nu_1)+D)(G(j\nu_1)+D)^*\big)^{-1},\cdots,\big((G(j\nu_{n_u})+D)(G(j\nu_{n_u})+D)^*\big)^{-1}\Big).
\]
For single-input single-output (SISO) systems, $\tilde{Z}_q$ simplifies to
\[
\tilde{Z}_q=\mathrm{diag}\Bigg(\sqrt{\frac{\epsilon}{2}\big(D^2+\|G(j\nu_1)\|^2+2\mathrm{Re}(G(j\nu_1))\big)^{-1}},\cdots,\sqrt{\frac{\epsilon}{2}\big(D^2+\|G(j\nu_{n_u})\|^2+2\mathrm{Re}(G(j\nu_{n_u}))\big)^{-1}}\Bigg).
\]
Since the controllability Gramian in SWBT is the same as in standard BT, we have \(\tilde{Z}_p = \sqrt{\epsilon/2}I\). Finally, a non-intrusive implementation of SWBT is obtained by replacing \(\tilde{L}_p\) and \(\tilde{L}_q\) in QuadBT with \(\tilde{Z}_p\) and \(\tilde{Z}_q\), respectively.
\subsection{LQG Balanced Truncation}
In LQGBT, a ROM is constructed to preserve the dominant LQG characteristics—quantities analogous to the Hankel singular values in standard BT. To this end, the controllability Gramian \(P\) and observability Gramian \(Q\) from standard BT are replaced by the LQG Gramians \(P_{lqg}\) and \(Q_{lqg}\), respectively, which satisfy the Riccati equations
\begin{align}
AP_{lqg}+P_{lqg}A^T+BB^T-P_{lqg}C^TCP_{lqg}=0,\\
A^TQ_{lqg}+Q_{lqg}A+C^TC-Q_{lqg}BB^TQ_{lqg}=0.
\end{align}
Consider a Petrov–Galerkin projection of the triplet \((A, B, C)\) given by
\begin{align}
\tilde{A}=W_p^*AV_p,\quad \tilde{B}=W_p^*B,\quad \tilde{C}=CV_p,
\end{align}with the bi-orthogonality condition \(W_p^* V_p = I\).  

Let \(\tilde{P}_{lqg}\) be a stabilizing solution to the following projected Riccati equation:
\begin{align}
\tilde{A}\tilde{P}_{lqg}+\tilde{P}_{lqg}\tilde{A}^*+\tilde{B}\tilde{B}^*-\tilde{P}_{lqg}\tilde{C}^*\tilde{C}\tilde{P}_{lqg}=0.\label{proj_P_lqg}
\end{align}
The projection-based approximation \(V_p \tilde{P}_{lqg} V_p^*\) of \(P_{lqg}\) then satisfies the residual equation
\[
AV_p\tilde{P}_{lqg}V_p^*+V_p\tilde{P}_{lqg}V_p^*A^T+BB^T-V_p\tilde{P}_{lqg}\tilde{C}^*\tilde{C}\tilde{P}_{lqg}V_p^*=R_p
\] satisfying $W_p^*R_pW_p=0$.

If the triplet \((\tilde{A}, \tilde{B}, \tilde{C})\) is obtained via Proposition \ref{prop2} with \(V_p = V\), where \(V\) is defined as in \eqref{Kry_V} using interpolation points \(\sigma_i = j\omega_i\) on the imaginary axis, then \(\tilde{A}\) is nearly in modal form with lightly damped modes. Moreover, it is shown in \cite{gawronski2004dynamics,gawronski2006balanced} that for a state matrix in modal form with lightly damped modes, the stabilizing solution to the Riccati equation of form \eqref{proj_P_lqg} is block‑diagonally dominant, with diagonal blocks
\begin{align}
\tilde{P}_{lqg}&\approx \tilde{Z}_p\tilde{Z}_p^*=\epsilon\mathrm{blkdiag}\Big(\big(G^*(j\omega_1)G(j\omega_1)\big)^{-1}\big((I_m+G^*(j\omega_1)G(j\omega_1))^{\frac{1}{2}}-I_m\big),\cdots,\nonumber\\
&\hspace*{3cm}\big(G^*(j\omega_{n_s})G(j\omega_{n_s})\big)^{-1}\big((I_m+G^*(j\omega_{n_s})G(j\omega_{n_s}))^{\frac{1}{2}}-I_m\big)\Big).\nonumber
\end{align}
Consequently, the projection‑based approximation of \(P_{lqg}\) is
\[
P_{lqg}\approx (V\tilde{Z}_p)(V\tilde{Z}_p)^*.
\]
Dually, the projection‑based approximation of \(Q_{lqg}\) is given by
\[
Q_{lqg}\approx (W\tilde{Z}_q)(W\tilde{Z}_q)^*
\]where
\begin{align}
\tilde{Z}_q\tilde{Z}_q^*&=\epsilon\mathrm{blkdiag}\Big(\big(G(j\nu_1)G^*(j\nu_1)\big)^{-1}\big((I_p+G(j\nu_1)G^*(j\nu_1))^{\frac{1}{2}}-I_p\big),\cdots,\nonumber\\
&\hspace*{3cm}\big(G(j\nu_{n_u})G^*(j\nu_{n_u})\big)^{-1}\big((I_p+G(j\nu_{n_u})G^*(j\nu_{n_u}))^{\frac{1}{2}}-I_p\big)\Big),\nonumber
\end{align}and $W$ is defined in~\eqref{Kry_W} using interpolation points $\mu_i = j\nu_i$ on the imaginary axis.

For SISO systems, \(\tilde{Z}_p\) and \(\tilde{Z}_q\) simplify to
\begin{align}
\tilde{Z_p}=\mathrm{diag}\Bigg(\sqrt{\epsilon\frac{\sqrt{1+\|G(j\omega_1)\|^2}-1}{\|G(j\omega_1)\|^2}},\cdots,\sqrt{\epsilon\frac{\sqrt{1+\|G(j\omega_{n_s})\|^2}-1}{\|G(j\omega_{n_s})\|^2}}\Bigg),\nonumber\\
\tilde{Z_q}=\mathrm{diag}\Bigg(\sqrt{\epsilon\frac{\sqrt{1+\|G(j\nu_1)\|^2}-1}{\|G(j\nu_1)\|^2}},\cdots,\sqrt{\epsilon\frac{\sqrt{1+\|G(j\nu_{n_u})\|^2}-1}{\|G(j\nu_{n_u})\|^2}}\Bigg).\nonumber
\end{align}
Finally, a non‑intrusive implementation of LQGBT is obtained by replacing \(\tilde{L}_p\) and \(\tilde{L}_q\) in QuadBT with \(\tilde{Z}_p\) and \(\tilde{Z}_q\), respectively.
\subsection{$\mathcal{H}_\infty$ Balanced Truncation}
In \(\mathcal{H}_\infty\)-BT, a ROM is constructed to preserve the dominant \(\mathcal{H}_\infty\) characteristics—quantities analogous to the Hankel singular values in standard BT. To this end, the standard controllability and observability Gramians \(P\) and \(Q\) are replaced by the \(\mathcal{H}_\infty\) Gramians \(P_{\mathcal{H}_\infty}\) and \(Q_{\mathcal{H}_\infty}\), respectively, which satisfy the Riccati equations
\begin{align}
AP_{\mathcal{H}_\infty}+P_{\mathcal{H}_\infty}A^T+BB^T-(1-\gamma^2)P_{\mathcal{H}_\infty}C^TCP_{\mathcal{H}_\infty}=0,\\
A^TQ_{\mathcal{H}_\infty}+Q_{\mathcal{H}_\infty}A+C^TC-(1-\gamma^2)Q_{\mathcal{H}_\infty}BB^TQ_{\mathcal{H}_\infty}=0,
\end{align}where $\gamma>0$.

Similar to LQGBT, these Gramians can be approximated via Proposition \ref{prop2} as
\[P_{\mathcal{H}_\infty}\approx (V\tilde{Z}_p)(V\tilde{Z}_p)^*\quad \text{and}\quad Q_{\mathcal{H}_\infty}\approx (W\tilde{Z}_q)(W\tilde{Z}_q)^*,\]
where \(V\) and \(W\) are defined as in \eqref{Kry_V} and \eqref{Kry_W}, respectively, with interpolation points \(\sigma_i = j\omega_i\) and \(\mu_i = j\nu_i\), and
\begin{align}
\tilde{Z_p}\tilde{Z}_p^*&=\epsilon\mathrm{blkdiag}\Big(\big((1-\gamma^2)G^*(j\omega_1)G(j\omega_1)\big)^{-1}\big((I_m+(1-\gamma^2)G^*(j\omega_1)G(j\omega_1))^{\frac{1}{2}}-I_m\big),\cdots,\nonumber\\
&\hspace*{3cm}\big((1-\gamma^2)G^*(j\omega_{n_s})G(j\omega_{n_s})\big)^{-1}\big((I_m+(1-\gamma^2)G^*(j\omega_{n_s})G(j\omega_{n_s}))^{\frac{1}{2}}-I_m\big)\Big),\nonumber\\
\tilde{Z_q}\tilde{Z}_q^*&=\epsilon\mathrm{blkdiag}\Big(\big((1-\gamma^2)G(j\nu_1)G^*(j\nu_1)\big)^{-1}\big((I_p+(1-\gamma^2)G(j\nu_1)G^*(j\nu_1))^{\frac{1}{2}}-I_p\big),\cdots,\nonumber\\
&\hspace*{3cm}\big((1-\gamma^2)G(j\nu_{n_u})G^*(j\nu_{n_u})\big)^{-1}\big((I_p+(1-\gamma^2)G(j\nu_{n_u})G^*(j\nu_{n_u}))^{\frac{1}{2}}-I_p\big)\Big).\nonumber
\end{align}
For SISO systems, \(\tilde{Z}_p\) and \(\tilde{Z}_q\) simplify to
\begin{align}
\tilde{Z_p}=\mathrm{diag}\Bigg(\sqrt{\epsilon\frac{\sqrt{1+(1-\gamma^2)\|G(j\omega_1)\|^2}-1}{(1-\gamma^2)\|G(j\omega_1)\|^2}},\cdots,\sqrt{\epsilon\frac{\sqrt{1+(1-\gamma^2)\|G(j\omega_{n_s})\|^2}-1}{(1-\gamma^2)\|G(j\omega_{n_s})\|^2}}\Bigg),\nonumber\\
\tilde{Z_q}=\mathrm{diag}\Bigg(\sqrt{\epsilon\frac{\sqrt{1+(1-\gamma^2)\|G(j\nu_1)\|^2}-1}{(1-\gamma^2)\|G(j\nu_1)\|^2}},\cdots,\sqrt{\epsilon\frac{\sqrt{1+(1-\gamma^2)\|G(j\nu_{n_u})\|^2}-1}{(1-\gamma^2)\|G(j\nu_{n_u})\|^2}}\Bigg).\nonumber
\end{align}
Finally, a non‑intrusive implementation of $\mathcal{H}_\infty$ BT is obtained by replacing \(\tilde{L}_p\) and \(\tilde{L}_q\) in QuadBT with \(\tilde{Z}_p\) and \(\tilde{Z}_q\), respectively.
\subsection{Positive-real Balanced Truncation}
In PRBT, the controllability Gramian \(P\) and observability Gramian \(Q\) from standard BT are replaced by \(P_{pr}\) and \(Q_{pr}\), respectively. These are the stabilizing solutions to the following Riccati equations:
\begin{align}
\big(A-B(D+D^T)^{-1}C\big)P_{pr}&+P_{pr}\big(A-B(D+D^T)^{-1}C\big)^T\nonumber\\
&+B(D+D^T)^{-1}B^T+P_{pr}C^T(D+D^T)^{-1}CP_{pr}=0,\\
\big(A-B(D+D^T)^{-1}C\big)^TQ_{pr}&+Q_{pr}\big(A-B(D+D^T)^{-1}C\big)\nonumber\\
&+C^T(D+D^T)^{-1}C+Q_{pr}B(D+D^T)^{-1}B^TQ_{pr}=0.
\end{align}
Now consider a ROM constructed via Proposition \ref{propgm} with \(R_{shift}=(D+D^T)^{-1}\). This yields the following approximate relations:
\begin{align}
\tilde{A}-\tilde{B}(D+D^T)^{-1}\tilde{C}&\approx \mathrm{diag}\big(-\epsilon+j\omega_1,\cdots,-\epsilon+j\omega_{n_s}\big)\otimes I_m,\nonumber\\
\tilde{B}&\approx\epsilon\begin{bmatrix}\big(I_m+(D+D^T)^{-1}G(j\omega_1)\big)^{-1}\\\vdots\\\big(I_m+(D+D^T)^{-1}G(j\omega_{n_s})\big)^{-1}\end{bmatrix},\nonumber\\
X_{sp}&\approx\frac{1}{\epsilon}\mathrm{blkdiag}\big(I_m+(D+D^T)^{-1}G(j\omega_1),\cdots,I_m+(D+D^T)^{-1}G(j\omega_{n_s})\big).\nonumber
\end{align}
As in LQGBT and \(\mathcal{H}_\infty\) BT, the stabilizing solution to the projected Riccati equation
\begin{align}
\big(\tilde{A}-\tilde{B}(D+D^T)^{-1}\tilde{C}\big)\tilde{P}_{pr}&+\tilde{P}_{pr}\big(\tilde{A}-\tilde{B}(D+D^T)^{-1}\tilde{C}\big)^*\nonumber\\
&+\tilde{B}(D+D^T)^{-1}\tilde{B}^*+\tilde{P}_{pr}\tilde{C}^*(D+D^T)^{-1}\tilde{C}\tilde{P}_{pr}=0\nonumber
\end{align}
becomes block diagonally dominant when the triplet \((\tilde{A},\tilde{B},\tilde{C})\) is obtained via Proposition \ref{propgm}. Using the analytical formulas from \cite{gawronski2004dynamics,gawronski2006balanced} for the diagonal blocks of a block diagonally dominant stabilizing solution of the Riccati equation, we obtain
\[
\tilde{P}_{pr}\approx\tilde{Z}_p\tilde{Z}_p^*=\epsilon\mathrm{blkdiag}\Big((\alpha_{p,1})^{-1}\big(I_m-(I_m-\alpha_{p,1}\beta_{p,1})^{\frac{1}{2}}\big),\cdots,(\alpha_{p,n_s})^{-1}\big(I_m-(I_m-\alpha_{p,n_s}\beta_{p,n_s})^{\frac{1}{2}}\big)\Big),
\]where
\begin{align}
\alpha_{p,i}&=G^*(j\omega_i)(D+D^T)^{-1}G(j\omega_i),\nonumber\\
\beta_{p,i}&=\big(I_m+(D+D^T)^{-1}G(j\omega_i)\big)^{-1}(D+D^T)^{-1}\big(I_m+(D+D^T)^{-1}G(j\omega_i)\big)^{-*}.\nonumber
\end{align}
Thus, the projection-based approximation of \(P_{pr}\) is
\[
P_{pr}\approx (V\tilde{Z}_p)(V\tilde{Z}_p)^*,
\]where \(V\) is defined as in \eqref{Kry_V} using interpolation points \(\sigma_i = j\omega_i\) on the imaginary axis.

Dually, the projection-based approximation of \(Q_{pr}\) is
\[
Q_{pr}\approx (W\tilde{Z}_q)(W\tilde{Z}_q)^*
\]with
\begin{align}
\tilde{Z}_q\tilde{Z}_q^*&=\epsilon\mathrm{blkdiag}\Big((\alpha_{q,1})^{-1}\big(I_p-(I_p-\alpha_{q,1}\beta_{q,1})^{\frac{1}{2}}\big),\cdots,(\alpha_{q,n_u})^{-1}\big(I_p-(I_p-\alpha_{q,n_u}\beta_{q,n_u})^{\frac{1}{2}}\big)\Big),\nonumber\\
\alpha_{q,i}&=G(j\nu_i)(D+D^T)^{-1}G^*(j\nu_i),\nonumber\\
\beta_{q,i}&=\big(I_p+(D+D^T)^{-1}G^*(j\nu_i)\big)^{-1}(D+D^T)^{-1}\big(I_p+(D+D^T)^{-1}G^*(j\nu_i)\big)^{-*},\nonumber
\end{align}and \(W\) is defined in~\eqref{Kry_W} using interpolation points \(\mu_i = j\nu_i\) on the imaginary axis.

For SISO systems, \(\tilde{Z}_p\) and \(\tilde{Z}_q\) simplify to
\begin{align}
\tilde{Z}_p=\mathrm{diag}\Bigg(\sqrt{\epsilon\frac{1-\sqrt{1-\alpha_{p,1}\beta_{p,1}}}{\alpha_{p,1}}},\cdots,\sqrt{\epsilon\frac{1-\sqrt{1-\alpha_{p,n_s}\beta_{p,{n_s}}}}{\alpha_{p,{n_s}}}}\Bigg),\nonumber\\
\tilde{Z}_q=\mathrm{diag}\Bigg(\sqrt{\epsilon\frac{1-\sqrt{1-\alpha_{q,1}\beta_{q,1}}}{\alpha_{q,1}}},\cdots,\sqrt{\epsilon\frac{1-\sqrt{1-\alpha_{q,n_u}\beta_{q,{n_u}}}}{\alpha_{q,{n_u}}}}\Bigg).\nonumber
\end{align}
A non-intrusive implementation of PRBT is then obtained by replacing \(\tilde{L}_p\) and \(\tilde{L}_q\) in QuadBT with \(\tilde{Z}_p\) and \(\tilde{Z}_q\), respectively.
\subsection{Bounded-real Balanced Truncation}
In BRBT, the controllability and observability Gramians \(P\) and \(Q\) from standard BT are replaced by \(P_{br}\) and \(Q_{br}\), which are the stabilizing solutions of the following Riccati equations:
\begin{align}
\big(A+BD^T(I_p-DD^T)^{-1}C\big)P_{br}&+P_{br}\big(A+BD^T(I_p-DD^T)^{-1}C\big)^T\nonumber\\
&+B\big(I_m+D^T(I_p-DD^T)^{-1}D\big)B^T+P_{br}C^T(I_p-DD^T)^{-1}CP_{br}=0,\nonumber\\
\big(A+B(I_m-D^TD)^{-1}D^TC\big)^TQ_{br}&+Q_{br}\big(A+B(I_m-D^TD)^{-1}D^TC\big)\nonumber\\
&+C^T\big(I_p+D(I_m-D^TD)^{-1}D^T\big)C+Q_{br}B(I_m-D^TD)^{-1}B^TQ_{br}=0.\nonumber
\end{align}
A ROM is constructed via Proposition~\ref{propgm} by setting \(R_{\text{shift}} = D^T (I_p - D D^T)^{-1}\), yielding the approximations:
\begin{align}
\tilde{A}&+\tilde{B}D^T(I_p-DD^T)^{-1}\tilde{C}\approx \mathrm{diag}\big(-\epsilon+j\omega_1,\cdots,-\epsilon+j\omega_{n_s}\big)\otimes I_m,\nonumber\\
\tilde{B}&\approx\epsilon\begin{bmatrix}\big(I_m-D^T(I_p-DD^T)^{-1}G(j\omega_1)\big)^{-1}\\\vdots\\\big(I_m-D^T(I_p-DD^T)^{-1}G(j\omega_{n_s})\big)^{-1}\end{bmatrix},\nonumber\\
X_{sp}&\approx\frac{1}{\epsilon}\mathrm{blkdiag}\big(I_m-D^T(I_p-DD^T)^{-1}G(j\omega_1),\cdots,I_m-D^T(I_p-DD^T)^{-1}G(j\omega_{n_s})\big).\nonumber
\end{align}
As in PRBT, the stabilizing solution to the projected Riccati equation in BRBT,
\begin{align}
\big(\tilde{A}+\tilde{B}D^T(I_p-DD^T)^{-1}\tilde{C}\big)\tilde{P}_{br}&+\tilde{P}_{br}\big(\tilde{A}+\tilde{B}D^T(I_p-DD^T)^{-1}\tilde{C}\big)^*\nonumber\\
&+\tilde{B}(I_m+D^T(I_p-DD^T)^{-1}D)\tilde{B}^*+\tilde{P}_{br}\tilde{C}^*(I_p-DD^T)^{-1}\tilde{C}\tilde{P}_{br}=0\nonumber
\end{align}
becomes block diagonally dominant when \((\tilde{A}, \tilde{B}, \tilde{C})\) is obtained from Proposition~\ref{propgm}. Applying the analytical formulas from \cite{gawronski2004dynamics,gawronski2006balanced} for the diagonal blocks of a block diagonally dominant stabilizing solution of the Riccati equation gives
\[
\tilde{P}_{br}\approx\tilde{Z}_p\tilde{Z}_p^*=\epsilon\mathrm{blkdiag}\Big((\alpha_{p,1})^{-1}\big(I_m-(I_m-\alpha_{p,1}\beta_{p,1})^{\frac{1}{2}}\big),\cdots,(\alpha_{p,n_s})^{-1}\big(I_m-(I_m-\alpha_{p,n_s}\beta_{p,n_s})^{\frac{1}{2}}\big)\Big),
\]where
\begin{align}
\alpha_{p,i}&=G^*(j\omega_i)(I_p-DD^T)^{-1}G(j\omega_i),\nonumber\\
\beta_{p,i}&=\big(I_m-D^T(I_p-DD^T)^{-1}G(j\omega_i)\big)^{-1}\big(I_m+D^T(I_p+DD^T)^{-1}D\big)\big(I_m-D^T(I_p-DD^T)^{-1}G(j\omega_i)\big)^{-*}.\nonumber
\end{align}
Hence, the projection-based approximation of \(P_{br}\) is
\[
P_{br}\approx (V\tilde{Z}_p)(V\tilde{Z}_p)^*,
\]where \(V\) is defined as in \eqref{Kry_V} using interpolation points \(\sigma_i = j\omega_i\) on the imaginary axis.

Dually, the projection-based approximation of \(Q_{br}\) is
\[
Q_{br}\approx (W\tilde{Z}_q)(W\tilde{Z}_q)^*
\]where
\begin{align}
\tilde{Z}_q\tilde{Z}_q^*&=\epsilon\mathrm{blkdiag}\Big((\alpha_{q,1})^{-1}\big(I_p-(I_p-\alpha_{q,1}\beta_{q,1})^{\frac{1}{2}}\big),\cdots,(\alpha_{q,n_u})^{-1}\big(I_p-(I_p-\alpha_{q,n_u}\beta_{q,n_u})^{\frac{1}{2}}\big)\Big),\nonumber\\
\alpha_{q,i}&=G(j\nu_i)(I_m-D^TD)^{-1}G^*(j\nu_i),\nonumber\\
\beta_{q,i}&=\big(I_p-D(I_m-D^TD)^{-1}G^*(j\nu_i)\big)^{-1}\big(I_p+D(I_m+D^TD)^{-1}D^T\big)\big(I_p-D(I_m-D^TD)^{-1}G^*(j\nu_i)\big)^{-*},\nonumber
\end{align}and \(W\) is defined in~\eqref{Kry_W} using interpolation points \(\mu_i = j\nu_i\) on the imaginary axis.

For SISO systems, \(\tilde{Z}_p\) and \(\tilde{Z}_q\) simplify to
\begin{align}
\tilde{Z}_p=\mathrm{diag}\Bigg(\sqrt{\epsilon\frac{1-\sqrt{1-\alpha_{p,1}\beta_{p,1}}}{\alpha_{p,1}}},\cdots,\sqrt{\epsilon\frac{1-\sqrt{1-\alpha_{p,n_s}\beta_{p,{n_s}}}}{\alpha_{p,{n_s}}}}\Bigg),\nonumber\\
\tilde{Z}_q=\mathrm{diag}\Bigg(\sqrt{\epsilon\frac{1-\sqrt{1-\alpha_{q,1}\beta_{q,1}}}{\alpha_{q,1}}},\cdots,\sqrt{\epsilon\frac{1-\sqrt{1-\alpha_{q,n_u}\beta_{q,{n_u}}}}{\alpha_{q,{n_u}}}}\Bigg).\nonumber
\end{align}
Finally, a non-intrusive implementation of BRBT is obtained by replacing \(\tilde{L}_p\) and \(\tilde{L}_q\) in QuadBT with \(\tilde{Z}_p\) and \(\tilde{Z}_q\), respectively.
\subsection{Stochastic Balanced Truncation}
In BST, the controllability Gramian remains identical to that of standard BT, while the observability Gramian \(Q\) is replaced by \(Q_s\), defined as the stabilizing solution of the Riccati equation 
\begin{align}
\big(A-(PC^T+BD^T)(DD^T)^{-1}C\big)^TQ_s+Q_s\big(A-(&PC^T+BD^T)(DD^T)^{-1}C\big)+C^T(DD^T)^{-1}C\nonumber\\
&+Q_s(PC^T+BD^T)(DD^T)^{-1}(PC^T+BD^T)^TQ_s=0.\nonumber
\end{align}
To obtain a projection-based approximation of \(Q_s\), we first approximate \(P\). Let \(\hat{Q}_s\) denote an approximation of \(Q_s\) obtained by replacing \(P\) with its approximation \(P \approx V\big(\frac{\epsilon}{2}I\big)V^*\), where \(V\) is defined as in \eqref{Kry_V} using interpolation points \(\sigma_i = j\omega_i\) on the imaginary axis. Setting \(\hat{C} = CV\), \(\hat{Q}_s\) becomes the stabilizing solution to
\begin{align}
\Big(A-\big(\frac{\epsilon}{2}V\hat{C}^*+BD^T\big)(DD^T)^{-1}C\Big)^*\hat{Q}_s+\hat{Q}_s\Big(A&-\big(\frac{\epsilon}{2}V\hat{C}^*+BD^T\big)(DD^T)^{-1}C\Big)+C^T(DD^T)^{-1}C\nonumber\\
&+\hat{Q}_s\big(\frac{\epsilon}{2}V\hat{C}^*+BD^T\big)(DD^T)^{-1}\big(\frac{\epsilon}{2}V\hat{C}^*+BD^T\big)^*\hat{Q}_s=0.\nonumber
\end{align}
Now project \((A,B,C)\) via \(W\) by setting the free parameter \(\tilde{C}\) according to the dual of Proposition \ref{propgmg}, with
\[
Q_{shift}=\Big(\frac{\epsilon}{2}W^*V\hat{C}^*+W^*BD^T\Big)(DD^T)^{-1}\tilde{C}=\Big(\frac{\epsilon}{2}L\hat{C}^*+\tilde{B}D^T\Big)(DD^T)^{-1}\tilde{C},
\]
where \(W\) is defined as in \eqref{Kry_W} using interpolation points \(\mu_i = j\nu_i\) on the imaginary axis. This ROM satisfies the approximate relations
\begin{align}
\tilde{A}-Q_{shift}&=S_w-L_w\tilde{C}-Q_{shift}\approx \mathrm{diag}\big(-\epsilon+j\nu_1,\cdots,-\epsilon+j\nu_{n_u}\big)\otimes I_p,\nonumber\\
\tilde{C}&=L_w^TX_{sp}^{-1}\approx\epsilon\Bigg[\Bigg(I_p-\Big(\frac{G(j\omega_1)-G(j\nu_1)}{j\omega_1-j\nu_1}G^*(j\omega_1)-G(j\nu_1)D^T\Big)(DD^T)^{-1}\Bigg)^{-1}\cdots\nonumber\\
&\hspace*{3cm}\Bigg(I_p-\Big(\frac{G(j\omega_{n_u})-G(j\nu_{n_u})}{j\omega_{n_u}-j\nu_{n_u}}G^*(j\omega_{n_u})-G(j\nu_{n_u})D^T\Big)(DD^T)^{-1}\Bigg)^{-1}\Bigg],\nonumber\\
X_{sp}&\approx\frac{1}{\epsilon}\mathrm{blkdiag}\Bigg(I_p-\Big(\frac{G(j\omega_1)-G(j\nu_1)}{j\omega_1-j\nu_1}G^*(j\omega_1)-G(j\nu_1)D^T\Big)(DD^T)^{-1},\cdots\nonumber\\
&\hspace*{3cm}I_p-\Big(\frac{G(j\omega_{n_u})-G(j\nu_{n_u})}{j\omega_{n_u}-j\nu_{n_u}}G^*(j\omega_{n_u})-G(j\nu_{n_u})D^T\Big)(DD^T)^{-1}\Bigg).\label{ROM_bst}
\end{align}
If \(j\omega_i = j\nu_i\), the term \(\frac{G(j\omega_i)-G(j\nu_i)}{j\omega_i-j\nu_i}\) is replaced by \(G^\prime(j\nu_i)\).

As in PRBT and BRBT, the stabilizing solution to the projected Riccati equation in BST,
\begin{align}
\Big(\tilde{A}-\big(\frac{\epsilon}{2}L\hat{C}^*+\tilde{B}D^T\big)(DD^T)^{-1}\tilde{C}\Big)^*\tilde{Q}_s+\tilde{Q}_s\Big(\tilde{A}&-\big(\frac{\epsilon}{2}L\hat{C}^*+\tilde{B}D^T\big)(DD^T)^{-1}\tilde{C}\Big)+\tilde{C}^*(DD^T)^{-1}\tilde{C}\nonumber\\
&+\tilde{Q}_s\big(\frac{\epsilon}{2}L\hat{C}^*+\tilde{B}D^T\big)(DD^T)^{-1}\big(\frac{\epsilon}{2}L\hat{C}^*+\tilde{B}D^T\big)^*\tilde{Q}_s=0.\nonumber
\end{align}
becomes block diagonally dominant when the triplet \((\tilde{A},\tilde{B},\tilde{C})\) is obtained via the dual of Proposition \ref{propgmg}. Applying the analytical formulas from \cite{gawronski2004dynamics,gawronski2006balanced} for the diagonal blocks of a block diagonally dominant stabilizing solution of the Riccati equation yields
\[
\tilde{Q}_{s}\approx\tilde{Z}_q\tilde{Z}_q^*=\epsilon\mathrm{blkdiag}\Big((\alpha_{q,1})^{-1}\big(I_p-(I_p-\alpha_{q,1}\beta_{q,1})^{\frac{1}{2}}\big),\cdots,(\alpha_{q,n_u})^{-1}\big(I_p-(I_p-\alpha_{q,n_u}\beta_{q,n_u})^{\frac{1}{2}}\big)\Big),
\]where
\begin{align}
\alpha_{q,i}&=\Big(G(j\nu_i)D^T-\frac{\epsilon}{2}\frac{G(j\omega_i)-G(j\nu_i)}{j\omega_i-j\nu_i}G^*(j\omega_i)\Big)(DD^T)^{-1}\Big(G(j\nu_i)D^T-\frac{\epsilon}{2}\frac{G(j\omega_i)-G(j\nu_i)}{j\omega_i-j\nu_i}G^*(j\omega_i)\Big)^*,\nonumber\\
\beta_{q,i}&=\Bigg(I_p-\Big(\frac{\epsilon}{2}\frac{G(j\omega_i)-G(j\nu_i)}{j\omega_i-j\nu_i}G^*(j\omega_i)-G(j\nu_i)D^T\Big)(DD^T)^{-1}\Bigg)^{-*}(DD^T)^{-1}\times\nonumber\\
&\hspace*{5cm}\Bigg(I_p-\Big(\frac{\epsilon}{2}\frac{G(j\omega_i)-G(j\nu_i)}{j\omega_i-j\nu_i}G^*(j\omega_i)-G(j\nu_i)D^T\Big)(DD^T)^{-1}\Bigg)^{-1}.\label{alpha_beta}
\end{align}
If \(j\omega_i = j\nu_i\), \(\frac{G(j\omega_i)-G(j\nu_i)}{j\omega_i-j\nu_i}\) is replaced by \(G^\prime(j\nu_i)\).

For SISO systems, \(\tilde{Z}_q\) simplifies to
\begin{align}
\tilde{Z}_q=\mathrm{diag}\Bigg(\sqrt{\epsilon\frac{1-\sqrt{1-\alpha_{q,1}\beta_{q,1}}}{\alpha_{q,1}}},\cdots,\sqrt{\epsilon\frac{1-\sqrt{1-\alpha_{q,n_u}\beta_{q,{n_u}}}}{\alpha_{q,{n_u}}}}\Bigg).\nonumber
\end{align}
Since the controllability Gramian in BST is the same as in standard BT, we have \(\tilde{Z}_p = \sqrt{\epsilon/2}\,I\). A non-intrusive implementation of BST is then obtained by replacing \(\tilde{L}_p\) and \(\tilde{L}_q\) in QuadBT with \(\tilde{Z}_p\) and \(\tilde{Z}_q\), respectively.
\section{Numerical Results}
This section evaluates the numerical performance of the proposed data-driven algorithms for BT-family using the same $400^{th}$-order RLC circuit model from \cite{reiter2023generalizations}. MATLAB code to reproduce the results is available in \cite{mycode}. All simulations were run in MATLAB R2021b on a laptop with a 2 GHz Intel i7 processor and 16 GB RAM.

The sampling points $j\omega_i$ are $50$ logarithmically spaced frequencies from $10^{-1}$ to $10^3$, with negative counterparts $-j\omega_i$ also included. The free parameter $\epsilon$ is set to $10^{-4}$. The sampling points $j\mu_i$ are identical to $j\sigma_i$. Transfer function samples are obtained numerically from the state-space realization of the RLC model in \citep{reiter2023generalizations}. The quantities $\sqrt{\lambda_i(P_\Omega Q_\Omega)}$, $\sqrt{\lambda_i(P_\tau Q_\tau)}$, $\sqrt{\lambda_i(PQ_{sw})}$, $\sqrt{\lambda_i(P_{lqg}Q_{lqg})}$, $\sqrt{\lambda_i(P_{\mathcal{H}_\infty}Q_{\mathcal{H}_\infty})}$, $\sqrt{\lambda_i(P_{pr}Q_{pr})}$, $\sqrt{\lambda_i(P_{br}Q_{br})}$, and $\sqrt{\lambda_i(PQ_{s})}$ are referred to as Hankel-like singular values in this section.

The time interval of interest in TLBT is set to \([0, 5]\) sec. Figures \ref{fig1}–\ref{fig8} compare the Hankel-like singular values and the relative error $\frac{\|G(s)-\hat{G}(s)\|_{\mathcal{H}_\infty}}{\|G(s)\|_{\mathcal{H}_\infty}}$. The $25^{th}$-order ROMs from intrusive methods (BT, TLBT, SWBT, LQGBT, $\mathcal{H}_\infty$ BT, PRBT, BRBT, and BST) and their non-intrusive counterparts accurately capture the $20$ most dominant Hankel-like singular values. The proposed data-driven methods also achieve accuracy comparable to intrusive approaches for ROMs of orders $1$ through $25$.
\begin{figure}[!h]
    \centering
    \begin{subfigure}{0.48\textwidth}
        \centering
        \includegraphics[width=\linewidth]{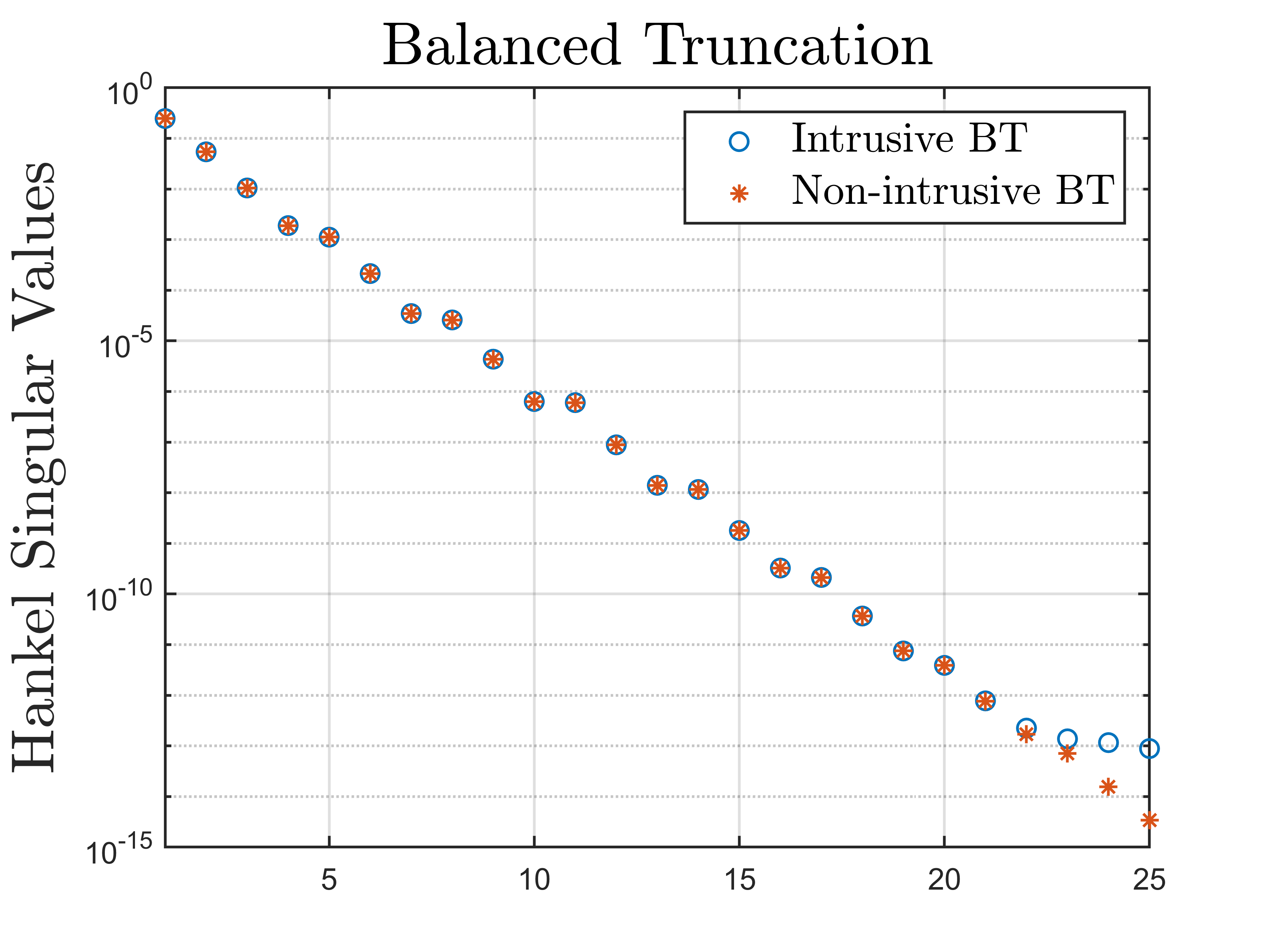}
        \caption{Hankel Singular Values Comparison}
    \end{subfigure}
    \hfill
    \begin{subfigure}{0.48\textwidth}
        \centering
        \includegraphics[width=\linewidth]{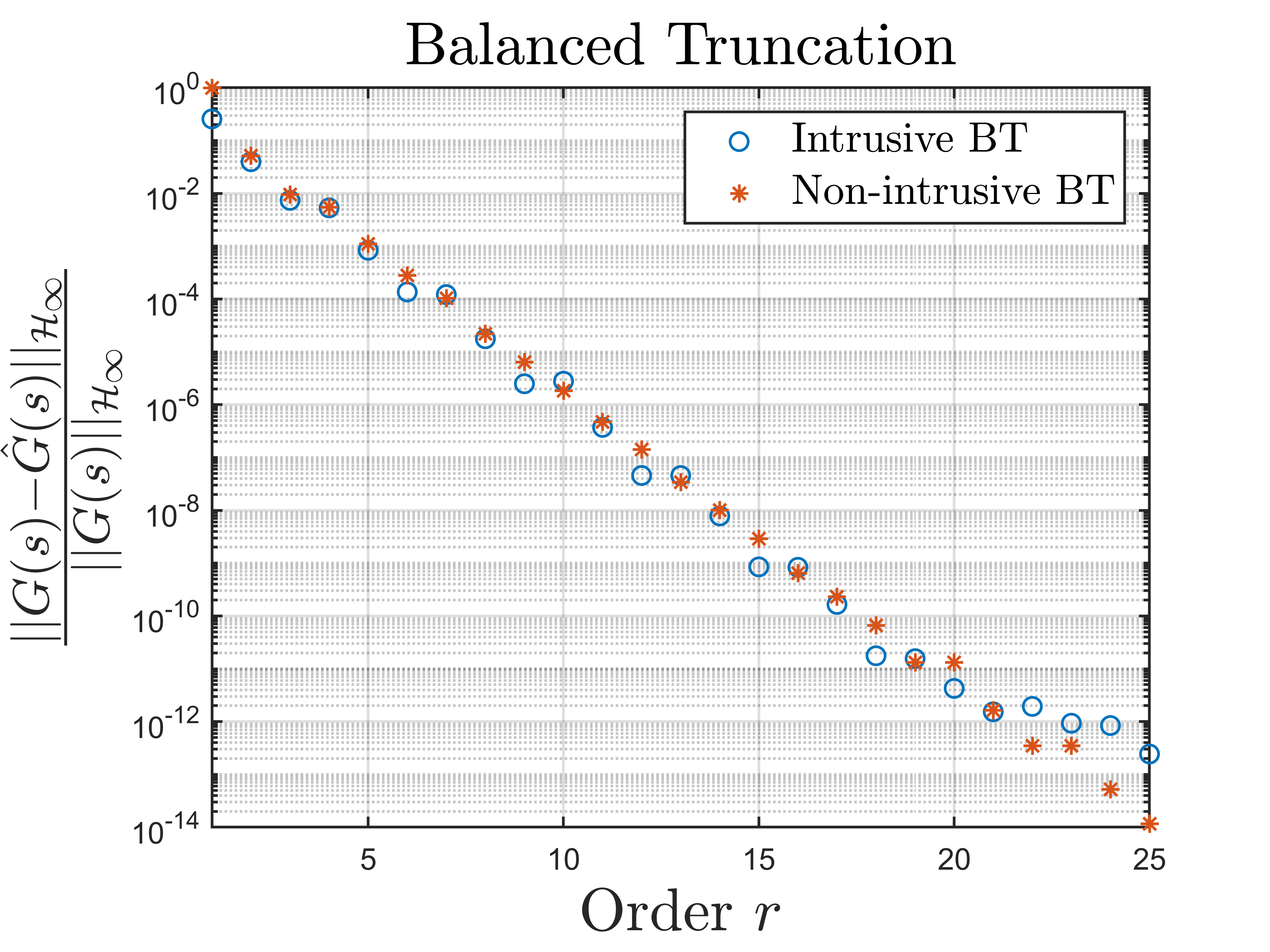}
        \caption{Relative Error Comparison}
    \end{subfigure}
    \caption{Performance Comparison between Intrusive and Non-intrusive BT}\label{fig1}
\end{figure}
\begin{figure}[!h]
    \centering
    \begin{subfigure}{0.48\textwidth}
        \centering
        \includegraphics[width=\linewidth]{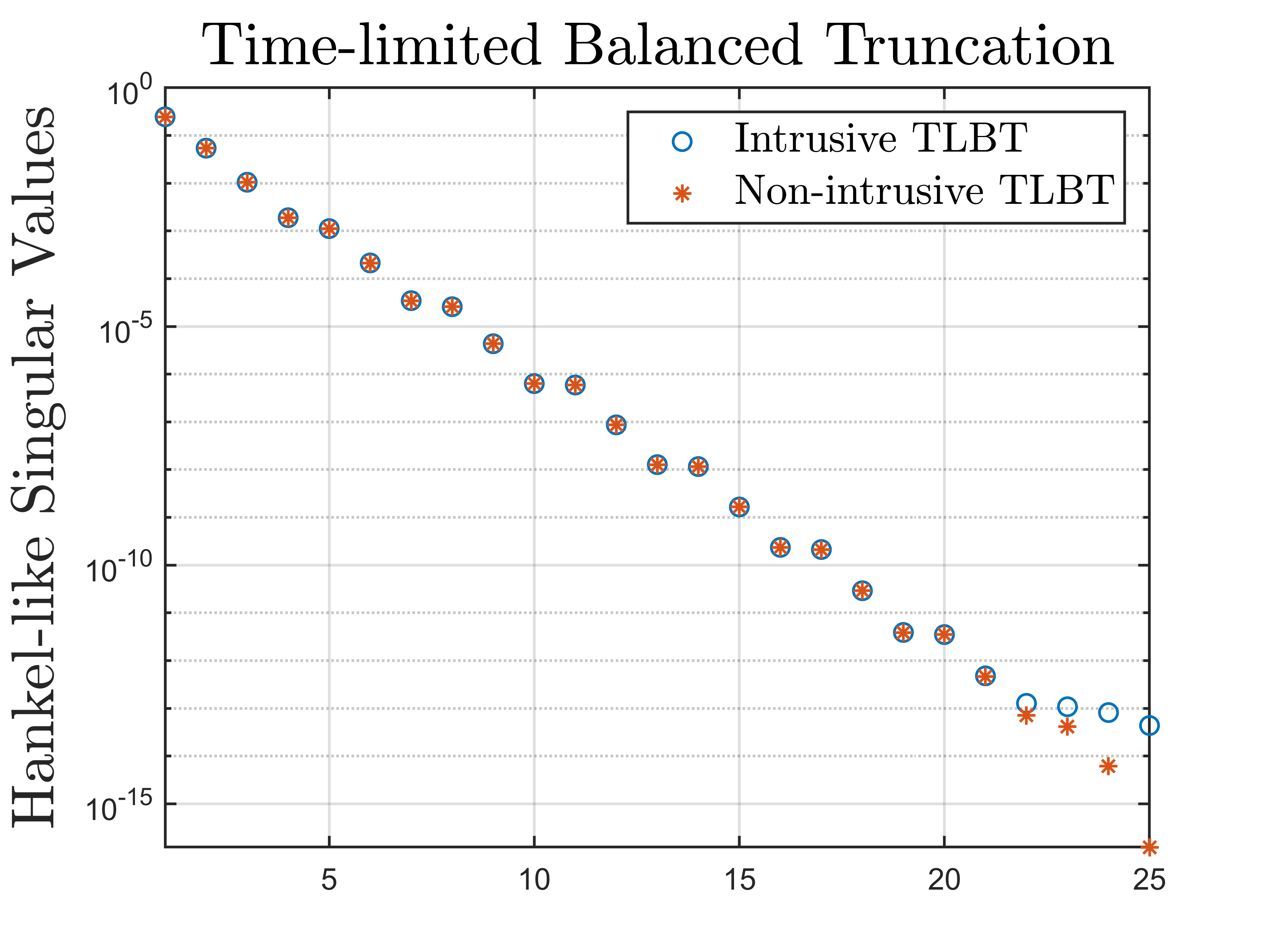}
        \caption{Hankel-like Singular Values Comparison}
    \end{subfigure}
    \hfill
    \begin{subfigure}{0.48\textwidth}
        \centering
        \includegraphics[width=\linewidth]{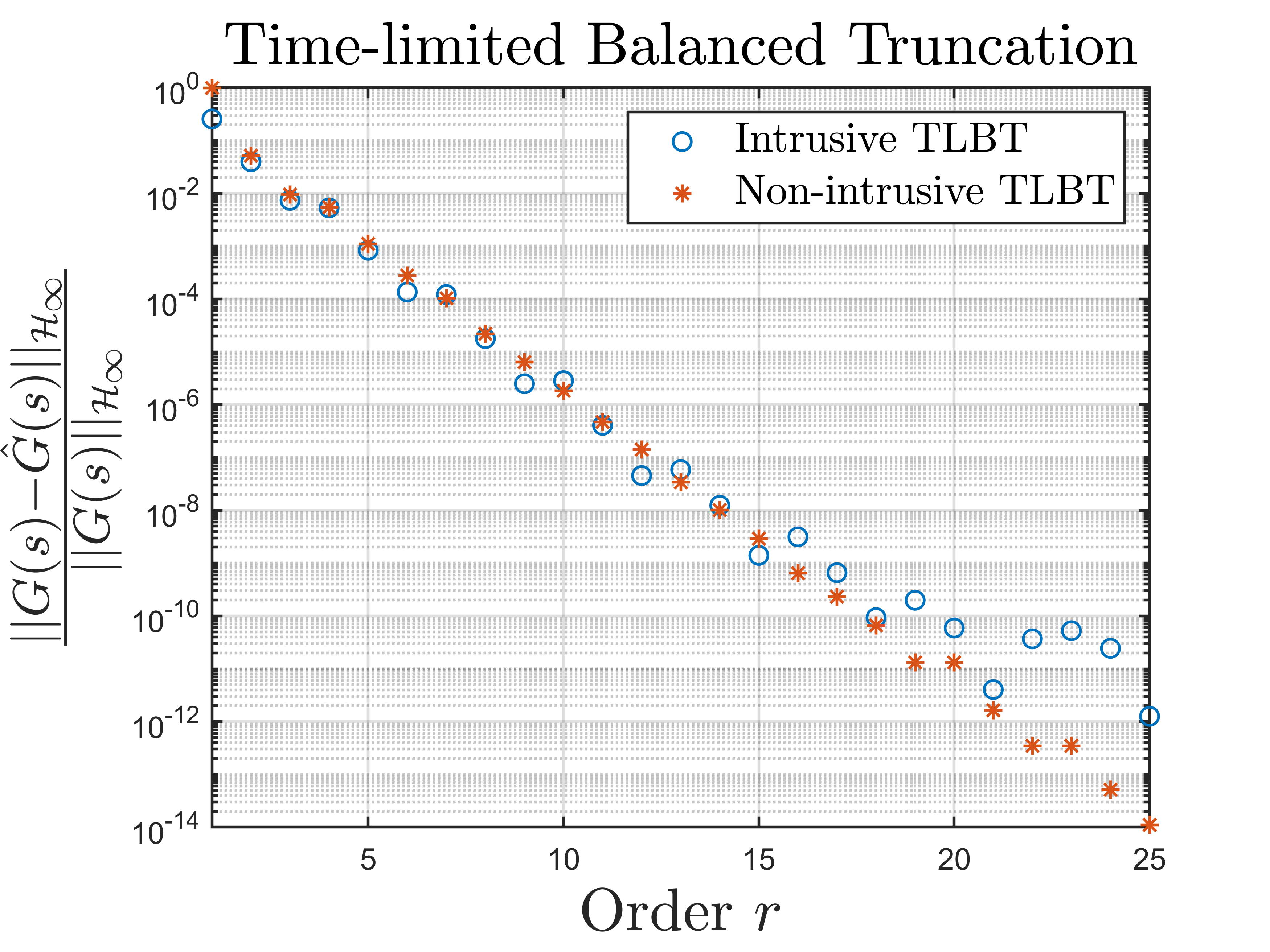}
        \caption{Relative Error Comparison}
    \end{subfigure}
    \caption{Performance Comparison between Intrusive and Non-intrusive TLBT}\label{fig2}
\end{figure}
\begin{figure}[!h]
    \centering
    \begin{subfigure}{0.48\textwidth}
        \centering
        \includegraphics[width=\linewidth]{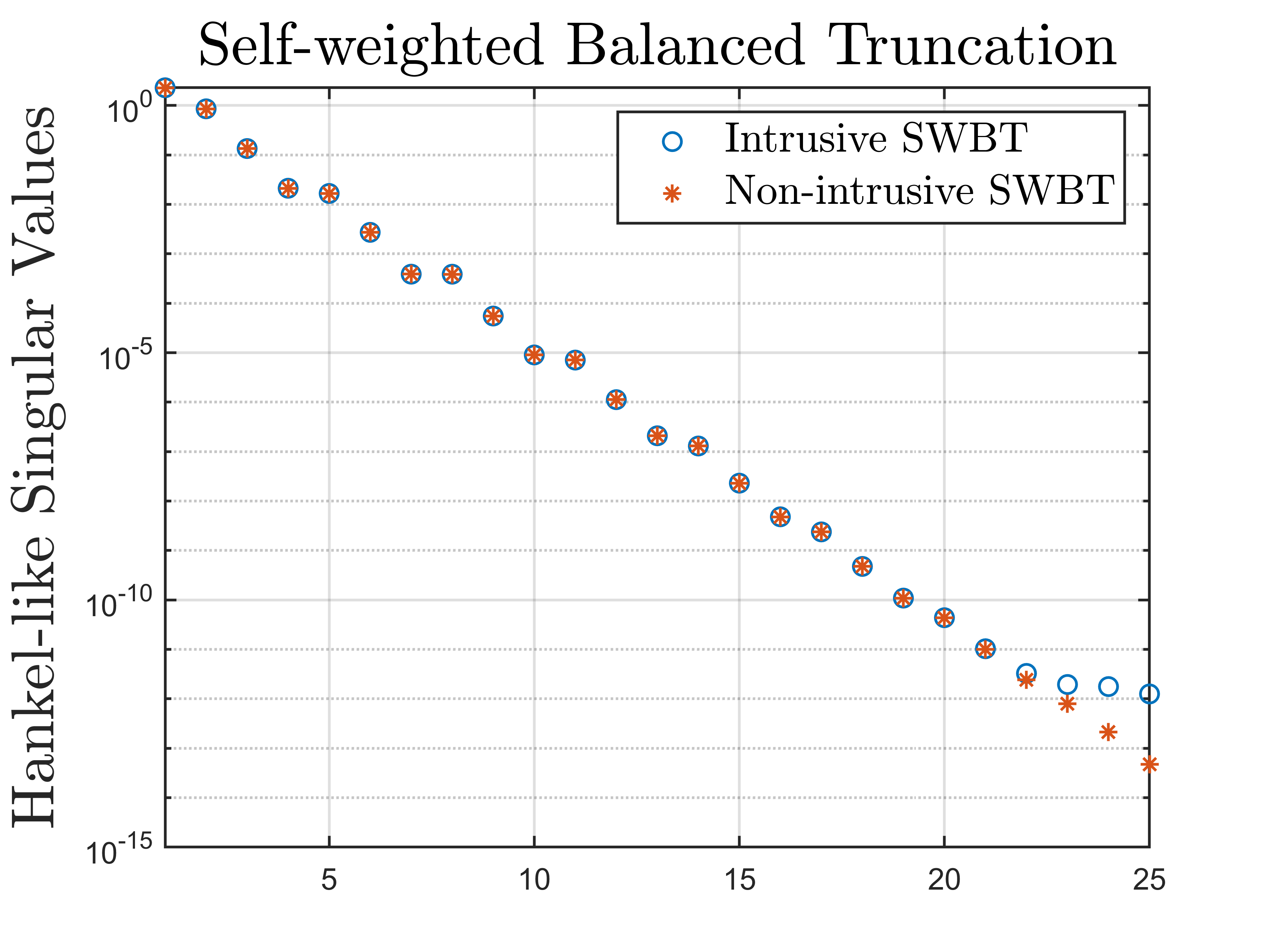}
        \caption{Hankel-like Singular Values Comparison}
    \end{subfigure}
    \hfill
    \begin{subfigure}{0.48\textwidth}
        \centering
        \includegraphics[width=\linewidth]{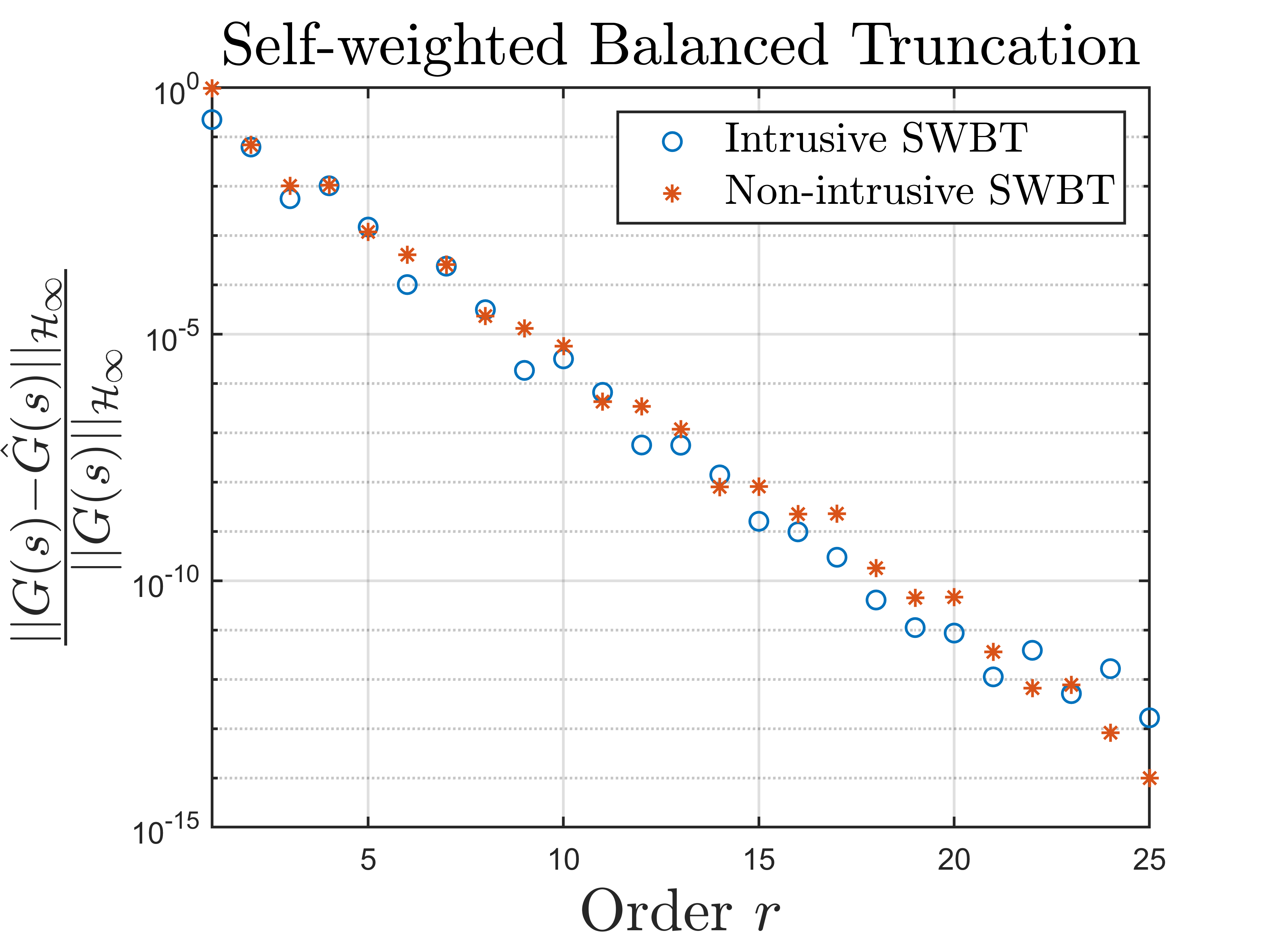}
        \caption{Relative Error Comparison}
    \end{subfigure}
    \caption{Performance Comparison between Intrusive and Non-intrusive SWBT}\label{fig3}
\end{figure}
\begin{figure}[!h]
    \centering
    \begin{subfigure}{0.48\textwidth}
        \centering
        \includegraphics[width=\linewidth]{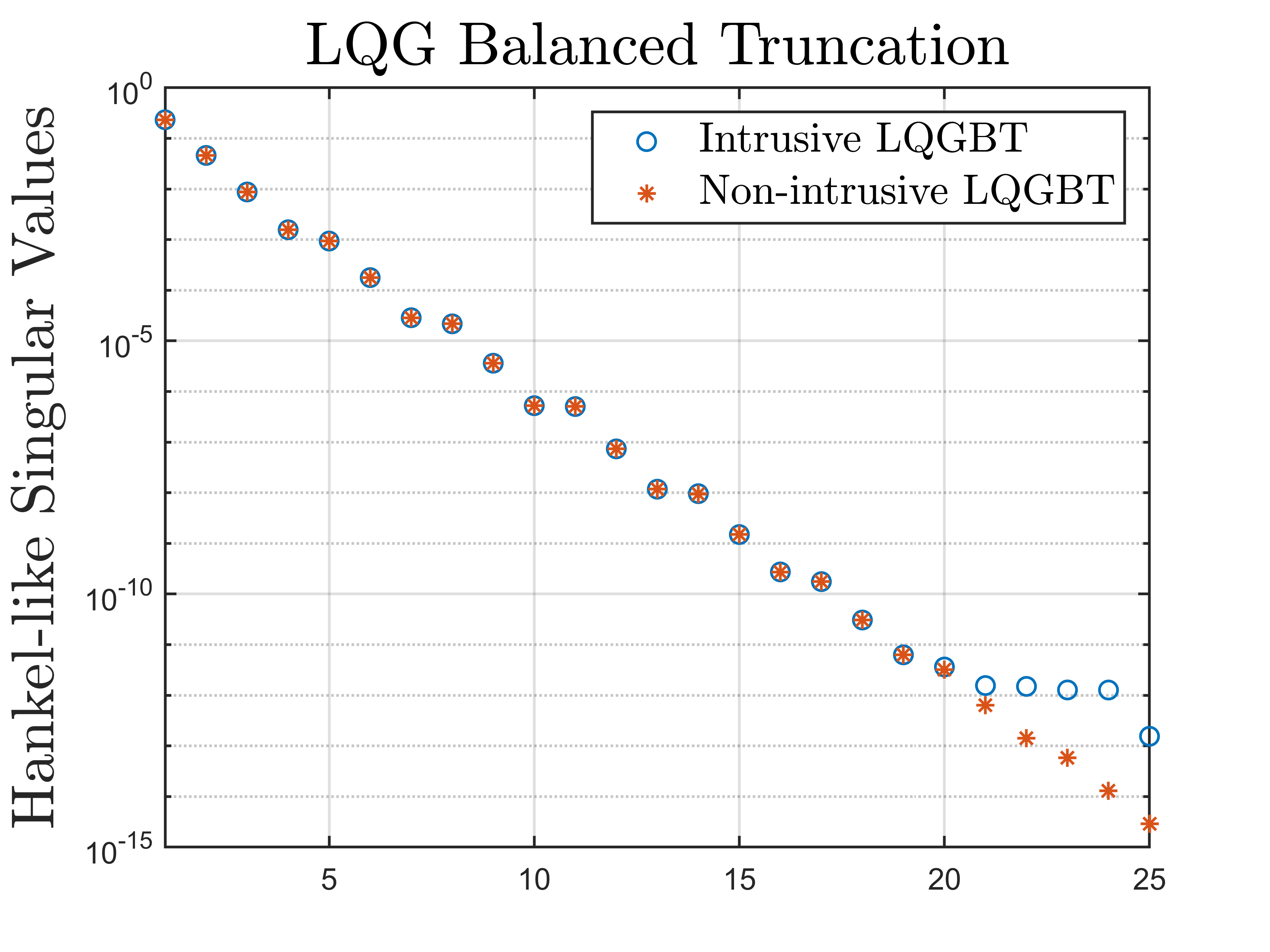}
        \caption{Hankel-like Singular Values Comparison}
    \end{subfigure}
    \hfill
    \begin{subfigure}{0.48\textwidth}
        \centering
        \includegraphics[width=\linewidth]{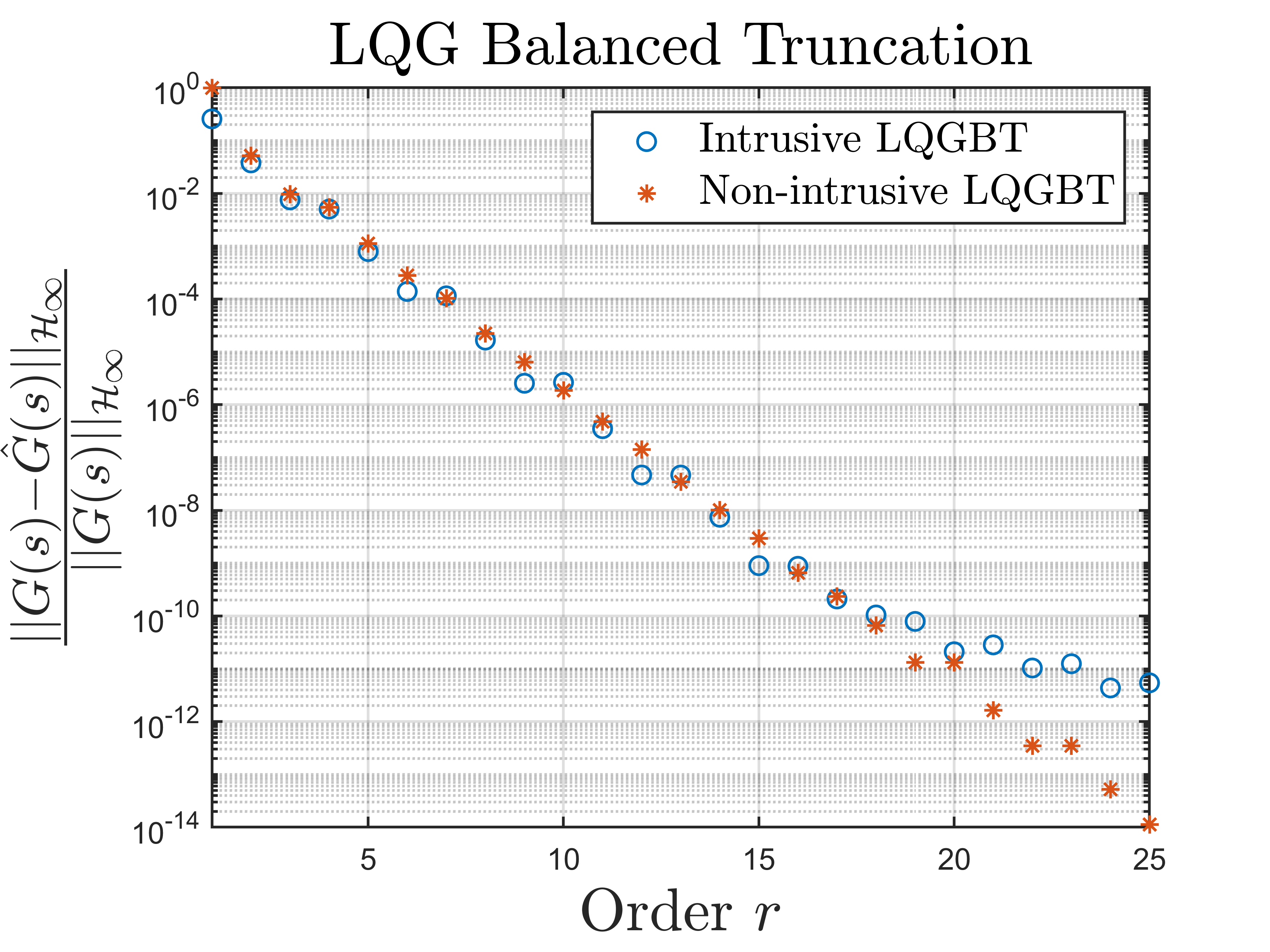}
        \caption{Relative Error Comparison}
    \end{subfigure}
    \caption{Performance Comparison between Intrusive and Non-intrusive LQGBT}\label{fig4}
\end{figure}
\begin{figure}[!h]
    \centering
    \begin{subfigure}{0.48\textwidth}
        \centering
        \includegraphics[width=\linewidth]{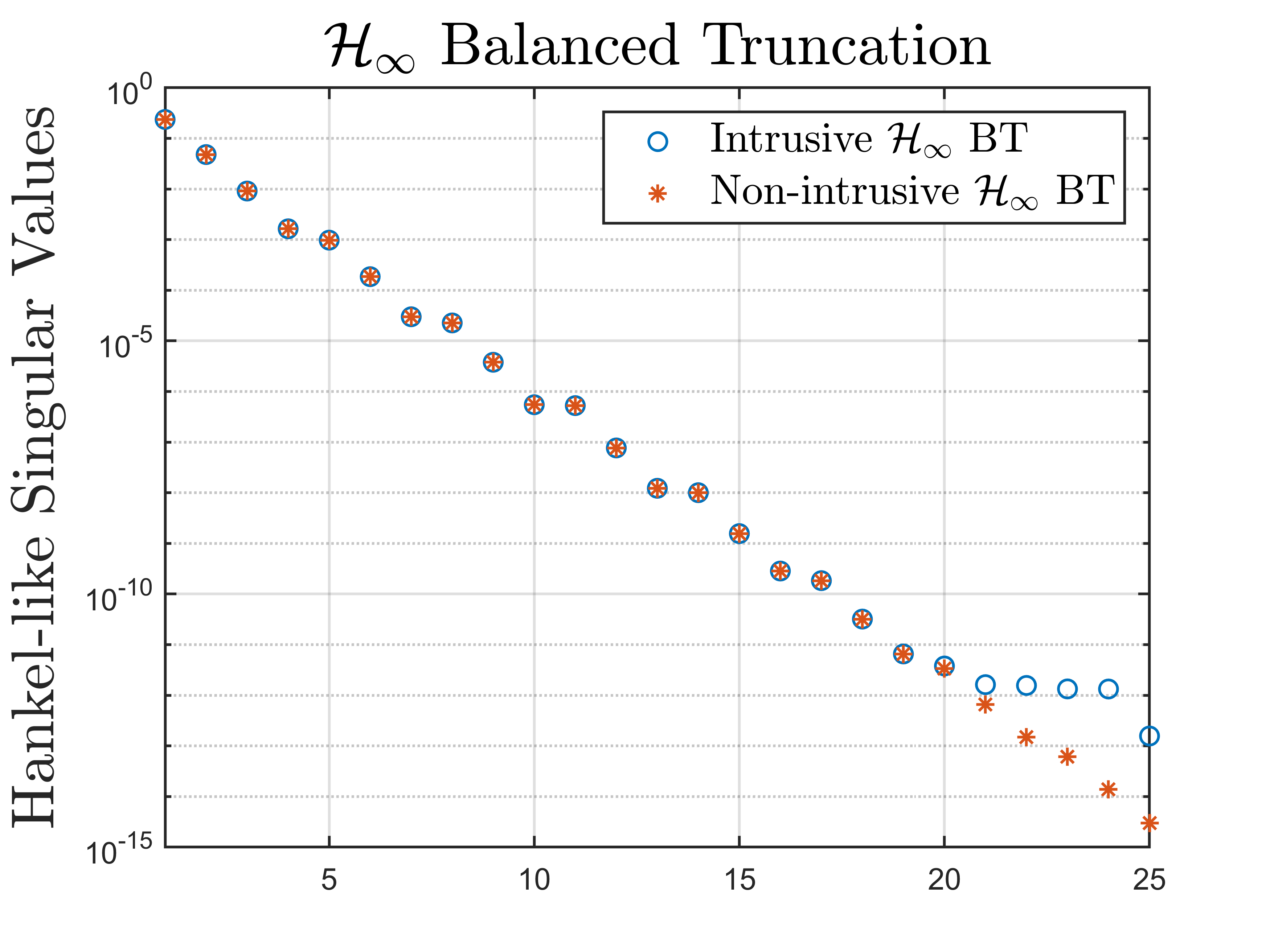}
        \caption{Hankel-like Singular Values Comparison}
    \end{subfigure}
    \hfill
    \begin{subfigure}{0.48\textwidth}
        \centering
        \includegraphics[width=\linewidth]{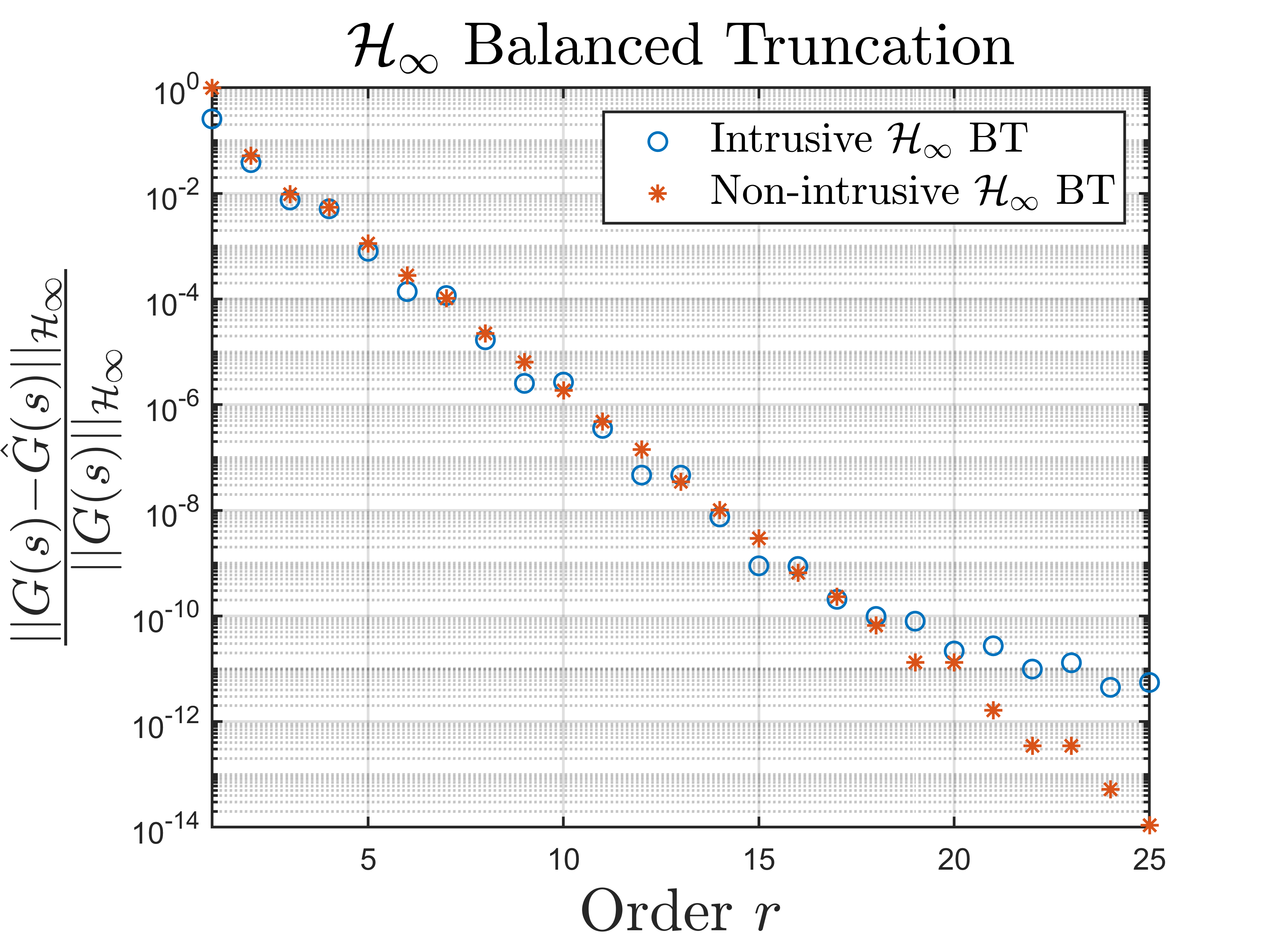}
        \caption{Relative Error Comparison}
    \end{subfigure}
    \caption{Performance Comparison between Intrusive and Non-intrusive $\mathcal{H}_\infty$ BT}\label{fig5}
\end{figure}
\begin{figure}[!h]
    \centering
    \begin{subfigure}{0.48\textwidth}
        \centering
        \includegraphics[width=\linewidth]{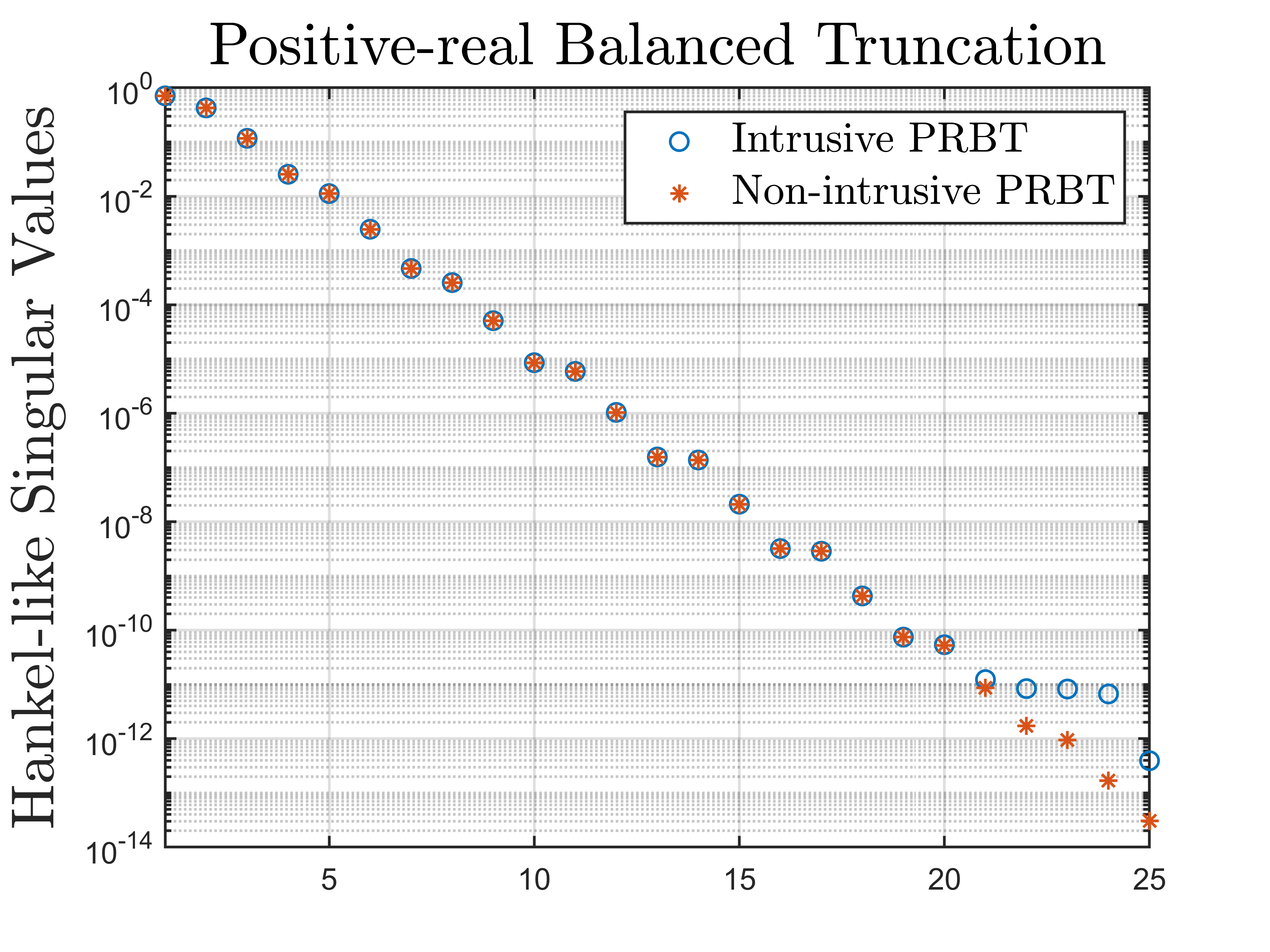}
        \caption{Hankel-like Singular Values Comparison}
    \end{subfigure}
    \hfill
    \begin{subfigure}{0.48\textwidth}
        \centering
        \includegraphics[width=\linewidth]{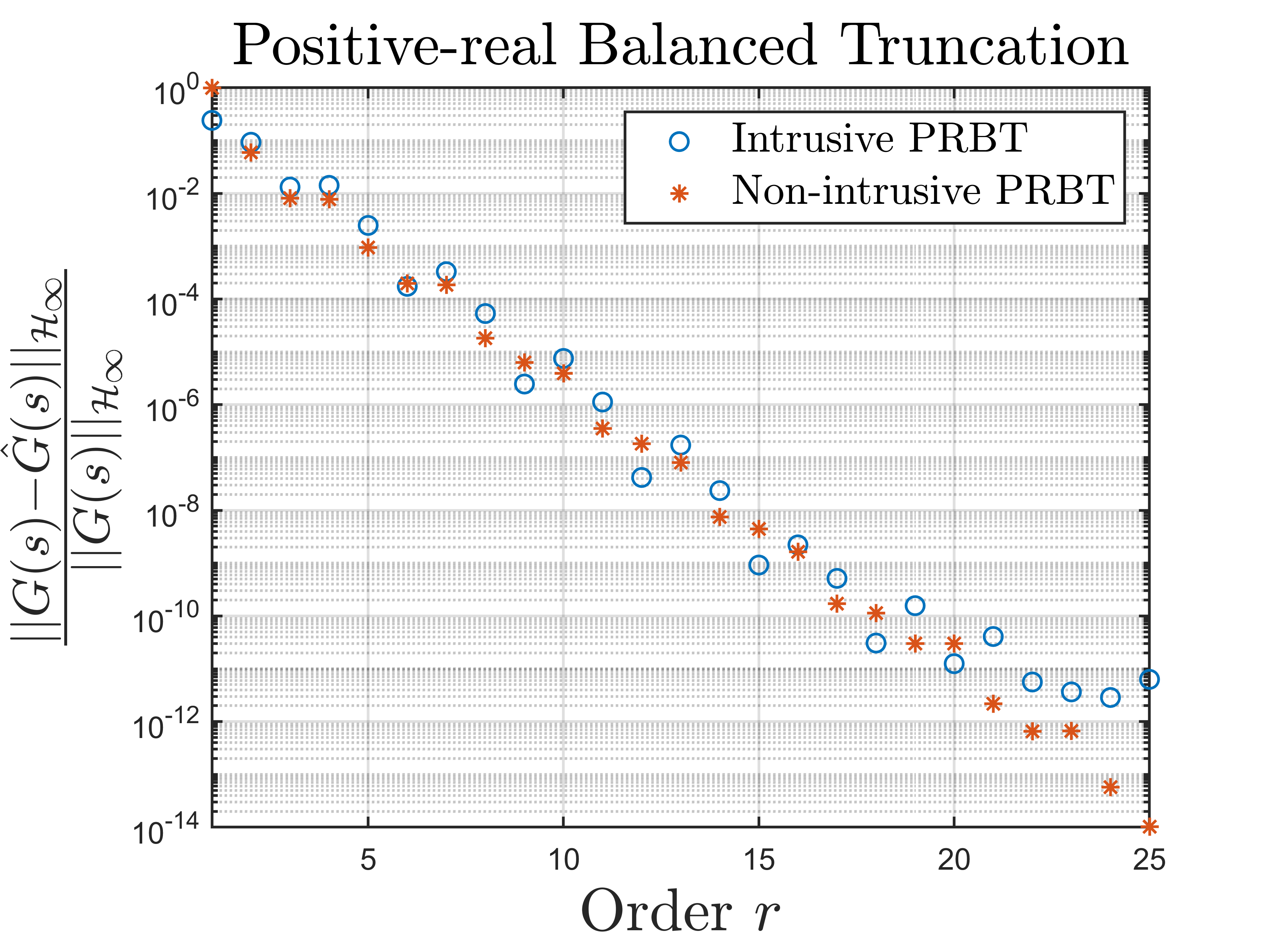}
        \caption{Relative Error Comparison}
    \end{subfigure}
    \caption{Performance Comparison between Intrusive and Non-intrusive PRBT}\label{fig6}
\end{figure}
\begin{figure}[!h]
    \centering
    \begin{subfigure}{0.48\textwidth}
        \centering
        \includegraphics[width=\linewidth]{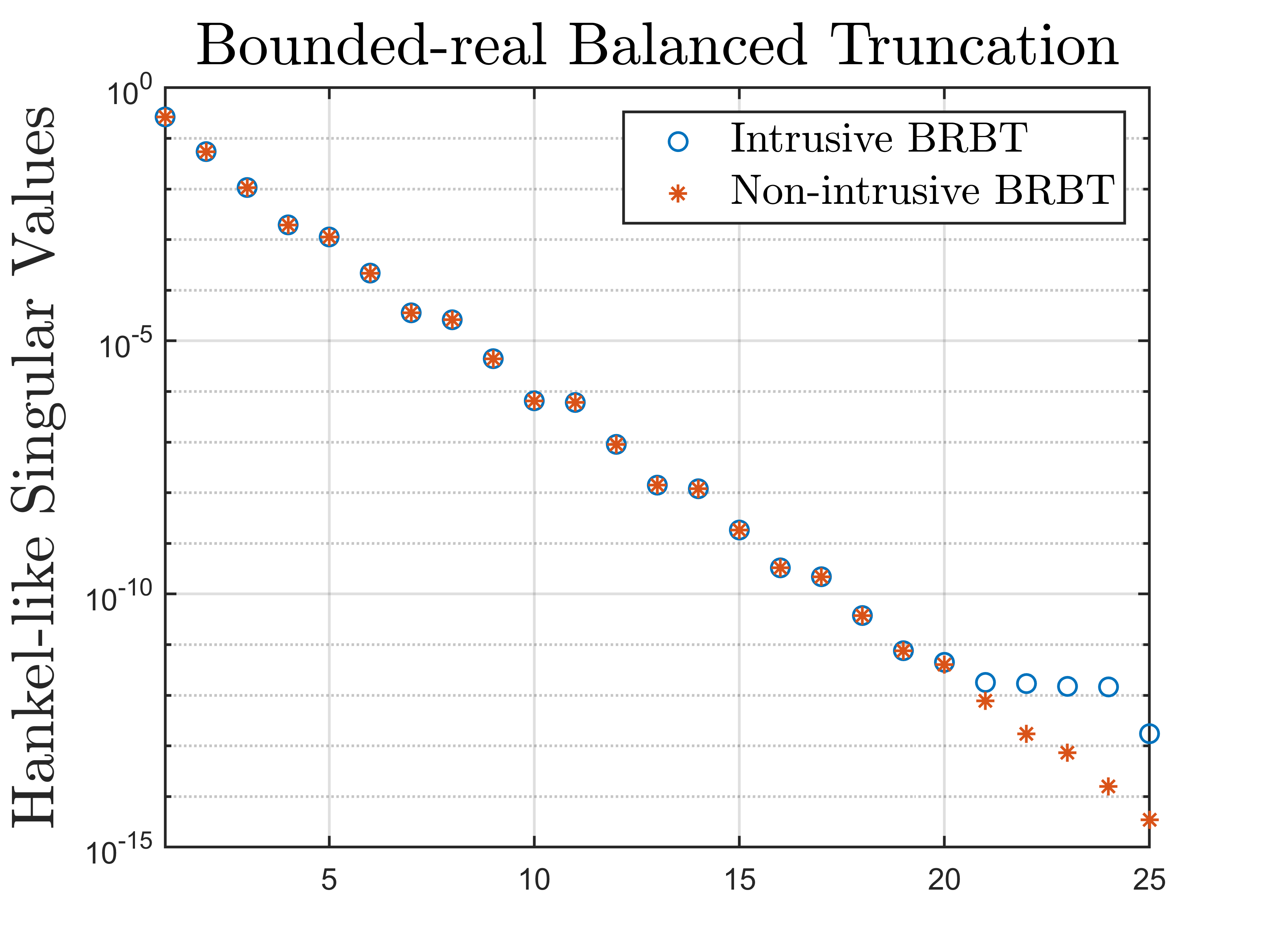}
        \caption{Hankel-like Singular Values Comparison}
    \end{subfigure}
    \hfill
    \begin{subfigure}{0.48\textwidth}
        \centering
        \includegraphics[width=\linewidth]{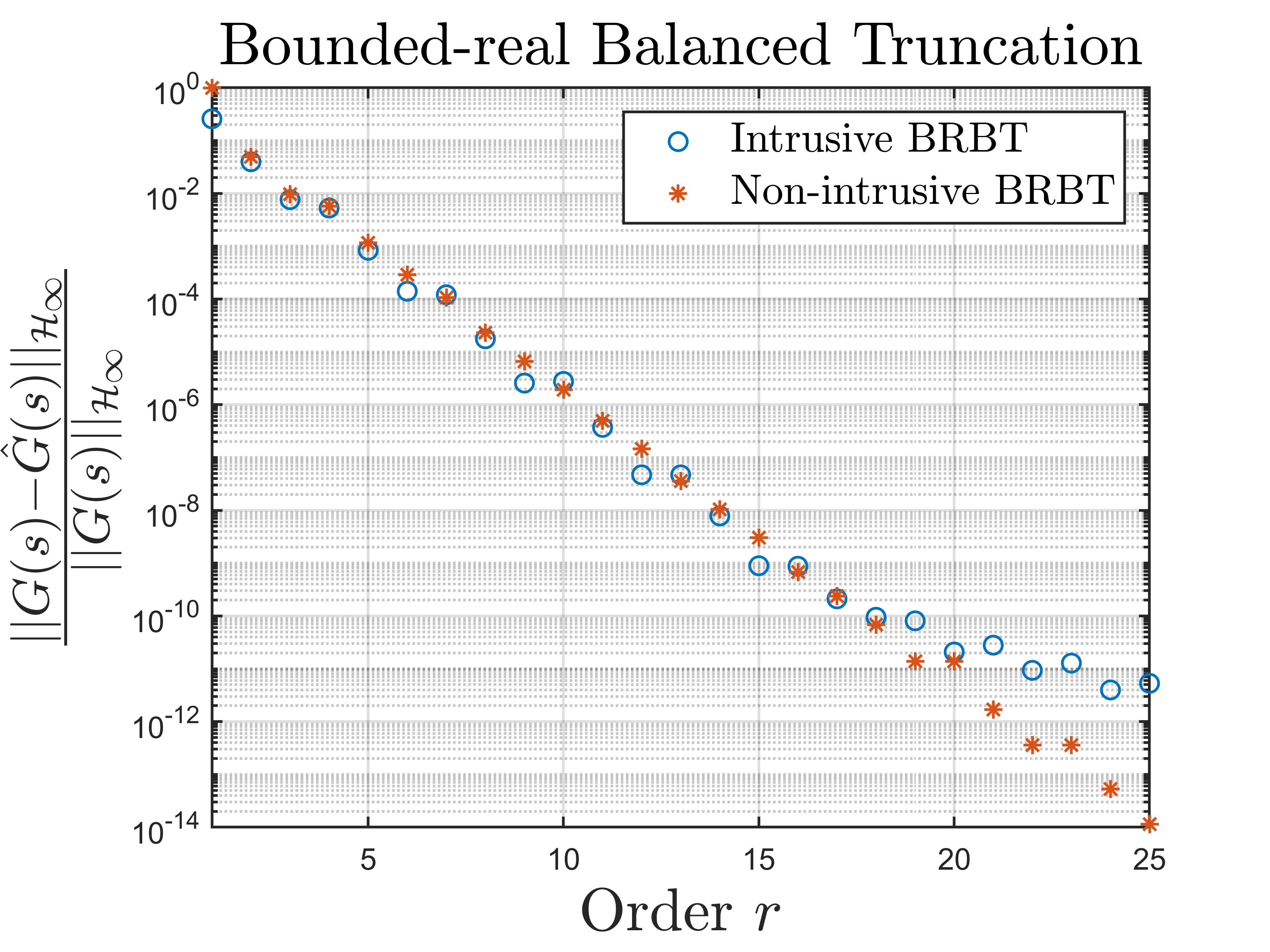}
        \caption{Relative Error Comparison}
    \end{subfigure}
    \caption{Performance Comparison between Intrusive and Non-intrusive BRBT}\label{fig7}
\end{figure}
\begin{figure}[!h]
    \centering
    \begin{subfigure}{0.48\textwidth}
        \centering
        \includegraphics[width=\linewidth]{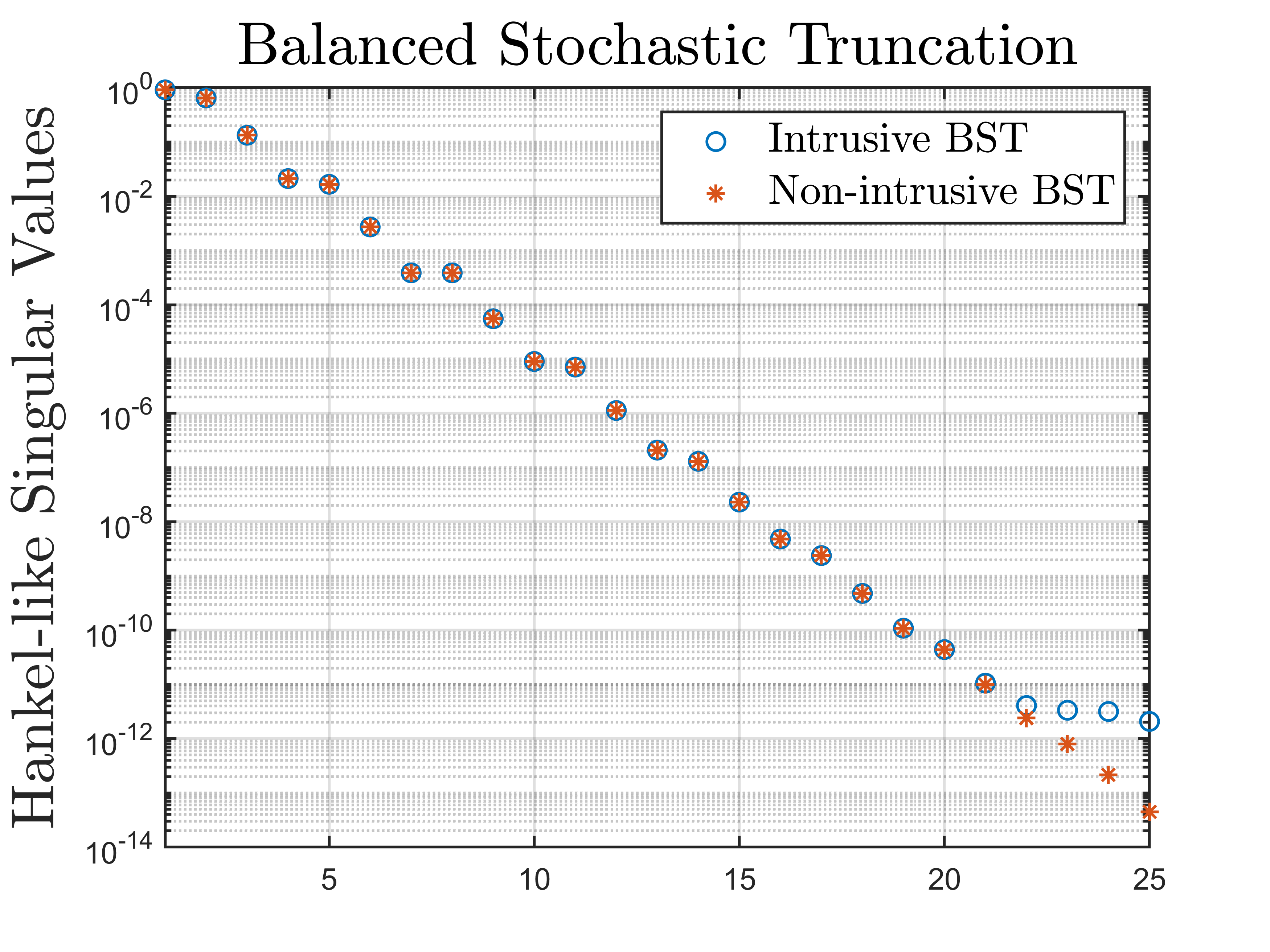}
        \caption{Hankel-like Singular Values Comparison}
    \end{subfigure}
    \hfill
    \begin{subfigure}{0.48\textwidth}
        \centering
        \includegraphics[width=\linewidth]{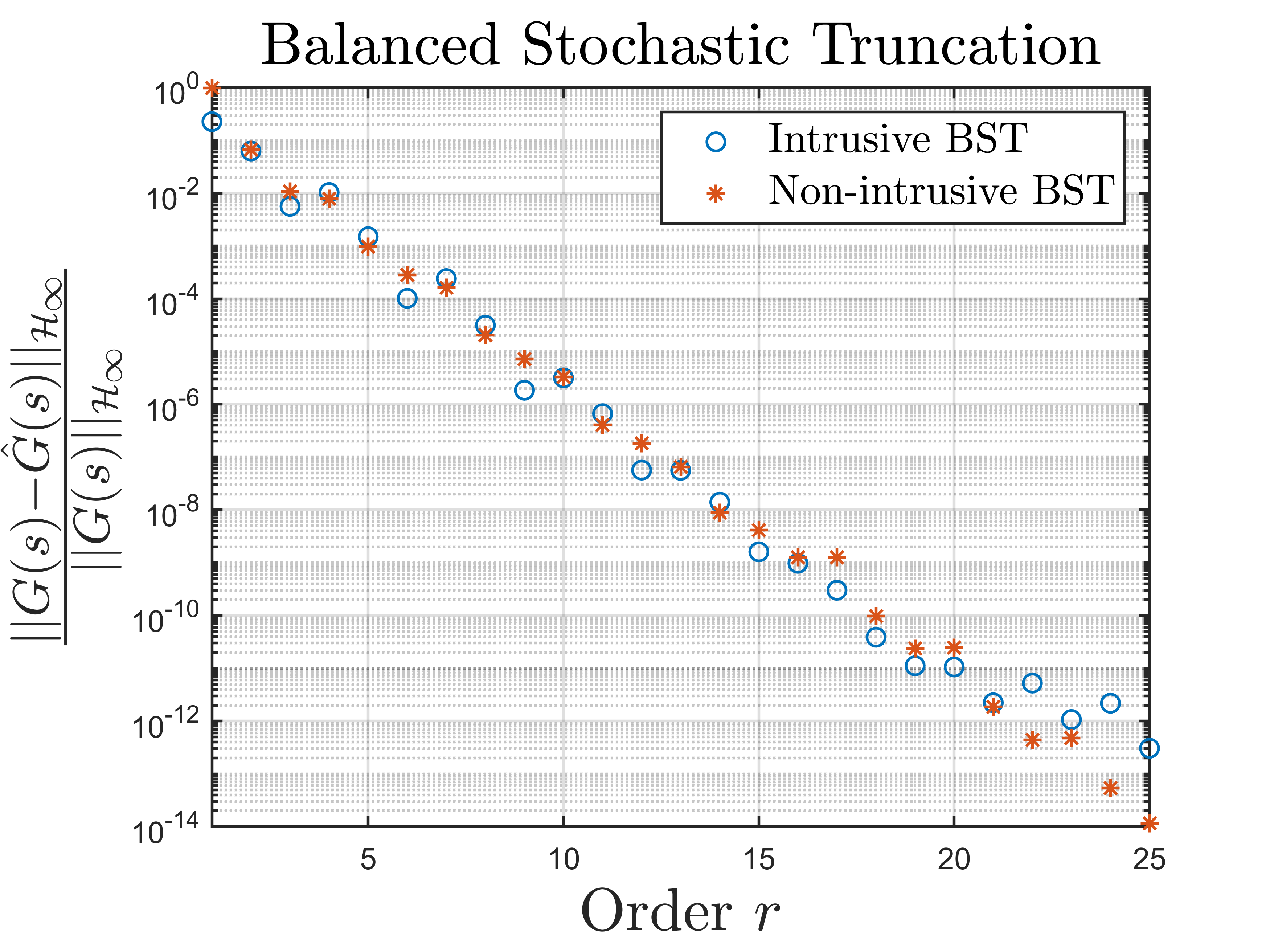}
        \caption{Relative Error Comparison}
    \end{subfigure}
    \caption{Performance Comparison between Intrusive and Non-intrusive BST}\label{fig8}
\end{figure}

The original model has a dip in the frequency domain plot at around $9.63$ rad/sec. To capture this dip, the desired frequency interval in FLBT is set to $[-30,-1]\cup[1,30]$ rad/sec. For FLBT, the sampling points $j\omega_i$ are $25$ logarithmically spaced points between $10^0$ and $10^{1.5}$, with the negative counterparts $-j\omega_i$ also included. The sampling points $j\mu_i$ are identical to $j\sigma_i$. Figure \ref{fig9} compares the Hankel-like singular values and the relative error $\frac{\|G(s)-\hat{G}(s)\|_{\mathcal{H}_\infty}}{\|G(s)\|_{\mathcal{H}_\infty}}$ for FLBT. The $25^{th}$-order ROMs from intrusive FLBT and its non-intrusive counterpart accurately capture the $20$ most dominant Hankel-like singular values. Moreover, non-intrusive FLBT achieves accuracy comparable to intrusive FLBT for ROMs of orders $1$ through $25$.
\begin{figure}[!h]
    \centering
    \begin{subfigure}{0.48\textwidth}
        \centering
        \includegraphics[width=\linewidth]{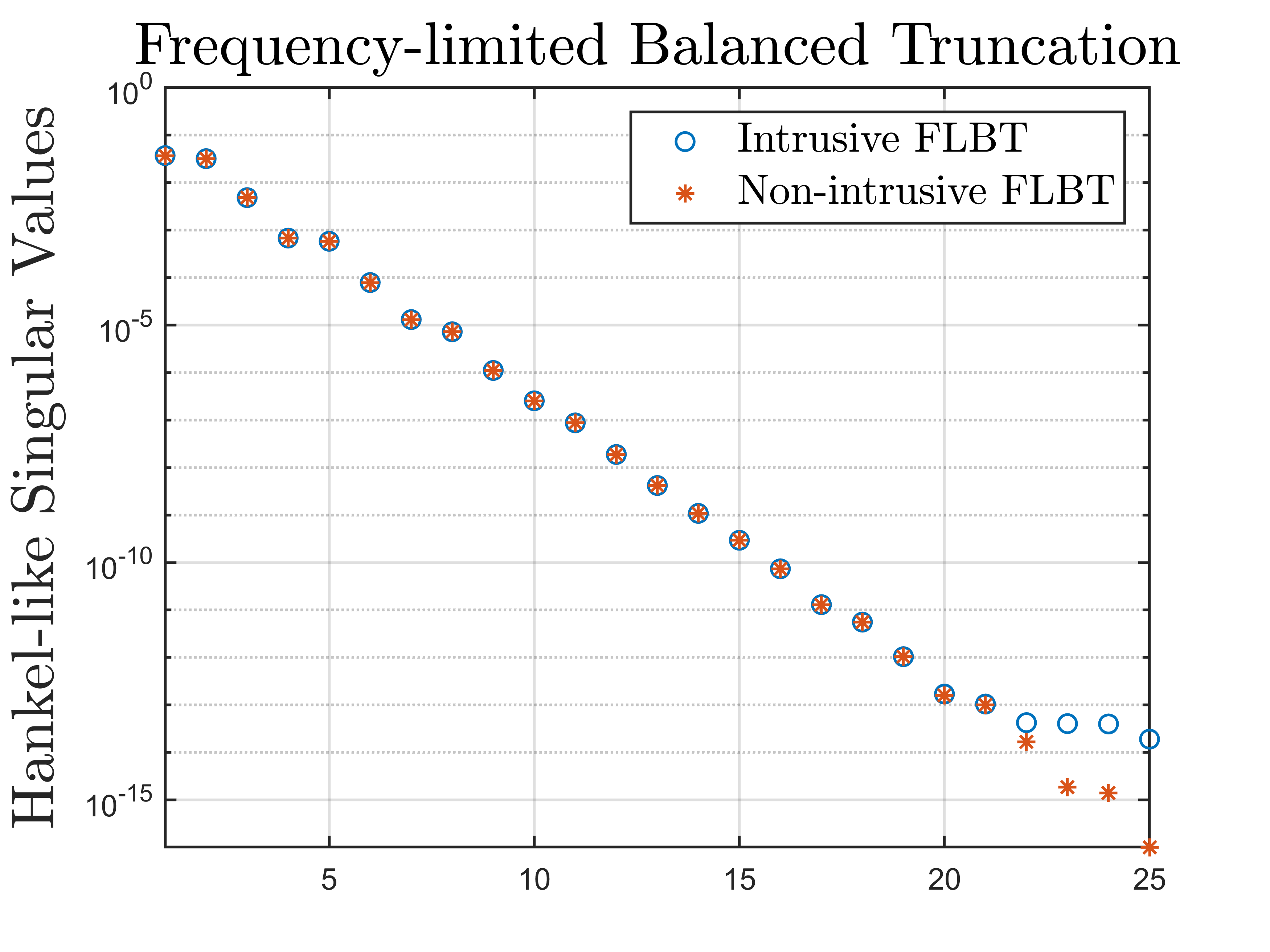}
        \caption{Hankel-like Singular Values Comparison}
    \end{subfigure}
    \hfill
    \begin{subfigure}{0.48\textwidth}
        \centering
        \includegraphics[width=\linewidth]{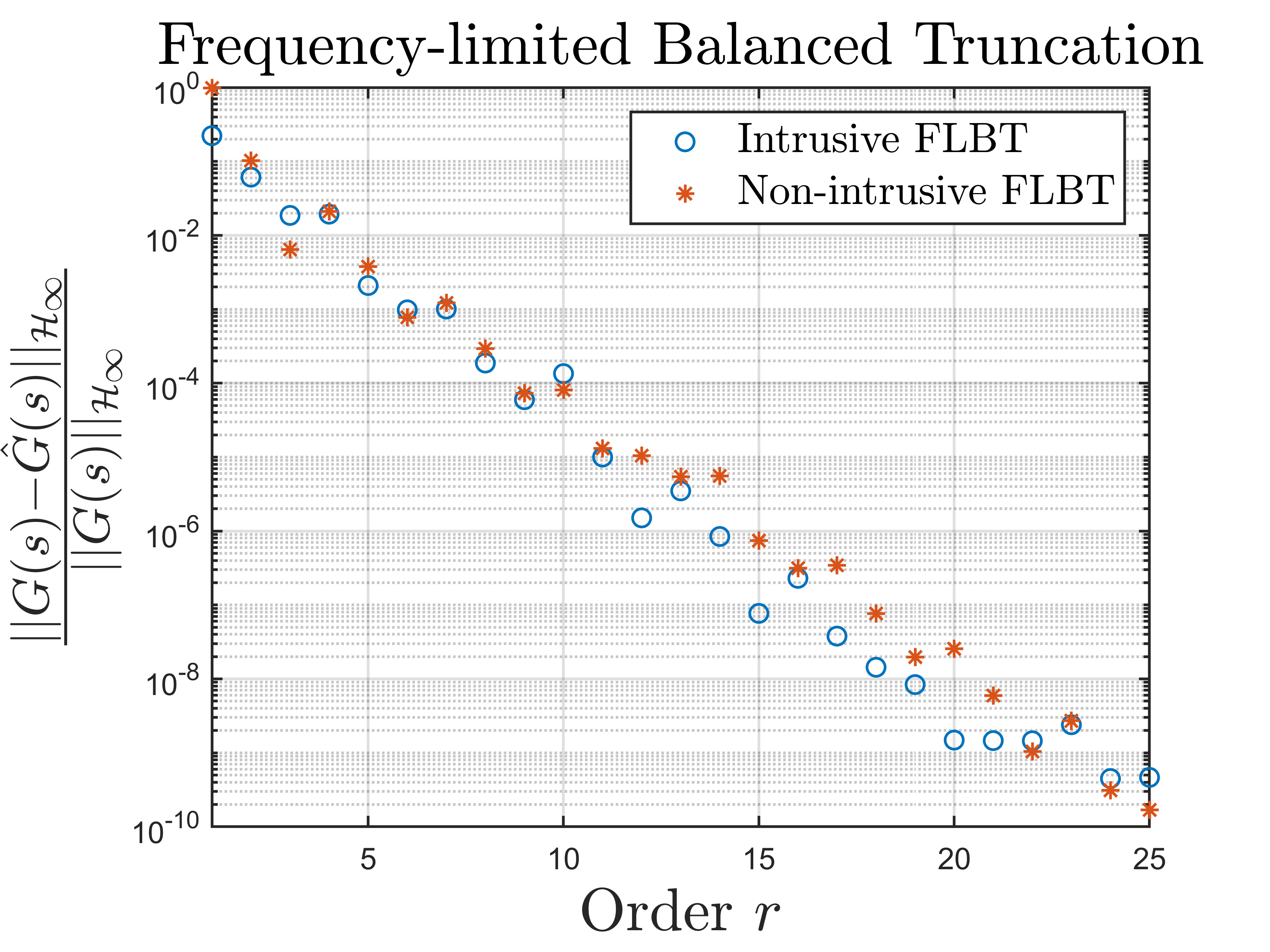}
        \caption{Relative Error Comparison}
    \end{subfigure}
    \caption{Performance Comparison between Intrusive and Non-intrusive FLBT}\label{fig9}
\end{figure}
Furthermore, the frequency domain plots of the original model and $6^{th}$-order ROMs generated by FLBT and non-intrusive FLBT are compared in Figure \ref{fig10}. The ROMs accurately capture the targeted dip in the frequency domain plot.
\begin{figure}[!h]
\centering
\includegraphics[width=\linewidth]{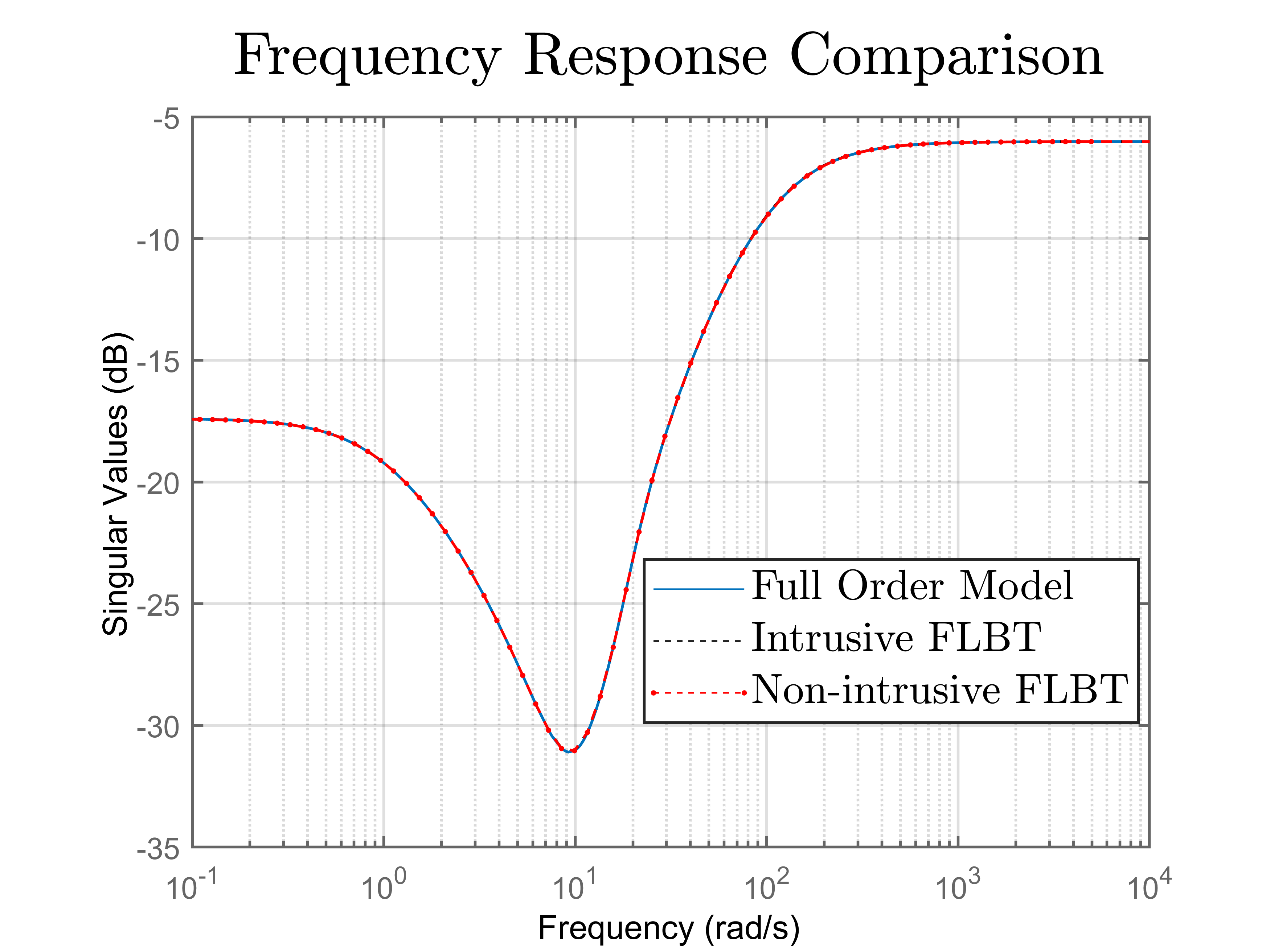}
    \caption{Frequency Domain Plot Comparison}\label{fig10}
\end{figure}
\section{Conclusion}
This paper introduces a projection-based framework for the non-intrusive, data-driven implementation of BT and eight of its generalizations. By constructing projections directly from transfer function samples on the imaginary axis, the proposed approach circumvents the need for spectral factor measurements or right-half-plane samples required by previous methods \cite{zulfiqar2025data,reiter2023generalizations}. The key insight is to approximate Gramians implicitly through projection rather than via numerical quadrature. The proposed framework avoids explicitly solving any projected matrix equations. Instead, the projected matrix equations are approximated via analytical expressions due to block-diagonally-dominant solutions. Numerical results demonstrate that the proposed data-driven implementations achieve accuracy comparable to intrusive methods and accurately capture dominant Hankel-like singular values across all considered balancing techniques.
\section*{Appendix A: Proof for Diagonal Dominance of $X_p$}
\begin{proof}
The Sylvester equation \eqref{Xp} has a unique solution because the spectra of $S_p^{*}$ and $-S_v$ are disjoint: $\operatorname{spec}(S_p^{*}) = \{\epsilon - j\omega_i\}_{i=1}^{n_s}$ (each eigenvalue with multiplicity $m$) and $\operatorname{spec}(-S_v) = \{-j\omega_i\}_{i=1}^{n_s}$ (each with multiplicity $m$) satisfy $\operatorname{Re}(\epsilon - j\omega_i) = \epsilon > 0 = \operatorname{Re}(-j\omega_i)$ for all $i$.

Since $S_v$, $S_p^{*}$, and $L_v^TL_v = \mathbf{1}_{n_s}\mathbf{1}_{n_s}^{\mathsf{T}} \otimes I_m$ share the Kronecker structure $\cdot \otimes I_m$, we seek a solution of the form $X_p = Y \otimes I_m$ with $Y \in \mathbb{C}^{n_s \times n_s}$. Substituting into \eqref{Xp} and exploiting the mixed-product property of Kronecker products yields the reduced equation
\begin{equation}
\operatorname{diag}(\epsilon - j\omega_1, \dots, \epsilon - j\omega_{n_s}) \, Y + Y \, \operatorname{diag}(j\omega_1, \dots, j\omega_{n_s}) = \mathbf{1}_{n_s}\mathbf{1}_{n_s}^{\mathsf{T}}.
\end{equation}
Entry-wise for $i,j = 1,\dots,n_s$:
\begin{equation}
(\epsilon - j\omega_i) Y_{ij} + Y_{ij} (j\omega_j) = 1
\;\Longrightarrow\;
Y_{ij} = \frac{1}{\epsilon + j(\omega_j - \omega_i)},
\end{equation}
which is well-defined for all $\epsilon > 0$ since $\operatorname{Re}(\epsilon + j(\omega_j - \omega_i)) = \epsilon > 0$.

For diagonal entries ($i=j$), $Y_{ii} = 1/\epsilon$. For off-diagonal entries ($i \neq j$),
\begin{equation}
|Y_{ij}| = \frac{1}{\sqrt{\epsilon^2 + (\omega_j - \omega_i)^2}} \leq \frac{1}{|\omega_j - \omega_i|} \leq \frac{1}{\Delta_{\min}}.
\end{equation}

Define $E := Y - \frac{1}{\epsilon}I_{n_s}$, so $E_{ii}=0$ and $E_{ij}=Y_{ij}$ for $i \neq j$. Using the Frobenius norm identity $\|A \otimes B\|_F = \|A\|_F \|B\|_F$,
\begin{align}
\|X_p - \widetilde{X}_p\|_F &= \|E \otimes I_m\|_F = \|E\|_F \, \|I_m\|_F = \|E\|_F \sqrt{m}, \\
\|E\|_F^2 &= \sum_{i \neq j} |Y_{ij}|^2 \leq \sum_{i \neq j} \frac{1}{\Delta_{\min}^2} = \frac{n_s(n_s-1)}{\Delta_{\min}^2}
\;\Longrightarrow\;\nonumber\\
&\hspace*{2cm}\|E\|_F \leq \frac{\sqrt{n_s(n_s-1)}}{\Delta_{\min}}, \\
\|\widetilde{X}_p\|_F &= \frac{1}{\epsilon} \|I_{mn_s}\|_F = \frac{\sqrt{mn_s}}{\epsilon}.
\end{align}
Therefore,
\begin{equation}
\frac{\|X_p - \widetilde{X}_p\|_F}{\|\widetilde{X}_p\|_F}
\leq \frac{\sqrt{n_s(n_s-1)}/\Delta_{\min} \cdot \sqrt{m}}{\sqrt{mn_s}/\epsilon}
= \frac{\epsilon \sqrt{n_s-1}}{\Delta_{\min}},
\end{equation}
proving \eqref{eq:relative_error}. The tolerance condition follows immediately.

For diagonal dominance, observe that for each row $i$,
\begin{equation}
|Y_{ii}| - \sum_{j \neq i} |Y_{ij}| \geq \frac{1}{\epsilon} - \sum_{j \neq i} \frac{1}{|\omega_j - \omega_i|}
\geq \frac{1}{\epsilon} - \frac{n_s-1}{\Delta_{\min}}.
\end{equation}
This quantity is positive when $\epsilon < \Delta_{\min}/(n_s-1)$, establishing strict diagonal dominance of $Y$ and hence of $X_p = Y \otimes I_m$.
\end{proof}
\section*{Appendix B: Proof for Block Diagonal Dominance of $X_z$}
\begin{proof}
Because $S_z^{*}$ and $S_v$ are block-diagonal with scalar multiples of $I_m$, the Sylvester equation decouples into independent $m\times m$ block equations. Direct computation using $L_v = \mathbf{1}_{n_s}^{\mathsf{T}} \otimes I_m$ shows that the $(i,j)$-block of the right-hand side equals $M_j$ (independent of $i$). Hence each block satisfies
\begin{equation*}
\bigl(\epsilon + j(\omega_j - \omega_i)\bigr) X_{ij} = M_j,
\end{equation*}
and since the scalar coefficient is nonzero for all $i,j$ (as $\epsilon>0$ and frequencies are distinct), the unique solution is
\begin{equation*}
X_{ij} = \frac{M_j}{\epsilon + j(\omega_j - \omega_i)}.
\end{equation*}
For diagonal blocks ($i=j$), $X_{ii} = M_i/\epsilon$; for off-diagonal blocks ($i\neq j$),
\begin{equation*}
\|X_{ij}\| = \frac{\|M_j\|}{\sqrt{\epsilon^{2} + (\omega_j - \omega_i)^{2}}}.
\end{equation*}
The block diagonal dominance condition for row $i$ is therefore
\begin{equation*}
\frac{\|M_i\|}{\epsilon} > \sum_{j\neq i} \frac{\|M_j\|}{\sqrt{\epsilon^{2} + (\omega_j - \omega_i)^{2}}}.
\end{equation*}
Multiplying by $\epsilon>0$ and using $\sqrt{\epsilon^{2} + (\omega_j - \omega_i)^{2}} \geq |\omega_j - \omega_i| \geq \Delta_{\mathrm{min}}$ for $i\neq j$ yields the sufficient condition
\begin{equation*}
\|M_i\| > \frac{\epsilon}{\Delta_{\mathrm{min}}} \sum_{j\neq i} \|M_j\|,
\end{equation*}
which is equivalent to
\begin{equation*}
\epsilon < \Delta_{\mathrm{min}} \cdot \frac{\|M_i\|}{\displaystyle\sum_{j\neq i} \|M_j\|}.
\end{equation*}
Requiring this for all rows $i$ gives the stated bound on $\epsilon$. The bound on the relative off-diagonal weight follows immediately from the same inequalities.
\end{proof}
\section*{Appendix C: Proof for Block Diagonal Dominance of $X_{sp}$}
\begin{proof}
The Sylvester equation admits the block-wise representation: for each $1 \leq i,j \leq n_s$,
\begin{equation}
\bigl(\epsilon + j(\omega_j - \omega_i)\bigr) X_{ij} + \sum_{k=1}^{n_s} X_{ik} Q_{kj} = I_m.
\label{eq:block_sylvester}
\end{equation}
Define $\Delta_{ij} := \epsilon + j(\omega_j - \omega_i)$. By construction,
\[
|\Delta_{ii}| = \epsilon, \qquad |\Delta_{ij}| = \sqrt{\epsilon^2 + (\omega_j - \omega_i)^2} \geq |\omega_j - \omega_i| \geq \Delta_{\min} \quad \text{for } i \neq j.
\label{eq:delta_bounds}
\]

Taking spectral norms in \eqref{eq:block_sylvester} and applying the triangle inequality yields
\[
|\Delta_{ij}| \, \|X_{ij}\|_2 \leq 1 + \sum_{k=1}^{n_s} \|X_{ik}\|_2 \, \|Q_{kj}\|_2.
\]
Using the assumptions $\|Q_{kj}\|_2 \leq \eta \epsilon$ for all $k,j$, so
\begin{equation}
|\Delta_{ij}| \, \|X_{ij}\|_2 \leq 1 + \eta \epsilon \sum_{k=1}^{n_s} \|X_{ik}\|_2.
\label{eq:norm_ineq}
\end{equation}

For diagonal blocks ($i=j$), \eqref{eq:norm_ineq} and \eqref{eq:delta_bounds} give
\[
\epsilon \|X_{ii}\|_2 \leq 1 + \eta \epsilon r_i, \qquad r_i := \sum_{k=1}^{n_s} \|X_{ik}\|_2.
\]
Hence
\begin{equation}
\|X_{ii}\|_2 \leq \frac{1}{\epsilon} + \eta r_i.
\label{eq:diag_upper}
\end{equation}
For off-diagonal blocks ($i \neq j$),
\begin{equation}
\|X_{ij}\|_2 \leq \frac{1}{\Delta_{\min}} + \frac{\eta \epsilon}{\Delta_{\min}} r_i.
\label{eq:offdiag_upper}
\end{equation}
Summing \eqref{eq:offdiag_upper} over $j \neq i$,
\begin{equation}
\sum_{\substack{j=1 \\ j \neq i}}^{n_s} \|X_{ij}\|_2 \leq \frac{n_s-1}{\Delta_{\min}} + \frac{(n_s-1)\eta\epsilon}{\Delta_{\min}} r_i.
\label{eq:offdiag_sum}
\end{equation}
Combining $r_i = \|X_{ii}\|_2 + \sum_{j \neq i} \|X_{ij}\|_2$ with \eqref{eq:diag_upper} and \eqref{eq:offdiag_sum},
\[
r_i \leq \frac{1}{\epsilon} + \eta r_i + \frac{n_s-1}{\Delta_{\min}} + \frac{(n_s-1)\eta\epsilon}{\Delta_{\min}} r_i.
\]
Rearranging and using assumption (3) ($\epsilon \leq \Delta_{\min}/(2n_s)$ implies $(n_s-1)\epsilon/\Delta_{\min} \leq 1/2$),
\[
r_i \Bigl(1 - \eta - \tfrac{1}{2}\eta\Bigr) \leq \frac{1}{\epsilon} + \frac{n_s-1}{\Delta_{\min}}
\quad \Longrightarrow \quad
r_i \bigl(1 - \tfrac{3}{2}\eta\bigr) \leq \frac{1}{\epsilon} + \frac{n_s-1}{\Delta_{\min}}.
\]
Since $\eta \leq 1/(6n_s) \leq 1/12$, we have $1 - \tfrac{3}{2}\eta \geq 7/8$. Moreover, $\epsilon \leq \Delta_{\min}/(2n_s)$ implies $(n_s-1)/\Delta_{\min} \leq 1/(2\epsilon)$. Thus,
\begin{equation}
r_i \leq \frac{8}{7} \Bigl(\frac{1}{\epsilon} + \frac{1}{2\epsilon}\Bigr) = \frac{12}{7\epsilon} < \frac{2}{\epsilon}.
\label{eq:ri_bound}
\end{equation}
Consequently,
\begin{equation}
\eta \epsilon r_i < 2\eta \leq \frac{1}{3n_s} \leq \frac{1}{6} < 1.
\label{eq:small_pert}
\end{equation}

Rewriting \eqref{eq:block_sylvester} for $i=j$ as $\epsilon X_{ii} = I_m - \sum_{k=1}^{n_s} X_{ik} Q_{ki}$ and applying the reverse triangle inequality with \eqref{eq:small_pert},
\begin{equation}
\|X_{ii}\|_2 \geq \frac{1}{\epsilon} \Bigl(1 - \sum_{k=1}^{n_s} \|X_{ik}\|_2 \, \|Q_{ki}\|_2 \Bigr) \geq \frac{1}{\epsilon} (1 - \eta \epsilon r_i) > \frac{1 - 2\eta}{\epsilon}.
\label{eq:diag_lower}
\end{equation}
For the off-diagonal sum, combining \eqref{eq:offdiag_sum} and \eqref{eq:ri_bound},
\begin{equation}
\sum_{\substack{j=1 \\ j \neq i}}^{n_s} \|X_{ij}\|_2 < \frac{n_s}{\Delta_{\min}} (1 + 2\eta).
\label{eq:offdiag_final}
\end{equation}
Using \eqref{eq:diag_lower}, \eqref{eq:offdiag_final}, and $1/\epsilon \geq 2n_s/\Delta_{\min}$ (from assumption (3)),
\[
\|X_{ii}\|_2 - \sum_{\substack{j \neq i}} \|X_{ij}\|_2 > \frac{1 - 2\eta}{\epsilon} - \frac{n_s(1 + 2\eta)}{\Delta_{\min}}
\geq \frac{2n_s(1 - 2\eta)}{\Delta_{\min}} - \frac{n_s(1 + 2\eta)}{\Delta_{\min}}
= \frac{n_s}{\Delta_{\min}} (1 - 6\eta).
\]
Since $\eta \leq 1/(6n_s)$ and $n_s \geq 2$, we have $1 - 6\eta \geq 1 - 1/n_s \geq 1/2 > 0$. As $\Delta_{\min} > 0$, the right-hand side is strictly positive, establishing strict block diagonal dominance for all $i$.
\end{proof}
%\bibliography{mybibfile}

\end{document}